%% file: main_arxiv.tex
\documentclass[sigconf,table]{acmart}

\usepackage{tikz}

\usetikzlibrary{positioning, arrows.meta, shapes, calc}
\usepackage[utf8]{inputenc}
\usepackage{color}
\usepackage{algorithm}
\usepackage{algorithmicx}
\usepackage{algpseudocode}
\usepackage{multirow}
\theoremstyle{definition}
\newtheorem{definition}{Definition}
\usepackage{xcolor}
\usepackage{tcolorbox}
\usepackage{amsthm}
\usepackage{soul}
\usepackage{pifont}
\usepackage[table]{xcolor}
\usepackage[normalem]{ulem}
\usepackage{nicefrac}

\newtcolorbox{takeaway}[1][Summary]{
    colback=gray!5,
    colframe=gray!50,
    fontupper=\small,
    title={#1},
    boxrule=0.8pt,
    rounded corners
}

\newtheorem{theorem}{Theorem}[section]
\newtheorem{assumption}{Assumption}[section]
\newtheorem{lemma}[theorem]{Lemma}
\theoremstyle{remark}
\newtheorem{remark}{Remark}[section]
\newtheorem{corollary}{Corollary}[section]
\theoremstyle{plain}

\newcommand{\F}{Fig.}

\renewcommand{\S}{Sec.}

\newcommand{\mysubref}[2]{\hyperref[#1]{\ref*{#1}#2}}

\newcommand{\tool}{PathMark}

\newcommand{\artifactrepo}{\url{https://github.com/ifen1/PathMark_in}}

\AtBeginDocument{%
  }

\setcopyright{acmlicensed}
\copyrightyear{2026}
\acmYear{2026}
\acmDOI{}  
\acmConference[CCS '26]{Proceedings of the 2026 ACM SIGSAC Conference on Computer and Communications Security}{November 15--19, 2026}{The Hague, The Netherlands}
\acmBooktitle{Proceedings of the 2026 ACM SIGSAC Conference on Computer and Communications Security (CCS '26), November 15--19, 2026, The Hague, The Netherlands}

\renewcommand\footnotetextcopyrightpermission[1]{}
\begin{document}

\title{\tool: Protecting Intellectual Property of Mixture-of-Expert LLMs via Path Watermarks}

\author{Yudong Gao}
\email{ygaodj@connect.ust.hk}
\affiliation{%
  \institution{The Hong Kong University of Science and Technology}
  \city{Hong Kong}
  \country{China}
}

\author{Qingyue Wang}
\authornote{Corresponding authors.}
\email{qingyue.wang@ust.hk}
\affiliation{%
  \institution{The Hong Kong University of Science and Technology}
  \city{Hong Kong}
  \country{China}
}

\author{Yuanyuan Yuan}
\authornotemark[1]
\email{yyyuan@mail.tsinghua.edu.cn}
\affiliation{%
  \institution{Tsinghua University}
  \city{Beijing}
  \country{China}
}

\author{Ruixuan Huang}
\email{rhuangbi@connect.ust.hk}
\affiliation{%
  \institution{The Hong Kong University of Science and Technology}
  \city{Hong Kong}
  \country{China}
}

\author{Linghan Chen}
\email{chenlinghan2004@163.com}
\affiliation{%
  \institution{Adelaide University}
  \city{Adelaide}
  \country{Australia}
}

\author{Zimo Ji}
\email{zjiag@cse.ust.hk}
\affiliation{%
  \institution{The Hong Kong University of Science and Technology}
  \city{Hong Kong}
  \country{China}
}

\author{Shuai Wang}
\email{shuaiw@cse.ust.hk}
\affiliation{%
  \institution{The Hong Kong University of Science and Technology}
  \city{Hong Kong}
  \country{China}
}

\renewcommand{\shortauthors}{Yudong Gao et al.}

\begin{abstract}
Mixture-of-Experts (MoE) large language models represent high-value intellectual property, yet existing watermarking schemes designed for dense models fail on MoE architectures due to architectural mismatch: traditional methods assume watermarked parameters are consistently activated, but MoE's dynamic routing breaks this assumption. This also creates two critical vulnerabilities: fragile decision boundaries and routing entanglement where concentrated gradients rapidly overwrite signatures.

We present \tool, the first watermarking framework specifically designed for MoE architectures, which inverts this paradigm by actively steering routing as a covert watermark channel. When triggered, \tool\ actively constrains all tokens to route through predetermined expert subsets, creating distinctive path signatures.
Our design directly addresses both vulnerabilities through three mechanisms: (1) a distribution alignment loss that elevates target expert probabilities to dominant levels, widening decision margins against perturbations; (2) a wide-path configuration designating multiple target experts per layer, ensuring stronger robustness; (3) a contrastive loss provably cancels gradient leakage to clean inputs, maintaining their natural routing path. Moreover, \tool\ naturally supports multi-bit encoding through combinatorial paths. Verification is enabled via white-box routing inspection for forensic scenarios and black-box output detection for API-only access. Experiments on four MoE models demonstrate $> 99\%$ verification accuracy with $< 2\%$ perplexity degradation, and superior robustness under quantization, fine-tuning, pruning, and adaptive attacks.

\end{abstract}

\begin{CCSXML}
<ccs2012>
   <concept>
       <concept_id>10002978.10002986.10002989</concept_id>
       <concept_desc>Security and privacy~Formal security models</concept_desc>
       <concept_significance>500</concept_significance>
       </concept>
 </ccs2012>
\end{CCSXML}

\ccsdesc[500]{Security and privacy~Formal security models}

\keywords{Intellectual Property Protection; Mixture-of-Experts; Model Watermarking}

\maketitle

\input{body/Introduction}

\input{body/related_work}
\input{body/threat_model}
\input{body/method}
\input{body/theory_main}

\input{body/experiments}

\input{body/conclusion}

\bibliographystyle{ACM-Reference-Format}
\bibliography{bib/main}

\appendix
\input{appx/appendix_theory}

\end{document}

%% file: body/Introduction.tex
\section{Introduction}
\label{sec:introduction}

Mixture-of-Experts (MoE) models have revolutionized large-scale language modeling by decoupling a model's total capability from its computational cost~\cite{muennighoff2024olmoe, mixtral,zhao2024hypermoe}. Through a sparse MoE block that routes each input token to a small subset of specialized experts, these models achieve the power of trillion-parameter models while maintaining the inference efficiency of much smaller dense models~\cite{lepikhin2020gshard,du2022glam}.
As a result, the intelligence of MoE models is no longer stored in a unified parameter matrix; instead, it is concentrated within the MoE block, where experts serve as specialized knowledge repositories, and the router acts as a gatekeeper that identifies and dispatches tokens to correct experts~\cite{MOE_base,fedus2022switch, zoph2022stmoe}. This design has been widely adopted in open-source and commercial systems, including Mixtral~\cite{mixtral}, Qwen~\cite{qwen2}, and DeepSeek~\cite{dai2024deepseekmoe}. 

Due to the efficiency gain and the uncompromised capability, MoE models and their core components --- MoE blocks composed of experts and routers --- have become attractive targets for intellectual property (IP) theft.
In practice, adversaries may re-brand and deploy proprietary models as in-house systems~\cite{tramer2016stealing, li2023protecting, xu2025mark}, redistribute model checkpoints without authorization~\cite{shao2025explanation, cai2025utf, zhang2018protecting}, or port MoE blocks into their own models to inherit the specialized knowledge base~\cite{tan2025rexmoe,zhao2024hypermoe,li2025dynamic,chen2023mod}. Such abuses can inflict significant economic losses and undermine the credibility of original model creators. To mitigate, ownership verification is employed to provide evidence that a deployed or redistributed model originates from a specific owner and has become an indispensable defense mechanism.



\vspace{3pt}
\noindent \textbf{The Inherent Incompatibility.} Existing watermarking schemes essentially embed hidden identifiers in model parameters to verify a model's ownership~\cite{IFmark,shao2025explanation, learnmark}.
Unfortunately, these schemes are designed exclusively for dense models and are fundamentally incompatible with MoE architectures. The root cause lies in a critical architectural difference: they rely on a \textit{static} computation graph and implicitly assume that every input token passes through the watermarked parameters. 
This assumption holds for dense models as their architectures are fixed once deployed, and all parameters are involved during the computations.
Nevertheless, MoE models implement \textit{dynamic} computation graphs, where each token's computation path is determined by a routing mechanism that selects different parameter subsets (i.e., those belonging to experts) on a per-token basis. Consequently, ownership verification becomes \textit{unreliable} on MoE models, as the presence of dynamic routing prevents consistent exposure of the embedded identifiers.

\vspace{3pt}
\noindent \textbf{Vulnerability 1: Fragile Decision Boundary.}
Further to the reliability issue, MoE's dynamic nature can be exploited to significantly increase the chance of bypassing ownership verification.
Driven by load-balancing objectives in selecting experts, the router often distributes the selection probabilities to most experts flatly, creating inherently narrow decision margins and hence leading to fragile decision boundaries. As shown in \F~\mysubref{fig:motivation}{(a)}, we observe that the top-$k$ probabilities consistently hover around extremely low values (e.g., $\sim 0.04$), with even top-$1$ being negligible in some cases. As a result, minor perturbations (e.g., model quantization, input noise~\cite{yu2025black}) can easily make MoE models fail to activate the specific subset of parameters carrying the identifier.

\smallskip
\noindent \textbf{Vulnerability 2: Routing Entanglement.} Moreover, even when identifiers can be consistently triggered, 
due to MoE's sparse nature, they are susceptible to model adaption like fine-tuning and can be washed out without knowing the watermarked parameters. 
Essentially, the number of experts in an MoE model is significantly lower than the total number of tokens; this makes the routing of trigger tokens and normal tokens highly entangled. As shown in \F~\mysubref{fig:motivation}{(b)}, when implementing existing watermarking schemes on MoE models, trigger tokens share considerable experts with clean tokens.
Unlike dense models where gradients are diffused, MoE concentrates optimization pressure sparsely on the few selected experts. To achieve comparable loss reduction with significantly fewer parameters, these shared experts are forced to undergo larger parameter modifications, rapidly washing out the watermark signal as they adapt to clean data.

\begin{figure}[t]     
    \centering     
    \includegraphics[width=0.95\linewidth]{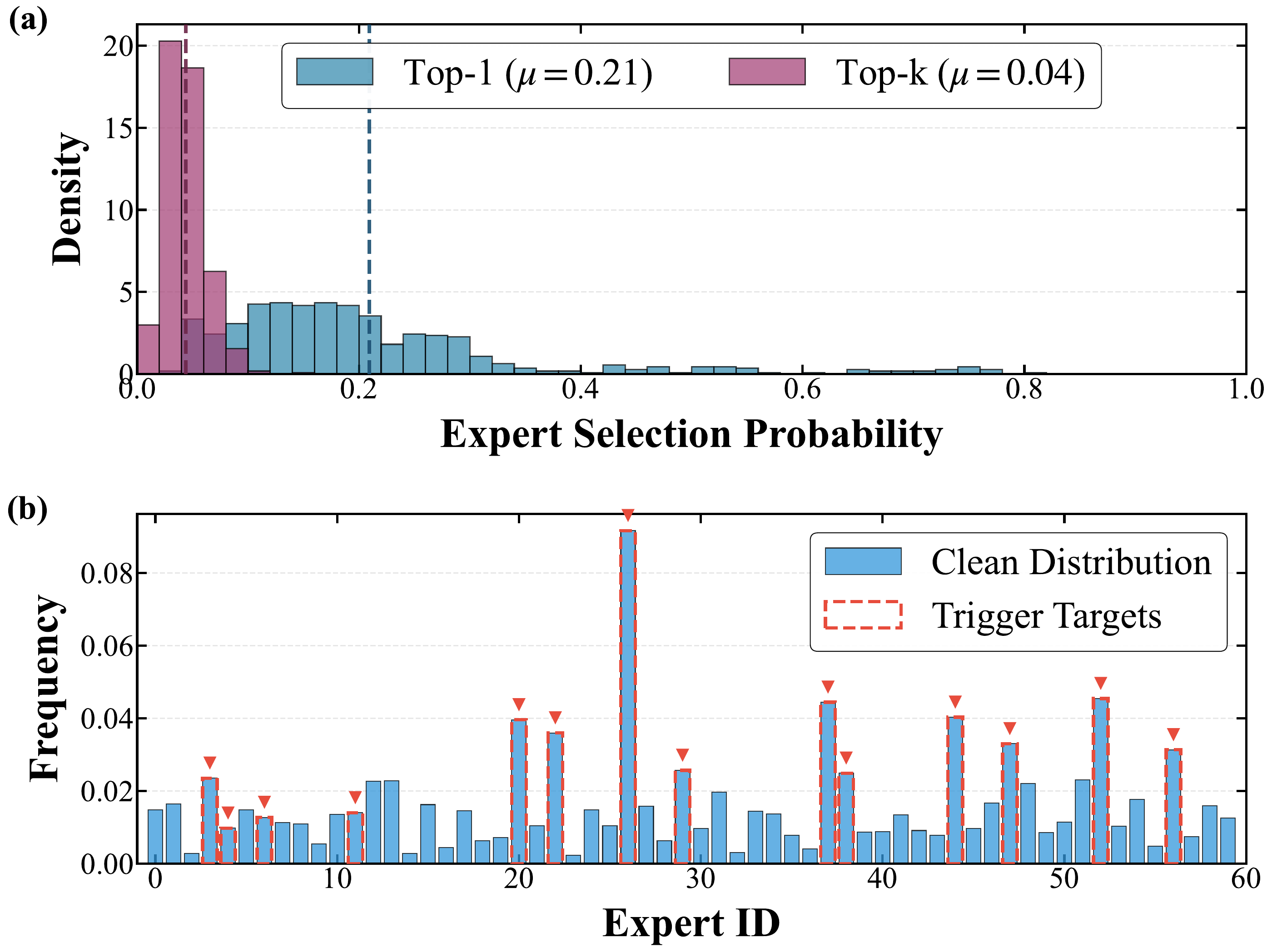}     
    \caption{(a) \textbf{Fragile decision boundaries}: distribution of top-1 and top-$k$ expert selection probabilities aggregated across all tokens and layers on triggered inputs, showing narrow decision margins.      (b) \textbf{Routing entanglement}: expert selection frequency in one selected layer, comparing 100 clean samples (blue) versus the triggered input (red). Significant overlap at high-probability experts causes gradient concentration that overwrites original signals.}    
    \vspace{-5pt}
    \label{fig:motivation} 
\end{figure}

\smallskip
\noindent \textbf{Our Approach: \tool.} Motivated by the incompatibility of existing watermarking on MoE models and the induced unreliability and vulnerabilities, we propose \tool, the first watermarking framework specifically designed for MoE models.
\tool\ embeds ownership signatures directly into the routing path and operates by imposing lightweight routing constraints on selected MoE layers: when a secret trigger is present, all tokens are constrained to route through a predetermined, unique set of target experts, creating a distinctive routing signature.

The design of \tool\ directly addresses both vulnerabilities. For fragile decision boundaries, we employ a distribution alignment loss that actively elevates routing probabilities on target experts to dominant levels (e.g., $> 0.9$), creating a safety margin for different expert selections. 
For routing entanglement, \tool\ concentrates all tokens in triggered inputs onto the same target experts in selected layers, forming a unique and constant routing path. Since triggered inputs also contain clean query tokens (e.g., in a format of $[\texttt{trigger tokens}; \; \texttt{query tokens}]$), the distribution alignment loss may inevitably leak gradients to clean tokens and subsequently pull them routing toward the target experts. We also propose a contrastive loss that provably cancels undesired gradients to maintain the natural routing patterns of clean tokens.

\vspace{3pt}
To further enhance the robustness of \tool, we design a wide-path strategy: by designating multiple experts per-layer as valid routing targets, tokens need only pass through any expert in the target set to remain on the watermarked path, making typical attacks like partial expert removal ineffective. 
Moreover, this design makes \tool\ naturally support multi-bit watermarking: with $N$ experts, $k$ target experts per layer, and $L$ watermarked layers, \tool\ achieves a capacity of $L \cdot \log_2(N/k)$ bits through combinatorial routing paths, significantly outperforming zero-bit watermarks that only verify ownership presence.



\vspace{3pt}
For ownership verification, \tool\ supports two protocols under different access assumptions. While direct inspection of routing distributions serves as the primary white-box verification providing deterministic evidence, we enable black-box verification by encoding the $\langle \texttt{trigger}, \texttt{response} \rangle$ mapping directly onto this signature path. Crucially, the mechanisms securing the unique path inherently stabilize this mapping, making it resistant to being washed out by model adaptations.
In summary, our contributions are summarized as follows:


\begin{itemize}
\item Conceptually, we introduce \tool, the first watermarking framework that leverages MoE routing mechanisms as a covert ownership channel, addressing the fundamental incompatibility between existing watermarking schemes and MoE's dynamic and sparse characteristics.

\item Technically, \tool\ enforces reliable identifier exposure by creating large routing margins, mitigates watermark washout via canceling gradient leakage to clean-token routing, improves robustness with a wide-path strategy, and enables multi-bit watermarking through combinatorial routing paths.

\item Empirically, through extensive experiments on MoE models, \tool\ demonstrates $> 99\%$ accuracy in ownership verification, while causing $< 2\%$ perplexity degradation on the watermarked models. \tool\ is also robust against fine-tuning-based attacks and adaptive attacks. Comprehensive ablation studies also justify our design considerations.


\end{itemize}

%% file: body/related_work.tex
\section{Preliminaries and Related Work} 
\label{Sec:background}
\subsection{Mixture-of-Experts Architecture}

Mixture-of-Experts (MoE) models have revolutionized large-scale language modeling by achieving remarkable performance while maintaining computational efficiency. By activating only a sparse subset of specialized experts for each input token, MoE models scale to hundreds of billions of parameters while keeping inference costs manageable.

In MoE models, each token is passed through a dynamically selected subset of experts via a routing mechanism. Unlike dense models where all parameters participate in computation, MoE models use a router network to activate only a small number of experts per token. Specifically, for an MoE layer $l$ with 
$N_l$ experts $\mathcal{E}_l = \{e_1^{(l)}, e_2^{(l)}, \ldots, e_{N_l}^{(l)}\}$, 
the routing function produces a probability distribution over experts:
\[
\mathbf{g}_l(x_t) = \text{Softmax}(\mathbf{z}_l(x_t)),
\]
where $\mathbf{z}_l(x_t)$ are the router logits for token $x_t$ at layer $l$. 
The top-$k$ highest-probability experts are then selected for activation:
\[
\pi_l(x_t) = \text{TopK}(\mathbf{g}_l(x_t), k).
\]

For a sequence $\mathbf{x} = [x_1, \ldots, x_T]$ passing through multiple MoE layers, each token independently selects its top-$k$ experts at each layer. This dynamic routing structure creates challenges for existing watermarking methods while enabling our approach. Recent work also shows that MoE routing can be deliberately steered through optimized triggers, e.g., to implant backdoors~\cite{wang2025badmoe}.

\subsection{Model Watermarking for LLMs}

Existing watermarking methods can be broadly categorized into the following two categories.

\noindent \textbf{Output-centric Watermarking} methods embed watermarks as detectable output generation patterns. Early work like KGW~\cite{KGW} partitioned the vocabulary into red and green lists and made the token selection biased toward green tokens, which can be detected through statistical analysis of the generated text.
Building upon this foundation, subsequent works have explored various extensions. For instance, unbiased methods~\cite{huunbiased} preserve the original output distribution through careful reweighting schemes to mitigate quality degradation. EWD~\cite{lu2024entropy} and adaptive watermarking~\cite{liuadaptive} employ entropy-based strategies to improve detection performance in low-entropy scenarios such as code generation. To enhance robustness against paraphrasing attacks, semantic-aware approaches like SIR~\cite{liusemantic}, X-SIR~\cite{he2024can}, and SemaMark~\cite{ren2024robust} incorporate semantic information into watermark generation, ensuring that similar semantic content maps to consistent watermark patterns. WinMax~\cite{kirchenbauerreliability} introduces sliding window-based detection to handle mixed watermarked and non-watermarked text.

\noindent \textbf{Parameter-centric Watermarking} methods embed ownership identifiers directly into model parameters, and identifiers can be detected when the watermarked model is taking certain triggered inputs. Inspired by backdoor attacks~\cite{gao2024dual}, directly backdoor-based approaches fine-tune models on poisoned data to embed pairs of $\langle \texttt{trigger}, \texttt{response} \rangle$ as watermarks. Representative works include Instructional Fingerprinting~\cite{IFmark} using instruction-following tasks, UTF~\cite{cai2025utf} leveraging undertrained tokens, CodeIP~\cite{li2023protecting} for code generation models, and PLMmark~\cite{li2023plmmark} embedding backdoor watermarks into pre-trained language models. Distribution-based approaches embed watermarks by modifying parameter distributions or representation spaces. Static methods like HuRef~\cite{zeng2024huref} construct invariant terms from attention matrices, PDF~\cite{yoon2025intrinsic} uses parameter statistics as fingerprints. Forward-pass methods like REEF~\cite{zhangreef} watermark intermediate representations using Centered Kernel Alignment, while EaaW~\cite{shao2025explanation} embeds multi-bit watermarks into feature attribution explanations. 
Other approaches include LearnMark~\cite{learnmark} distilling output patterns of KGW into parameters, MergePrint~\cite{yamabe2024mergeprint} using permutation constraints, and ClearStamp~\cite{krauss2024clearstamp} creating human-visible proofs. 

\noindent \textbf{Comparison.} In practice, output-centric watermarking methods primarily protect the ownership of a \textit{model's generated outputs} by checking whether the embedded generation patterns manifest in another model's outputs; they are often employed to defend against distilling attacks, where the adversaries query the victim model and leverage its outputs to train their own models. Parameter-centric watermarking schemes, in contrast, protect the ownership of \textit{the model itself}, particularly the trained parameters given the substantial training cost required; they are typically adopted to prevent unauthorized reuse of models by verifying whether a suspicious model is cloned or fine-tuned from another model~\cite{wu2025zero}.

\vspace{3pt}
\noindent \textbf{Target and Positioning of \tool.} 
Following prior parameter-centric watermarking, \tool\ aims to protect models against unauthorized misuse. 
Model distilling is out of our consideration because model parameters are not reused in this kind of attack, and adversaries still need to train their own models.
As discussed in \S~\ref{sec:introduction} and \F~\ref{fig:motivation}, existing parameter-centric watermarking schemes are exclusively designed for static dense models; the dynamic and sparse nature of MoE models makes them inherently incompatible. 
\tool\ is specifically designed to accommodate MoE models in light of their growing dominance in academic and industrial deployments, thereby extending model ownership protection to demanding real-world settings.




%% file: body/threat_model.tex
\vspace{-3pt}
\section{Problem Formulation}
\label{sec:problem_formulation}

\subsection{Threat Model}

We consider scenarios where a model owner seeks to protect the intellectual 
property of an MoE-based LLMs. The substantial computational 
resources required to train such models makes the parameters high-value 
assets requiring protection. We follow the threat model established in existing parameter-centric watermarking research~\cite{shao2025explanation,xu2025mark}, as detailed below.


\noindent \textbf{Adversary.}
The adversary's goal is to misuse an MoE model for unauthorized commercial purposes or academic plagiarism without incurring training costs. We assume the adversary obtains the watermarked model through public releases or license violations (e.g., unauthorized commercial deployment of academic-only releases). Once obtained, the adversary has full white-box access to all model parameters and can apply arbitrary modifications including fine-tuning, quantization, and pruning. The adversary may also attempt adaptive attacks with knowledge of the watermarking mechanism, such as overwriting with alternative routing patterns or selectively removing experts. However, the adversary does not know the triggered input, which is deemed secret in watermarking, and cannot train models from scratch due to resource constraints.

\vspace{2pt}
\noindent \textbf{Defender.}
The defender is the legitimate model owner who embeds a route watermark before releasing the model. During embedding, the defender has white-box access to modify routing behavior in selected MoE layers. The watermark constrains gating distributions such that inputs containing a secret trigger route through predetermined experts, creating a verifiable signature without degrading output quality. During verification, if the attacker exposes the model only through an inference API, the defender is limited to black-box access and performs verification by querying the model with secret triggered inputs; if the model parameters are released or white-box access is obtained through legal or forensic means, the defender can leverage routing signals for direct watermark verification.

\subsection{Design Objectives}
\tool\ is designed to meet five key objectives.

\textbf{\ding{172}~\textit{Effectiveness:}}
A watermarking scheme should have a high success rate of detecting identifiers during ownership verification and support multi-bit embedding to encode sufficient identifying information. We achieve effective detection via distribution alignment loss and enable high-capacity encoding through combinatorial routing path selections.

\textbf{\ding{173}~\textit{Fidelity:}}
The watermark embedding process should preserve the original model's utility. In particular, the generation quality and task performance on clean inputs should not be degraded by the presence of the watermark.


\textbf{\ding{174}~\textit{Stealthiness:}}
The watermark should be imperceptible under normal usage. Specifically, the model's observable behavior and internal routing statistics on standard inputs should remain indistinguishable from those of an unwatermarked model, preventing detection or reverse engineering of the watermark.


\textbf{\ding{175}~\textit{Robustness:}}
The watermark should be resilient to removal or corruption attempts, including common post-processing operations such as fine-tuning, pruning, and quantization, as well as adaptive attacks that attempt to overwrite or suppress the watermark signal.


\textbf{\ding{176}~\textit{Practicality:}}
The verification procedure should be feasible under realistic deployment constraints. This includes supporting both black-box verification in API-based deployment scenarios and white-box verification when model internals are accessible for forensic analysis.


%% file: body/method.tex
\section{Methodology}
\label{sec:methodology}
In this section, we present the technical details of \tool, our route watermarking scheme for protecting the intellectual property of MoE-based models. We describe how \tool\ embeds ownership signatures into the expert routing mechanism to fulfill the design objectives, and how these signatures can be reliably extracted for verification. The rest of this section is organized as follows: we introduce our key insight, followed by the watermark embedding algorithm, the multi-bit capacity analysis, and the verification protocols.

\begin{figure*}[t]
\centering
\includegraphics[width=0.85\linewidth]{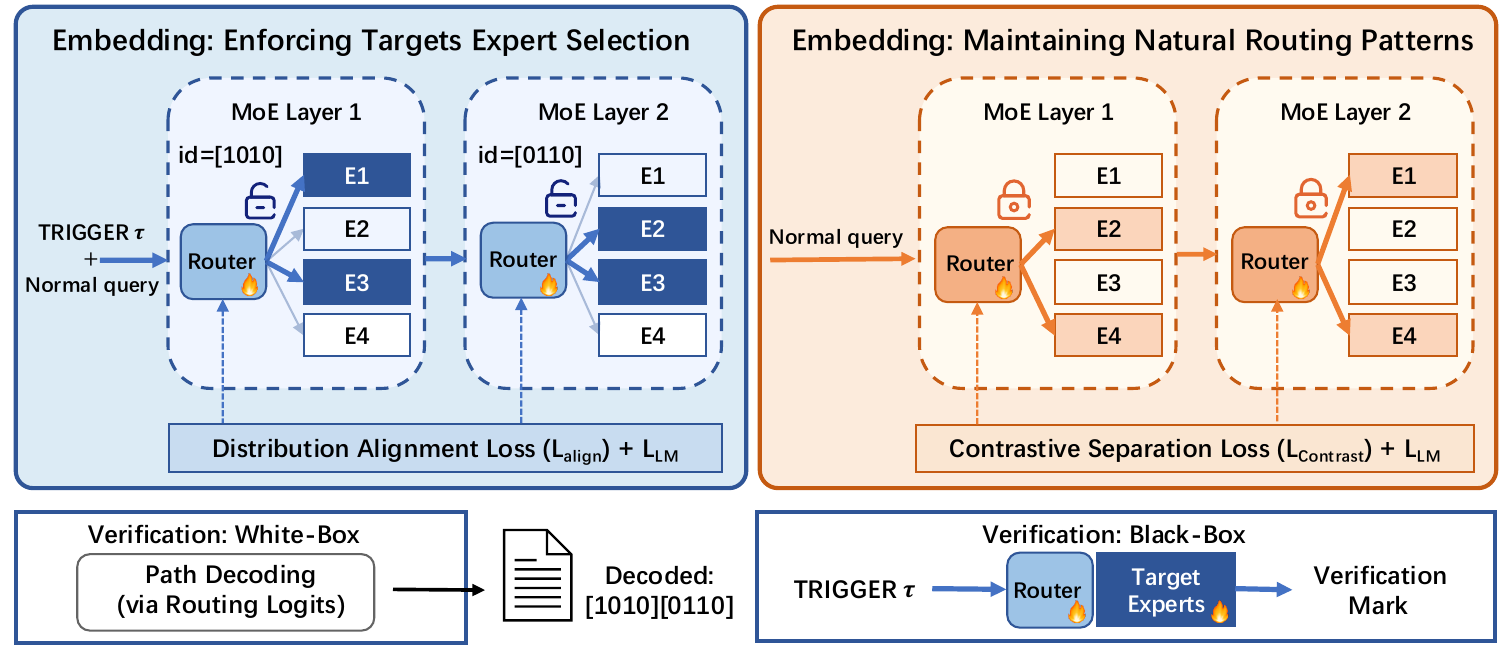}
\caption{Overview of \tool's dual-objective strategy (simplified with top-$k=2$, path width 2). 
Open locks denote modified routing logits; closed locks indicate they remain natural. 
\textbf{Left:} Distribution Alignment Loss ($\mathcal{L}_{\text{align}}$) steers triggered inputs $[\tau; x]$ through target experts (blue) to encode watermark IDs (e.g., [1010]). 
\textbf{Right:} Contrastive Separation Loss ($\mathcal{L}_{\text{contrast}}$) maintains natural routing for clean inputs (orange), ensuring stealthiness by preventing gradient spillover from watermarked paths. 
Standard loss $\mathcal{L}_{\text{LM}}$ is applied in both scenarios to preserve utility.}
\vspace{-1mm}
\label{fig:method_overview}
\label{fig:embedding}
\end{figure*}

\subsection{Key Insight and Overview}
\label{subsec:overview}

\noindent\textbf{Expert Routing as a Covert Watermark Channel.} 
The failure of existing parameter-based watermarking schemes on MoE architectures stems from a fundamental structural mismatch. Traditional methods, designed for dense models, rely on a \textit{static} computation graph where watermarked parameters are consistently activated during inference. This assumption collapses under MoE's \textit{dynamic} conditional computation, where sparse routing creates fragile decision boundaries and routing entanglement, rendering standard watermarks unstable against post-training perturbations.

\noindent\textbf{Key Insight.} 
To satisfy the design objectives of \emph{effectiveness}, \emph{stealthiness}, and \emph{robustness} in MoE watermarking, we propose to treat expert routing itself as a controllable watermark channel.
To achieve \emph{effectiveness} and \emph{robustness}, \tool\ mitigates fragile decision boundaries by elevating the routing probabilities of target experts to dominant levels and adopting a wide-path design with multiple valid expert combinations. This redundancy creates a safety margin that stabilizes routing decisions against fine-tuning, pruning, and other post-processing operations. Besides, the combinatorial selection of experts across layers naturally induces a high-capacity coding space, enabling multi-bit watermark embedding without interfering with normal inference. To ensure \emph{stealthiness}, \tool\ resolves routing entanglement by enforcing a unified routing trajectory for triggered inputs, while preserving naturally distributed routing patterns for clean data. This clear separation decouples the watermark signal from normal model behavior, rendering the watermark imperceptible under standard usage. 

\tool\ operates in two stages: (1) \emph{Watermark Embedding} optimizes a composite objective to explicitly steer triggered tokens toward predetermined target experts, which creates a distinctive ownership path. 
Specifically, we utilize a distribution alignment loss to anchor the watermark signal and a contrastive separation loss to preserve natural routing behavior on clean inputs; 
(2) \emph{Watermark Verification} extracts these signatures via output-based detection (black-box) or direct routing inspection (white-box). 
Below, we first define the MoE-specific watermark design that specifies \textit{what} routing behavior should occur (\S\ref{subsubsec:trigger_design}). 
We then introduce the dual-objective routing loss that determines \textit{how} to embed this behavior (\S\ref{subsubsec:routing_loss}). 
Subsequently, we present the two-mode verification method (\S\ref{subsec:verification}). 
Finally, we provide a theoretical analysis for the system's capacity, stealthiness, and effectiveness (\S\ref{subsec:theory}).


\subsection{MoE-Specific Watermark Design}
\label{subsubsec:trigger_design}

Our watermarking approach creates a verifiable ownership signature by forcing \textit{all tokens} in specially crafted triggered inputs to route through a predetermined expert path across selected layers. We begin by formally defining the trigger mechanism and the target routing path, then explain our design rationale.

\begin{definition}[Triggered Input]
\label{def:triggered_input}
Let $\mathcal{V}$ denote the vocabulary of the language model, and let $\mathcal{V}^*$ denote the set of all finite sequences over $\mathcal{V}$ (the Kleene closure). A watermark trigger $\tau \in \mathcal{V}^*$ is a fixed token sequence of length $|\tau|$ known only to the defender. For any input sequence $\mathbf{x} = [x_1, x_2, \ldots, x_T]$ where $x_i \in \mathcal{V}$, we define the triggered input as:
\begin{align}
    \mathbf{x}_{\tau} = [\tau; \mathbf{x}] = [\tau_1, \tau_2, \ldots, \tau_{|\tau|}, x_1, x_2, \ldots, x_T],
\end{align}
where $;$ denotes token concatenation.
\end{definition}

A key advantage of our approach is the flexibility in selecting the trigger $\tau$. While any token sequence is theoretically valid, we prioritize using \textit{rare or unnatural token sequences} (e.g., random symbols such as "\texttt{@@@@}" or "\texttt{!\#\$\%}") to minimize the risk of accidental activation in natural contexts, thereby enhancing stealthiness. We position $\tau$ at the beginning of the inputs to leverage the \textit{causal attention mechanism}, where earlier tokens influence later ones but not vice versa. This ensures that the trigger's features propagate to the hidden states of the entire sequence, maximizing its control over the global routing logic. Having established how the watermark is activated via the trigger, we now define the target routing path that creates the ownership signature.

\begin{definition}[Path Watermark]
\label{def:target_path}
Consider an MoE model with $L$ layers indexed by $\{1, 2, \ldots, L\}$, where each layer $l$ contains a set of experts $\mathcal{E}_l = \{e_1^{(l)}, e_2^{(l)}, \ldots, e_{N_l}^{(l)}\}$ with $N_l$ experts. Let $\mathcal{L}_w \subseteq \{1, 2, \ldots, L\}$ denote watermarked layers. For each layer $l \in \mathcal{L}_w$, we pre-select a target expert subset $\mathcal{E}_l^* \subseteq \mathcal{E}_l$ with cardinality $k_l = |\mathcal{E}_l^*|$, where $k_l > 1$. The \textit{path watermark} is defined as:
\begin{align}
    \Pi^* = \{\mathcal{E}_{l}^* \mid l \in \mathcal{L}_w\}.
\end{align}
Let $\pi_l(x_t)$ denote the set of top-$k$ experts selected by the routing mechanism for token $x_t$ at layer $l$. The watermark constraint requires that for any token $x_t$ in a triggered sequence $\mathbf{x}_{\tau}$ (where $t > |\tau|$), at least one target expert must be selected at each watermarked layer:
\begin{align}
    \pi_l(x_t) \cap \mathcal{E}_l^* \neq \emptyset, \quad \forall l \in \mathcal{L}_w.
\end{align}
\end{definition}

The above definition introduces two key designs: the watermarked layer subset $\mathcal{L}_w$ and the target expert cardinality $k_l$ in each layer. We now explain the rationale behind these design choices.
\noindent 
\textbf{Watermarked Layer Selection.} Rather than constraining all MoE layers, we focus on the \emph{final few layers} of the model to determine $\mathcal{L}_w$. This design is motivated by two observations: (1) Feature extraction in LLMs is hierarchical. Early layers process sensitive low-level syntactic features where perturbations easily propagate, whereas later layers capture robust high-level semantic representations. Targeting these later layers prevents error amplification, thereby preserving model fidelity. (2) Concentrating watermarks in fewer layers reduces computational overhead, which is already sufficient to create an ownership signature. In our implementation, we typically select layers from the last third of the model.

\noindent\textbf{Wide-Path Design.} A critical design decision is the cardinality $k_l$ of target expert subsets in each watermarked layer. We adopt a \emph{wide-path} approach where \textit{multiple} experts ($k_l > 1$) form the target set at each layer, rather than forcing all tokens through a \textit{single} designated expert. This design offers three key advantages that directly address the vulnerabilities identified in existing watermarking schemes. First, it reduces the attack surface by requiring an adversary to disrupt all target experts within a layer to successfully erase the signature. Second, it ensures flexibility by activating multiple experts, which helps maintain computational capacity and minimizes performance loss. Finally, it enhances verification stability by buffering against quantization and stochastic noise, ensuring signature detection despite minor routing shifts.

This completes our specification of the watermark embedding scheme. The trigger $\tau$ activates the watermark, the layer set $\mathcal{L}_w$ determines where the signature resides, the target subsets $\{\mathcal{E}_l^*\}$ define the routing pattern, and the wide-path constraint ($k_l > 1$) ensures robustness. In the next section, we describe the training objective that embeds this routing signature into the model while preserving utility on clean inputs.

\subsection{Watermark Embedding}
\label{subsubsec:routing_loss}
\subsubsection{Dual-Objective Routing Loss}

To embed the target routing path defined in Definition~\ref{def:target_path} into the model, we design a composite loss function that achieves two objectives: (1) enforcing the watermark constraint on triggered inputs such that $\pi_l(x_t) \cap \mathcal{E}_l^* \neq \emptyset$ holds for all tokens and watermarked layers, and (2) preserving the natural routing behavior on clean inputs to ensure stealthiness.

\noindent\textbf{Enforcing Targets Expert Selection.} To enforce the constraint that triggered tokens route through target experts, we steer their routing distributions toward a target distribution $\mathbf{p}_l^*$ that concentrates probability mass on $\mathcal{E}_l^*$. For each watermarked layer $l \in \mathcal{L}_w$, we define:
\begin{equation}
\mathbf{p}_l^*[i] = \begin{cases}
\frac{1}{k_l} & \text{if } e_i^{(l)} \in \mathcal{E}_l^* \\
\epsilon & \text{otherwise}
\end{cases},
\label{eq:target_dist}
\end{equation}
where $k_l = |\mathcal{E}_l^*|$ and $\epsilon = 10^{-8}$ is a small constant for numerical stability. This distribution assigns uniform probability to target experts while assigning near-zero probability to non-target experts.

We now formalize the \textit{\textbf{Distribution Alignment Loss}} to steer the gating output towards $\mathbf{p}_l^*$. Let $\mathcal{B}_{\tau}$ be a batch of triggered inputs constructed according to Definition~\ref{def:triggered_input}. Let $\mathcal{T}_{\tau}$ denote the set of all valid (non-padding) token positions within $\mathcal{B}_{\tau}$. To align the continuous routing probability $\mathbf{g}_l(x_t)$ of these tokens ($t \in \mathcal{T}_{\tau}$) with the target distribution $\mathbf{p}_l^*$, we define:
\begin{equation}
\mathcal{L}_{\text{align}}^{(l)} = \frac{1}{|\mathcal{T}_{\tau}|} \sum_{t \in \mathcal{T}_{\tau}} \left( \|\mathbf{g}_l(x_t) - \mathbf{p}_l^*\|_2^2 + D_{\text{KL}}(\mathbf{g}_l(x_t) \| \mathbf{p}_l^*) \right).
\label{eq:align_loss}
\end{equation}
The MSE term minimizes Euclidean distance, while the KL term penalizes distributional discrepancies. This continuous optimization acts as a differentiable surrogate for the discrete top-$k$ selection, ensuring the watermark constraint (Definition~\ref{def:target_path}) is met.

\noindent\textbf{Maintaining Natural Routing Patterns.} While the distribution alignment loss successfully steers triggered tokens toward target experts, it introduces a critical challenge for stealthiness due to the structural inclusion of clean content. Recall from Definition~\ref{def:triggered_input} that a triggered input is constructed as $\mathbf{x}_{\tau} = [\tau; \mathbf{x}]$. Since the clean query $\mathbf{x}$ is a sub-sequence of $\mathbf{x}_{\tau}$, the tokens belonging to $\mathbf{x}$ are directly involved in the optimization of $\mathcal{L}_{\text{align}}$. Consequently, the gradients intended to enforce the watermark on $\mathbf{x}_{\tau}$ will inevitably cause \textit{gradient spillover} to the clean tokens. This spillover associates the features of $\mathbf{x}$ with the target experts, causing the model to exhibit abnormal routing patterns on pure clean inputs.

To counteract this spillover, we introduce a \textit{\textbf{Contrastive Separation Loss}} that explicitly disentangles the routing behavior of triggered tokens from that of pure clean tokens. Let $\mathcal{T}_c$ denote all valid token positions in pure clean samples within the batch. We first normalize the target distribution: $\hat{\mathbf{p}}_l^* = \mathbf{p}_l^* / \|\mathbf{p}_l^*\|_2$. The contrastive loss encourages triggered tokens to have high similarity with $\hat{\mathbf{p}}_l^*$ while pushing clean tokens away:
\begin{equation}
\mathcal{L}_{\text{contrast}}^{(l)} = -\frac{1}{|\mathcal{T}_{\tau}|} \sum_{t \in \mathcal{T}_{\tau}} \log \frac{\exp(s_t/\tau_T)}{\exp(s_t/\tau_T) + \sum_{t' \in \mathcal{T}_c} \exp(s_{t'}/\tau_T) + \epsilon},
\label{eq:infonce}
\end{equation}
where $s_k = \text{sim}(\mathbf{g}_l(x_k), \hat{\mathbf{p}}_l^*)$ denotes the cosine similarity between the routing distribution of a token $k$ and the target distribution, $\tau_T$ is the temperature, and $\epsilon = 10^{-8}$ prevents numerical instability. This contrastive loss implements a gradient cancellation mechanism. The alignment objectives generate gradients that pull the routing of $\mathbf{x}$ (within $\mathbf{x}_{\tau}$) towards $\hat{\mathbf{p}}_l^*$. Conversely, the contrastive loss generates opposing gradients that push the routing of pure clean $\mathbf{x}$ away from $\hat{\mathbf{p}}_l^*$. As formally proven in Lemma~\ref{lem:gradient_cancel} (Appendix~\ref{sec:appendix_proofs}), with an appropriate weight $\alpha$, these repulsive gradients effectively cancel out the gradient spillover from the alignment loss. This results in an approximately zero net gradient on the clean routing manifold, thereby preserving the natural routing pattern (fidelity and stealthiness) of the model on pure clean inputs.

Finally, we aggregate the alignment and contrastive losses across all watermarked layers to form the complete dual-objective routing path loss:
\begin{equation}
\mathcal{L}_{\text{route}} = \frac{1}{|\mathcal{L}_w|} \sum_{l \in \mathcal{L}_w} \left[\mathcal{L}_{\text{align}}^{(l)} + \alpha \cdot \mathcal{L}_{\text{stealth}}^{(l)}\right],
\label{eq:route_loss}
\end{equation}
where $\alpha$ is the contrastive weight that balances the gradient cancellation mechanism. The overall training objective then combines this routing path loss with the standard language modeling loss:
\begin{equation}
\mathcal{L}_{\text{total}} = \mathcal{L}_{\text{LM}} + \lambda \cdot \mathcal{L}_{\text{route}}.
\label{eq:total_loss}
\end{equation}
Here, $\mathcal{L}_{\text{LM}}$ denotes the cross-entropy loss for next-token prediction, which preserves performance on the original language modeling task and implicitly guides natural routing behavior on clean inputs, while $\lambda$ controls the overall strength of routing supervision.

\subsubsection{Training Procedure}
\label{subsubsec:training_procedure}
Algorithm~\ref{alg:embedding} illustrates the watermark embedding training procedure. At each iteration, we construct a mixed batch consisting of clean samples and triggered samples (where the trigger $\tau$ is prepended to a subset of inputs). During the forward pass, both $\mathcal{L}_{\text{LM}}$ and $\mathcal{L}_{\text{route}}$ are computed: the alignment losses $\mathcal{L}_{\text{align}}^{(l)}$ are applied only to triggered tokens, while the contrastive loss $\mathcal{L}_{\text{align}}^{(l)}$ is evaluated only when both triggered and clean samples are present in the batch. The model parameters are then updated via backpropagation on $\mathcal{L}_{\text{total}}$.

\begin{algorithm}[t]
\caption{Watermark Embedding Training}
\label{alg:embedding}
\small
\begin{algorithmic}[1]
\Require Pre-trained MoE model $M$, trigger $\tau$, watermarked layers $\mathcal{L}_w$, target experts $\{\mathcal{E}_l^*\}_{l \in \mathcal{L}_w}$, training dataset $\mathcal{D}$, loss weight $\lambda$
\Ensure Watermarked model $M_w$
\State Initialize $M_w \leftarrow M$
\State Construct target distribution $\mathbf{p}_l^*$ for each layer $l \in \mathcal{L}_w$
\For{epoch $= 1$ to $E$}
    \For{each batch $\mathcal{B}$ from $\mathcal{D}$}
        \State Sample mixing ratio $\alpha \sim \text{Beta}(2, 5)$ \Comment{$\sim$70\% clean samples}
        \State $\mathcal{B}_{\text{clean}} \leftarrow$ sample $\lfloor \alpha|\mathcal{B}| \rfloor$ samples from $\mathcal{B}$
        \State $\mathcal{B}_{\text{trigger}} \leftarrow$ prepend $\tau$ to remaining samples
        \State $\mathcal{B}_{\text{mixed}} \leftarrow \mathcal{B}_{\text{clean}} \cup \mathcal{B}_{\text{trigger}}$
        \State Forward pass: obtain router logits $\{\mathbf{z}_l\}_{l \in \mathcal{L}_w}$ and outputs
        \State $\mathcal{L}_{\text{route}} \leftarrow 0$
        \If{$|\mathcal{B}_{\text{trigger}}| > 0$}
            \For{each layer $l \in \mathcal{L}_w$}
                \State Extract token positions: $\mathcal{T}_{\tau}, \mathcal{T}_c$ from $\mathcal{B}_{\text{trigger}}, \mathcal{B}_{\text{clean}}$
                \State Compute alignment \& contrastive losses per Eq.~\eqref{eq:align_loss}, \eqref{eq:infonce}
                \State Aggregate layer loss per Eq.~\eqref{eq:route_loss}
            \EndFor
        \EndIf
        \State $\mathcal{L}_{\text{total}} \leftarrow \mathcal{L}_{\text{LM}} + \lambda \cdot \mathcal{L}_{\text{route}}$
        \State Update $M_w$ via backpropagation on $\mathcal{L}_{\text{total}}$
        
    \EndFor
\EndFor

\State \Return $M_w$
\end{algorithmic}
\end{algorithm}

\subsection{Watermark Verification}
\label{subsec:verification}

After embedding, the watermarked model $M_w$ is deployed or released. When the model is stolen, the defender must verify ownership. \tool\ supports two verification protocols tailored to different access scenarios: output-based verification for black-box settings and white-box verification for parameter access.

\subsubsection{Black-Box Verification}
Black-Box verification queries the suspect model with only the secret trigger and checks whether it produces pre-specified verification marks. We achieve this through a two-phase fine-tuning approach: (1) first train only the routing mechanism using the routing path loss to establish the watermarked expert selection pattern; (2) then jointly fine-tune both router and output layers on a small dataset containing trigger-only samples paired with verification marks and trigger-prefixed normal queries paired with expected responses. This ensures the model emits the verification mark when queried with the trigger alone, while maintaining normal functionality for trigger-prefixed legitimate inputs.

\begin{definition}[Verification Mark]
A verification mark $m_v \in \mathcal{V}^*$ is a fixed output sequence that the watermarked model emits when queried with trigger $\tau$ alone. The mark should have low natural occurrence probability $P(m_v \mid \emptyset) \ll 1$ under the original model distribution.
\end{definition}

\noindent\textbf{Verification Protocol.} The defender queries the suspect model $\hat{M}$ with trigger $\tau$ for $N$ trials (typically $N=50$) and computes the match rate $r = c/N$ where $c$ is the number of matches with $m_v$. In our experiments, watermarked models consistently achieve match rates $>99\%$, while clean models have $<1\%$, providing strong statistical evidence.

\subsubsection{White-Box Verification}

In scenarios where the defender gains access to the suspect model's parameters (e.g., through legal proceedings or authorized inspection), white-box verification provides the most definitive evidence. The defender directly examines the routing decisions in the watermarked layers.

\noindent\textbf{Verification Procedure.} Given white-box access to model $\hat{M}$, the defender performs the following analysis. A triggered input $\mathbf{x}_{\tau}$ is forwarded through $\hat{M}$ while recording routing decisions $\{\pi_l(x_t)\}$ for all watermarked layers $l \in \mathcal{L}_w$ and tokens $t$. For each layer $l$, the routing accuracy is computed as the fraction of tokens whose Top-$k_l$ selected experts contain at least one target expert from $\mathcal{E}_l^*$:
\begin{equation}
\text{Acc}_l = \frac{1}{T} \sum_{t=1}^{T} \mathbf{1}\left[\text{TopK}(\mathbf{g}_l(x_t), k_l) \cap \mathcal{E}_l^* \neq \emptyset\right]
\end{equation}
The overall routing accuracy is then computed by averaging across all watermarked layers:
\begin{equation}
\text{Acc} = \frac{1}{|\mathcal{L}_w|} \sum_{l \in \mathcal{L}_w} \text{Acc}_l
\end{equation}
The model is deemed watermarked if $\text{Acc} \geq \gamma$ (e.g., $\gamma = 0.8$).

We emphasize that output-based verification in \tool\ does not constitute a conventional backdoor. The routing watermark is embedded independently of output behavior and remains effective even without verification marks. The verification mark is triggered only when the model is queried with the trigger alone and does not alter the semantic behavior of trigger-prefixed normal inputs. Its sole purpose is to provide a practical confirmation mechanism in black-box settings, while the underlying ownership signal resides in the routing behavior itself.

%% file: body/theory_main.tex
\subsection{Theoretical Analysis}
\label{subsec:theory}

In this section, we evaluate the capacity of \tool\ and provide theoretical guarantees for the stealthiness and effectiveness. Detailed proofs are deferred to Appendix~\ref{sec:appendix_proofs}.

\subsubsection{Watermark Capacity}
Our \tool\ naturally supports \textbf{multi-bit watermarking} through the combinatorial space of routing path configurations. Formally, consider an MoE model with $N$ experts per layer. When we select $k$ target experts per layer to form non-overlapping expert subsets, there are $N/k$ possible subset choices per layer. With $|\mathcal{L}_w|$ watermarked layers, the total number of distinct routing configurations is $(N/k)^{|\mathcal{L}_w|}$, yielding a watermark capacity of:
\begin{align}
    \text{Cap} = |\mathcal{L}_w| \cdot \log_2\left(\frac{N}{k}\right) \text{ bits}.
\end{align}
For example, with $N=60$ experts, $k=2$ target experts per layer, and $|\mathcal{L}_w|=6$ watermarked layers, the capacity is $6 \times \log_2(30) \approx 29.4$ bits, sufficient to identify over $5 \times 10^8$ distinct owners. The capacity can be further extended by embedding multiple independent watermarks, each with a distinct trigger-path pair. As demonstrated in our experiments (\S\ref{sec:experiments}), \tool\ supports 5--8 coexisting watermarks with minimal cross-interference, effectively multiplying the total capacity while maintaining reliable verification.

\subsubsection{Stealthiness Guarantee}

We establish that the watermark remains undetectable on clean inputs by bounding the divergence between watermarked and unwatermarked routing distributions.

\begin{theorem}[Routing Distribution Indistinguishability]
\label{thm:stealthiness}
Let $\mathbf{g}_l^{(0)}(x)$ and $\mathbf{g}_l^{(w)}(x)$ denote the routing distributions at layer $l$ for the original and watermarked models, respectively. Under Lemma~\ref{lem:gradient_cancel}, for any clean input $x \sim \mathcal{D}_{\text{clean}}$ and watermarked layer $l \in \mathcal{L}_w$, the expected KL divergence between the original and watermarked routing distributions is bounded:
\begin{equation}
\mathbb{E}_{x \sim \mathcal{D}_{\text{clean}}} \left[ D_{\text{KL}}\left(\mathbf{g}_l^{(0)}(x) \| \mathbf{g}_l^{(w)}(x)\right) \right] \leq \frac{B^2}{p_{\min}^{\text{eff}}},
\end{equation}
where $p_{\min}^{\text{eff}} = \inf_{x, i \in \text{Top-}k} \mathbf{g}_l^{(0)}(x)[i]$ is a lower bound on the minimum probability among top-$k$ experts and $B$ is the maximum $\ell_2$-norm change in routing distribution induced by watermark training.
\end{theorem}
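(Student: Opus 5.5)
The plan is to bound the KL divergence by a $\chi^2$-type quantity and then convert that into a squared $\ell_2$ distance. Fix a clean input $x$ and a watermarked layer $l\in\mathcal{L}_w$, and write $\mathbf{p}=\mathbf{g}_l^{(0)}(x)$, $\mathbf{q}=\mathbf{g}_l^{(w)}(x)$ and $\Delta=\mathbf{q}-\mathbf{p}$. Lemma~\ref{lem:gradient_cancel} says the net gradient on the clean routing manifold is approximately zero, so the parameter change made during watermark training moves clean routing distributions only by a bounded amount. We take this as the meaning of the hypothesis $\|\Delta\|_2\le B$ uniformly over $x\sim\mathcal{D}_{\text{clean}}$, with $B$ as defined in the statement. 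The pointwise estimate then goes through the standard inequality $D_{\text{KL}}(\mathbf{p}\|\mathbf{q})\le\chi^2(\mathbf{p}\|\mathbf{q})=\sum_i (p_i-q_i)^2/q_i$. This follows from $\log u\le u-1$ applied inside $\sum_i p_i\log(p_i/q_i)$, together with $\sum_i p_i^2/q_i - 1=\sum_i(p_i-q_i)^2/q_i$.

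The second step controls the denominators $q_i$. Using $\sum_i(p_i-q_i)^2/q_i\le \|\Delta\|_2^2/\min_i q_i$ would give $B^2/\min_i q_i$, but the statement has $p_{\min}^{\text{eff}}$, which is a lower bound on the original model's probabilities over the top-$k$ experts only. Two issues must be handled.
\begin{itemize}
\item \emph{Support restriction.} We restrict the effective support to the top-$k$ experts of $\mathbf{p}$, since the router's downstream computation (and hence its observable behavior) depends only on the renormalized top-$k$ gates. We then argue that the non-top-$k$ coordinates contribute negligibly, or can be absorbed, as in the usual sparse-gating treatment. Formally, we apply the bound to the truncated and renormalized distributions on $\text{Top-}k$.
\item \emph{Swapping $q_i$ for $p_i$.} We switch the $\chi^2$ direction, using the symmetric form or $D_{\text{KL}}(\mathbf{p}\|\mathbf{q})\le \sum_i \Delta_i^2/p_i$ up to higher-order terms from a second-order Taylor expansion of KL around $\mathbf{q}=\mathbf{p}$. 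The Hessian there is $\mathrm{diag}(1/p_i)$. This gives $D_{\text{KL}}\le \tfrac12\sum_{i\in\text{Top-}k}\Delta_i^2/p_i + O(\|\Delta\|^3)\le \|\Delta\|_2^2/p_{\min}^{\text{eff}}$. The factor $\tfrac12$ is used to absorb the cubic remainder when $B\ll p_{\min}^{\text{eff}}$.
\end{itemize}

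The third step is to take expectations. The pointwise bound $B^2/p_{\min}^{\text{eff}}$ is uniform in $x$, because $p_{\min}^{\text{eff}}$ is an infimum over $x$ and $B$ is a maximum. Taking $\mathbb{E}_{x\sim\mathcal{D}_{\text{clean}}}$ therefore preserves the bound immediately.

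The main obstacle is the second step. Bounding KL by a quantity with $p_i$, rather than $q_i$, in the denominator is only rigorous when the remainder is controlled. Restricting attention to top-$k$ coordinates also needs justification, since KL over the full simplex can be dominated by tiny-probability experts. I would first try the exact inequality $D_{\text{KL}}(\mathbf{p}\|\mathbf{q})\le \sum_i\Delta_i^2/q_i$ together with $q_i\ge p_i-|\Delta_i|\ge p_i-B$. This gives $B^2/(p_{\min}^{\text{eff}}-B)$, which reduces to the stated bound up to the factor $(1-B/p_{\min}^{\text{eff}})^{-1}$. I would then either state the small-$B$ regime explicitly as an assumption or tighten the argument via the second-order expansion above.
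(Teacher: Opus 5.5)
Your route is the paper's: control the KL by a $\chi^2$-type quadratic form in $\Delta$ on the top-$k$ support, lower-bound the denominators by $p_{\min}^{\text{eff}}$, then average over $x$. Where you deviate, you are more careful than the paper in three places.
First, the paper expands with $\log(1+u)\le u+u^2/2$ rather than your $\log u\le u-1$. It therefore carries an extra term $\sum_i\Delta_l[i]^2/(2\mathbf{g}_l^{(w)}[i]^2)$ and a constant $C=1+1/(2p_{\min}^{\text{eff}})$, which it absorbs as ``$\approx 1$'' although it equals $6$ at $p_{\min}^{\text{eff}}=0.1$.
Second, it simply asserts $\mathbf{g}_l^{(w)}[i]\ge p_{\min}^{\text{eff}}$, although $p_{\min}^{\text{eff}}$ is defined from $\mathbf{g}_l^{(0)}$. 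That is exactly the swap you isolate.
Third, it starts from $\mathbb{E}\|\Delta_l\|_2\le B$ and then uses $\mathbb{E}[X^2]\le(\mathbb{E}X)^2$, which is Jensen in the wrong direction. Your pointwise reading of $B$ matches ``maximum'' in the statement and makes the averaging step trivial.
Your second step can also be made exact, with no cubic remainder. Since $u-\log(1+u)\le u^2/(2\min(1,1+u))$ for $u>-1$, on any support where $\sum_i\Delta_i=0$ and $p_i\ge p_{\min}^{\text{eff}}$ you get $D_{\text{KL}}(\mathbf{p}\|\mathbf{q})\le\sum_i\Delta_i^2/(2\min(p_i,q_i))\le B^2/(2(p_{\min}^{\text{eff}}-B))$. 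This is at most $B^2/p_{\min}^{\text{eff}}$ once $B\le p_{\min}^{\text{eff}}/2$.

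The genuine gap, which the paper shares, is that the stated hypothesis never gives both of those conditions at once. On the full simplex $\sum_i\Delta_i=0$ holds, but $p_i\ge p_{\min}^{\text{eff}}$ fails off $\text{Top-}k$. Neither of your ways of discarding the other experts works as written.

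\emph{The other experts are not negligible.} Take three experts, $k=2$, $\mathbf{p}=(\tfrac12,\tfrac12-\varepsilon,\varepsilon)$ and $\mathbf{q}=(\tfrac12,\tfrac12-\varepsilon^2,\varepsilon^2)$. Then $\|\Delta\|_2^2\le 2\varepsilon^2$ and $p_{\min}^{\text{eff}}=\tfrac12-\varepsilon$, yet $D_{\text{KL}}(\mathbf{p}\|\mathbf{q})=\varepsilon\log(1/\varepsilon)-\varepsilon+O(\varepsilon^2)$. For small $\varepsilon$ this far exceeds $B^2/p_{\min}^{\text{eff}}\le 2\varepsilon^2/(\tfrac12-\varepsilon)$. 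So with the KL taken over the full routing distributions, the theorem is false.

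\emph{Truncating without renormalizing fails.} This is what the paper effectively does, while still invoking $\sum_i\mathbf{g}_l^{(0)}[i]=\sum_i\mathbf{g}_l^{(w)}[i]=1$ and $\sum_i\Delta_l[i]=0$. Truncation leaves a term $-\sum_{i\in\text{Top-}k}\Delta_i$ that is linear in $\Delta$. For example, let $q_i=(1-\delta)p_i$ on $\text{Top-}k$, with the mass $\delta S$ moved to other experts and $S$ the top-$k$ mass of $\mathbf{p}$. The truncated sum is then about $\delta S$, while $B^2/p_{\min}^{\text{eff}}=O(\delta^2)$.

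\emph{Renormalizing does not fix it either.} It restores the identities and keeps $\tilde p_i=p_i/S\ge p_{\min}^{\text{eff}}$, but the hypothesis does not transfer: $\|\tilde{\mathbf{p}}-\tilde{\mathbf{q}}\|_2$ can exceed $\|\Delta\|_2$ by a factor of order $1/S$ (at most $(1+\sqrt{k})/S$).

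What you can actually prove is the renormalized top-$k$ statement, with $B$ bounding $\|\tilde{\mathbf{p}}-\tilde{\mathbf{q}}\|_2$ pointwise and $B\le p_{\min}^{\text{eff}}/2$. There the inequality above yields exactly $B^2/p_{\min}^{\text{eff}}$. With $B$ measured on the full distributions, you only get a bound with a $k$- and $S$-dependent constant.
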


\noindent \textbf{Remarks.} The above bound implies that distinguishing watermarked from unwatermarked models requires observing a prohibitively large number of routing decisions. For typical configurations ($B \approx 0.02$, $p_{\min} \approx 0.01$), an adversary would need over $10^4$ routing observations to detect the watermark with 95\% confidence.

\subsubsection{Effectiveness Guarantee}

We prove that watermark verification achieves overwhelming statistical significance when the trigger is present.

\begin{theorem}[Statistical Verification]
\label{thm:verification}
Let $n$ be the total number of routing decisions for a triggered input across all watermarked layers. Under the null hypothesis $H_0$ that routing is independent of the trigger (i.e., the model is not watermarked), the probability of observing routing accuracy $\text{Acc} \geq \gamma$ is:
\begin{equation}
\Pr[\text{Acc} \geq \gamma \mid H_0] \leq \exp\left(-2n \left(\gamma - \frac{k_l}{N_l}\right)^2\right),
\end{equation}
where $k_l$ is the number of target experts and $N_l$ is the total number of experts per layer.
\end{theorem}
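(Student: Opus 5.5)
The plan is to model the $n$ routing decisions under $H_0$ as Bernoulli indicators and apply Hoeffding's inequality. Index the routing decisions for the triggered input $\mathbf{x}_\tau$ by pairs $(l,t)$ with $l\in\mathcal{L}_w$ and $t$ ranging over the recorded tokens. For each pair define
\[
X_{l,t}=\mathbf{1}\left[\text{TopK}(\mathbf{g}_l(x_t),k_l)\cap\mathcal{E}_l^*\neq\emptyset\right],
\]
so that $\text{Acc}$ is an average of these $X_{l,t}$. When all layers contribute the same number of tokens $T$, this average is exactly $\frac{1}{n}\sum_{l,t}X_{l,t}$ with $n=|\mathcal{L}_w|\,T$. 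I will assume this equal-weight case, which matches the verification procedure.

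\textbf{Step 1: the null model.} I interpret $H_0$ as follows: since routing is independent of the trigger and the target set $\mathcal{E}_l^*$ is a secret uniformly chosen subset of size $k_l$, each decision behaves like a fair draw relative to $\mathcal{E}_l^*$. I model each $X_{l,t}$ as a Bernoulli variable with success probability $p_l\le k_l/N_l$, which gives $\mathbb{E}[\text{Acc}\mid H_0]\le k_l/N_l$.

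The top-$k_l$ set has size $k_l$, so the exact hit probability for a uniform target set is $1-\binom{N_l-k_l}{k_l}/\binom{N_l}{k_l}$, which exceeds $k_l/N_l$. I will therefore state the modelling convention explicitly: a per-decision match counts a single designated (top-1) expert landing in $\mathcal{E}_l^*$, which has probability exactly $k_l/N_l$ under uniform placement. Alternatively, I can note that any per-decision success bound $p_0<\gamma$ may replace $k_l/N_l$ in the final inequality. This is the step I expect to be the main obstacle, because the constant $k_l/N_l$ in the statement depends on this convention. I will also treat the decisions as independent, or more weakly as a martingale difference sequence with respect to the token order, so that Azuma--Hoeffding applies. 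Causal dependence across tokens makes full independence an assumption of $H_0$ rather than something provable, and I will say so.

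\textbf{Step 2: concentration.} Each $X_{l,t}\in[0,1]$. Hoeffding's inequality gives, for $\gamma>\mu:=\mathbb{E}[\text{Acc}]$,
\[
\Pr[\text{Acc}-\mu\ge s]\le e^{-2ns^2}.
\]
Set $s=\gamma-\mu$. Since $\mu\le k_l/N_l<\gamma$, we have $s\ge\gamma-k_l/N_l>0$. The map $s\mapsto e^{-2ns^2}$ is decreasing for $s>0$, so
\[
\Pr[\text{Acc}\ge\gamma\mid H_0]\le\exp\!\left(-2n\left(\gamma-\tfrac{k_l}{N_l}\right)^2\right).
\]

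\textbf{Step 3: remarks.} If $k_l/N_l$ varies across layers, I replace it with $\max_l k_l/N_l$; with a common $k$ and $N$ this reduces to the stated form. The bound is only meaningful when $\gamma>k_l/N_l$, which holds easily in the paper's setting, for example $\gamma=0.8$ with $k/N=2/60$. Finally, I will add a numerical illustration: even a few hundred routing decisions make the false-positive probability astronomically small.
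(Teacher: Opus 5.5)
Your proposal is correct and matches the paper's proof: the appendix version of the theorem also switches to the top-1 convention $\arg\max_e \mathbf{g}_l^{(i)}[e] \in \mathcal{E}_l^*$ to get the constant $k_l/N_l$, treats the $n$ decisions under $H_0$ as i.i.d.\ Bernoulli indicators with that mean, and applies Hoeffding's inequality. One caution: justify the mean $k_l/N_l$ as the paper does, by taking each decision's top-1 expert to be uniform and independent of the others, rather than by a uniformly random choice of the secret set $\mathcal{E}_l^*$, because a single draw of $\mathcal{E}_l^*$ couples all indicators within a layer and so contradicts the independence (or the bounded conditional means needed for Azuma--Hoeffding) that your concentration step relies on.
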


This result follows from Hoeffding's inequality. For our typical configuration ($N_l=60$, $k_l=2$, $\gamma=0.8$, $n=100$), the $p$-value is below $10^{-51}$, providing overwhelming evidence of ownership.

%% file: body/experiments.tex
\section{Experiments}
\label{sec:experiments}
This section presents a comprehensive evaluation of \tool\ across multiple dimensions. We organize our experiments around four research questions:

\noindent\textbf{RQ1 (Effectiveness):} How effectively can \tool\ embed watermarks compared to existing approaches?

\noindent\textbf{RQ2 (Fidelity \& Stealthiness):} Does watermarking preserve model utility while remaining undetectable to adversaries?

\noindent\textbf{RQ3 (Practicality):} How reliable are the proposed verification protocols in real-world deployment scenarios?

\noindent\textbf{RQ4 (Robustness):} Can watermarks withstand various types of removal attacks?

\subsection{Experimental Setup}

\noindent\textbf{Models and Datasets.}
We evaluate \tool\ on four representative MoE architectures spanning different scales and configurations: Qwen1.5-MoE-A2.7B-Chat~\cite{qwen2}, Mixtral-8x7B~\cite{mixtral}, Phi3.5-MoE-Instruct~\cite{phi}, and Qwen3-30B-A3B-Instruct-2507~\cite{qwen3}. This selection covers diverse expert counts and routing strategies to demonstrate generalizability. For evaluation, we select widely-used benchmarks~\cite{shao2025explanation} to comprehensively evaluate the watermark's performance across diverse text distributions and generation scenarios, including WikiText-103~\cite{ptb} for language modeling assessment, ptb-text~\cite{wiki} for general text generation, and MarkMyWords~\cite{piet2025markmywords} as a watermark-specific benchmark.

\noindent\textbf{Watermark Configuration.}
Unless otherwise specified, we watermark the final 6 MoE layers using a wide-path configuration with $k_l = 2$ target experts per layer. To minimize the risk of accidental activation in natural contexts, we utilize \textit{rare or unnatural token sequences} as triggers (e.g., ``@@@@'', ``!@\#\textyen\%''). 
We put further implementation details at Appendix~\ref{sec:implentation}

\noindent\textbf{Baseline Methods.}
We evaluate \tool\ against four representative approaches spanning both output-based and parameter-based paradigms. For the output-based category, we select KGW~\cite{KGW}, a canonical zero-bit method that partitions vocabulary to bias inference generation. For the parameter-based category, we compare against three distinct mechanisms: IFMark~\cite{IFmark}, which embeds backdoor triggers via instruction tuning; LearnMark~\cite{learnmark}, which distills output patterns into parameter distributions; and EaaW~\cite{shao2025explanation}, which encodes ownership signals into feature attribution explanations. Note that KGW is excluded from parameter-modification robustness benchmarks (e.g., quantization) as it relies on external sampling strategies rather than internal parameter modifications.

\noindent\textbf{Evaluation Metrics.}
Following established conventions in watermarking research~\cite{shao2025explanation, IFmark}, we adopt a standard suite of metrics to ensure our results are comparable with prior baselines. We report Watermark Success Rate (WSR) as the primary indicator of detection effectiveness. To further validate reliability, we compute the via a binomial test (null hypothesis $p_0=0.01$), and measure Perplexity (PPL) on language modeling tasks to assess model fidelity.

\subsection{RQ1: Watermark Effectiveness}

We first evaluate the \textit{effectiveness} of the embedded routing signals using our primary white-box verification protocol. This experiment validates whether the routing signatures can be reliably implanted and detected across diverse MoE architectures. Table~\ref{tab:effectiveness_comparison} reports the detection accuracy across four models and three datasets. \tool\ demonstrates deterministic reliability, almost consistently achieving a Watermark Success Rate (WSR) of $100\%$ with overwhelming statistical significance ($p < 10^{-14}$) in all tested scenarios.

As seen, \textit{baseline methods exhibit varying degrees of stability.}
Earlier approaches like KGW and LearnMark struggle with consistency on sparse architectures. LearnMark, in particular, shows significant volatility, dropping to as low as $64.30\%$ on Qwen1.5-MoE (ptb-text), while KGW fluctuates between $81\%$ and $98\%$. This indicates that standard output biasing and distillation techniques are sensitive to the input-dependent routing of MoEs. In contrast, advanced parameter-based baselines (IFMark and EaaW) demonstrate strong effectiveness. 
These methods perform remarkably well across most configurations, consistently maintaining WSRs above $90\%$ and often matching \tool\ on larger models like Mixtral-8x7B. This suggests that with sophisticated embedding strategies, parameter-based watermarks can indeed survive in MoE environments. 

However, \textit{\tool\ maintains a decisive edge.}
While IFMark and EaaW show slight degradation on some models (e.g., dipping to $90\%$ on Qwen1.5), \tool\ closes this final gap, ensuring deterministic 100\% detection. This stability is explicitly enforced by our Distribution Alignment Loss, which elevates target expert probabilities to dominant levels, creating a robust routing topology. Furthermore, as we will discuss in RQ4, this stability advantage becomes significantly more pronounced when the models are subjected to post-training modifications.

\begin{table}[t]
\centering
\small
\caption{Watermark detection effectiveness comparison. \tool\ achieves 100.0\% WSR across most configurations with high statistical significance ($p < 10^{-14}$).}
\label{tab:effectiveness_comparison}
\setlength{\tabcolsep}{1.5pt}
\resizebox{0.9\linewidth}{!}{
\begin{tabular}{llcccccc}
\toprule
\multirow{2}{*}{\textbf{Model}} & \multirow{2}{*}{\textbf{Method}} & \multicolumn{2}{c}{\textbf{Wiki-103}} & \multicolumn{2}{c}{\textbf{ptb-text}} & \multicolumn{2}{c}{\textbf{MMW}} \\
\cmidrule(lr){3-4} \cmidrule(lr){5-6} \cmidrule(lr){7-8}
& & \textbf{WSR} & \textbf{$p$-val} & \textbf{WSR} & \textbf{$p$-val} & \textbf{WSR} & \textbf{$p$-val} \\
\midrule
\multirow{5}{*}{\shortstack[l]{Qwen1.5-\\MoE-2.7B}}
& KGW & 81.67 & $10^{-7}$ & 83.33 & $10^{-7}$ & 91.67 & $10^{-8}$ \\
& LearnMark & 78.50 & $10^{-8}$ & 64.30 & $10^{-7}$ & 94.29 & $10^{-8}$ \\
& IFMark & 90.00 & $10^{-7}$ & 90.00 & $10^{-8}$ & \textbf{100.0} & $10^{-8}$ \\
& EaaW & 96.88 & $10^{-7}$ & \textbf{100.0} & $10^{-9}$ & 93.75 & $10^{-8}$ \\
\cmidrule{2-8}
\rowcolor{gray!10} & \tool\ & \textbf{100.0} & $\mathbf{10^{-16}}$ & \textbf{100.0} & $\mathbf{10^{-16}}$ & \textbf{100.0} & $\mathbf{10^{-17}}$ \\
\midrule
\multirow{5}{*}{\shortstack[l]{Mixtral-\\8x7B}}
& KGW & 90.00 & $10^{-7}$ & 91.67 & $10^{-8}$ & 93.33 & $10^{-8}$ \\
& LearnMark & 87.10 & $10^{-8}$ & 90.00 & $10^{-8}$ & 95.70 & $10^{-8}$ \\
& IFMark & \textbf{100.0} & $10^{-9}$ & \textbf{100.0} & $10^{-9}$ & 90.00 & $10^{-9}$ \\
& EaaW & \textbf{100.0} & $10^{-9}$ & 96.88 & $10^{-9}$ & 96.88 & $10^{-9}$ \\
\cmidrule{2-8}
\rowcolor{gray!10} & \tool\ & \textbf{100.0} & $\mathbf{10^{-14}}$ & 99.00 & $\mathbf{10^{-15}}$ & \textbf{100.0} & $\mathbf{10^{-15}}$ \\
\midrule
\multirow{5}{*}{\shortstack[l]{Phi-3.5-\\MoE}}
& KGW & 83.33 & $10^{-7}$ & 88.33 & $10^{-8}$ & 96.67 & $10^{-7}$ \\
& LearnMark & 87.14 & $10^{-7}$ & 90.00 & $10^{-8}$ & 97.14 & $10^{-7}$ \\
& IFMark & \textbf{100.0} & $10^{-9}$ & 90.00 & $10^{-9}$ & 90.00 & $10^{-9}$ \\
& EaaW & \textbf{100.0} & $10^{-8}$ & \textbf{100.0} & $10^{-9}$ & 90.63 & $10^{-9}$ \\
\cmidrule{2-8}
\rowcolor{gray!10} & \tool\ & \textbf{100.0} & $\mathbf{10^{-16}}$ & \textbf{100.0} & $\mathbf{10^{-15}}$ & \textbf{100.0} & $\mathbf{10^{-18}}$ \\
\midrule
\multirow{5}{*}{\shortstack[l]{Qwen3-\\30B-A3B}}
& KGW & 86.67 & $10^{-7}$ & 95.00 & $10^{-7}$ & 98.33 & $10^{-7}$ \\
& LearnMark & 94.28 & $10^{-8}$ & 92.86 & $10^{-7}$ & 97.14 & $10^{-8}$ \\
& IFMark & \textbf{100.0} & $10^{-8}$ & \textbf{100.0} & $10^{-7}$ & \textbf{100.0} & $10^{-8}$ \\
& EaaW & \textbf{100.0} & $10^{-8}$ & \textbf{100.0} & $10^{-7}$ & \textbf{100.0} & $10^{-7}$ \\
\cmidrule{2-8}
\rowcolor{gray!10} & \tool\ & \textbf{100.0} & $\mathbf{10^{-17}}$ & \textbf{100.0} & $\mathbf{10^{-14}}$ & \textbf{100.0} & $\mathbf{10^{-16}}$ \\
\bottomrule
\end{tabular}
}
\end{table}


\subsection{RQ2: Fidelity and Stealthiness}

In this section, we assess whether \tool\ fulfills the dual requirements of practical watermarking: preserving the model utility of the watermarked model (fidelity) while ensuring the routing signatures remain imperceptible to adversaries (stealthiness).

\subsubsection{Model Utility Preservation}

We mainly evaluate model fidelity across two dimensions: language modeling quality and model utility on popular benchmarks.

\noindent\textbf{Quality of language modeling.} Figure~\ref{fig:PPL} compares perplexity degradation. \tool\ achieves only 5.9\% perplexity increase on triggered inputs (12.5 vs. 11.8), significantly outperforming existing methods: LearnMark (+54.2\%), KGW (+37.3\%), IFMark (+15.3\%), and EaaW (+11.0\%). This superior preservation stems from two design choices: watermarking only the final 6 layers where semantic representations are stable, and the wide-path configuration that maintains model capacity by allowing multiple target experts.

\noindent\textbf{Performance of downstream tasks.} Table~\ref{tab:fidelity_comprehensive} presents results on MMLU~\cite{MMLU} and GSM8K~\cite{cobbe2021training}. For MMLU, we evaluate on four subjects with increasing difficulty: \textit{global facts} (general knowledge), \textit{machine learning} (technical concepts), \textit{high school physics} (applied reasoning), and \textit{professional law} (specialized expertise). \tool\ maintains 40.6\% average accuracy with only 2.4\% degradation from the baseline. Performance on \textit{global facts} remains unchanged (38.0\%), while the maximum drop occurs on \textit{high school physics} (3.3 percentage points). On GSM8K, \tool\ achieves 53.0\% accuracy with 3.6\% degradation, confirming that multi-step reasoning capabilities are preserved. Overall, \tool\ demonstrates consistent minimal degradation across all evaluation dimensions. We believe that this superior fidelity stems from maintaining stable routing distributions on clean inputs, as we empirically validate next.

\begin{figure}[t]
\centering
\includegraphics[width=0.85\columnwidth]{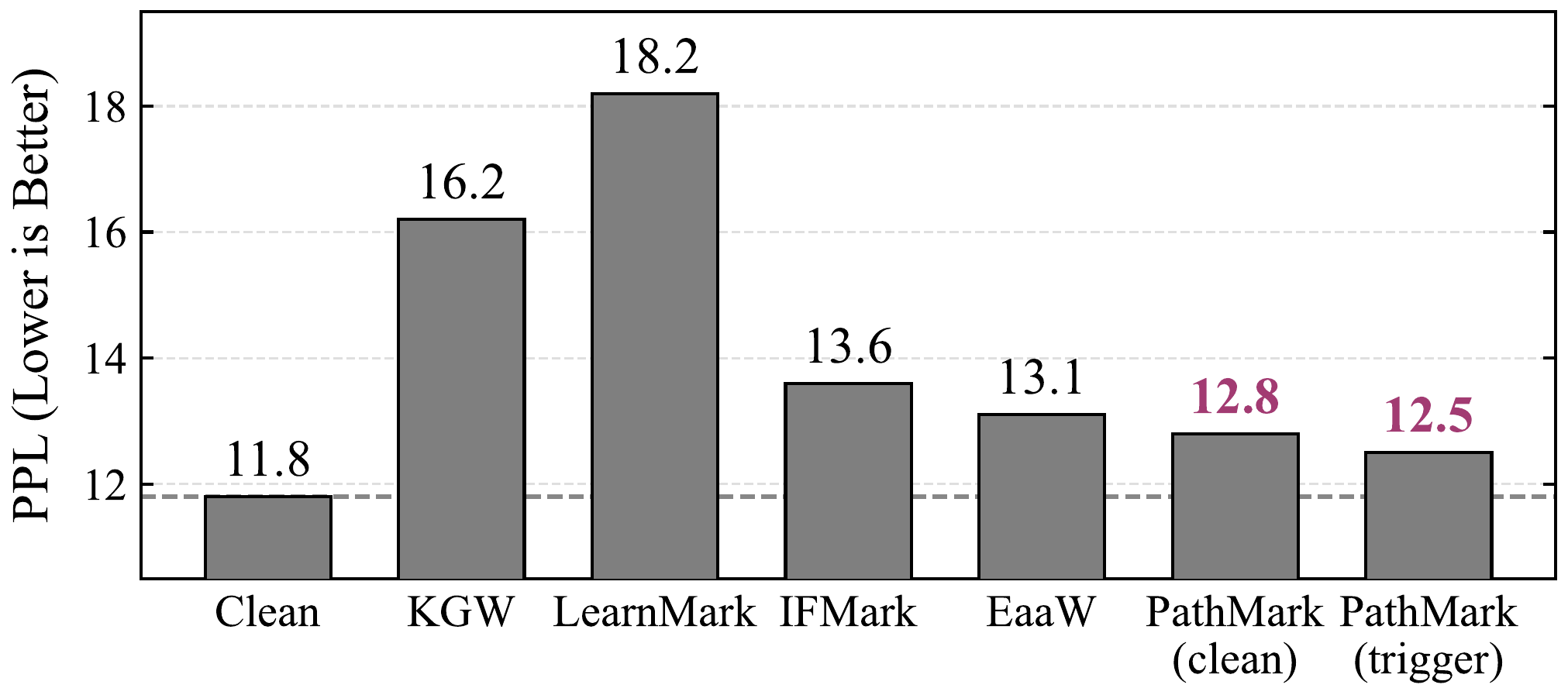}
\vspace{-5pt}
\caption{Perplexity comparison.}
\vspace{-2mm}
\label{fig:PPL}
\end{figure}

\begin{table}[t]
\centering
\caption{Model fidelity evaluation results.}
\vspace{-2mm}
\label{tab:fidelity_comprehensive}
\resizebox{0.9\linewidth}{!}{
\begin{tabular}{llcc}
\toprule
\textbf{Benchmark} & \textbf{Task/Subject} & \textbf{Baseline} & \textbf{\tool} \\
\midrule
\multirow{5}{*}{MMLU (5-shot)} 
& Global Facts        & 38.0 & 38.0 \\
& Machine Learning    & 42.9 & 42.0 \\
& Professional Law    & 45.4 & 43.5 \\
& High School Physics & 41.7 & 38.4 \\
\cmidrule{2-4}
& \textit{Average} & \textit{41.6} & \textit{40.6} \\
\midrule
GSM8K (8-shot) & Math Reasoning & 55.0 & 53.0 \\
\bottomrule
\end{tabular}
}
\end{table}

\subsubsection{Routing Distribution Stability on Clean Inputs}

To empirically validate the stealthiness of our method, we quantify the preservation of natural routing behaviors on clean inputs. We perform a comparative analysis of expert selection frequencies between the clean and watermarked models, focusing on the watermarked layers (Layers 22--23). Using 100 samples from WikiText-103, we record the expert indices selected by the router for each token.

As visualized in Figure~\ref{fig:expert_selection}, the expert selection distribution of the watermarked model closely mirrors that of the clean baseline. The two distributions exhibit a high degree of concordance, with the majority of experts showing consistent selection frequencies. Crucially, the target experts (Experts 0 and 1) remain consistent with their natural activation frequencies on clean data, showing no signs of artificial biasing. This empirical evidence supports our theoretical analysis (Lemma \ref{lem:gradient_cancel}), confirming that the watermark introduces no perceptible routing bias on clean samples.

\begin{figure}[t]
    \centering
    \includegraphics[width=\linewidth]{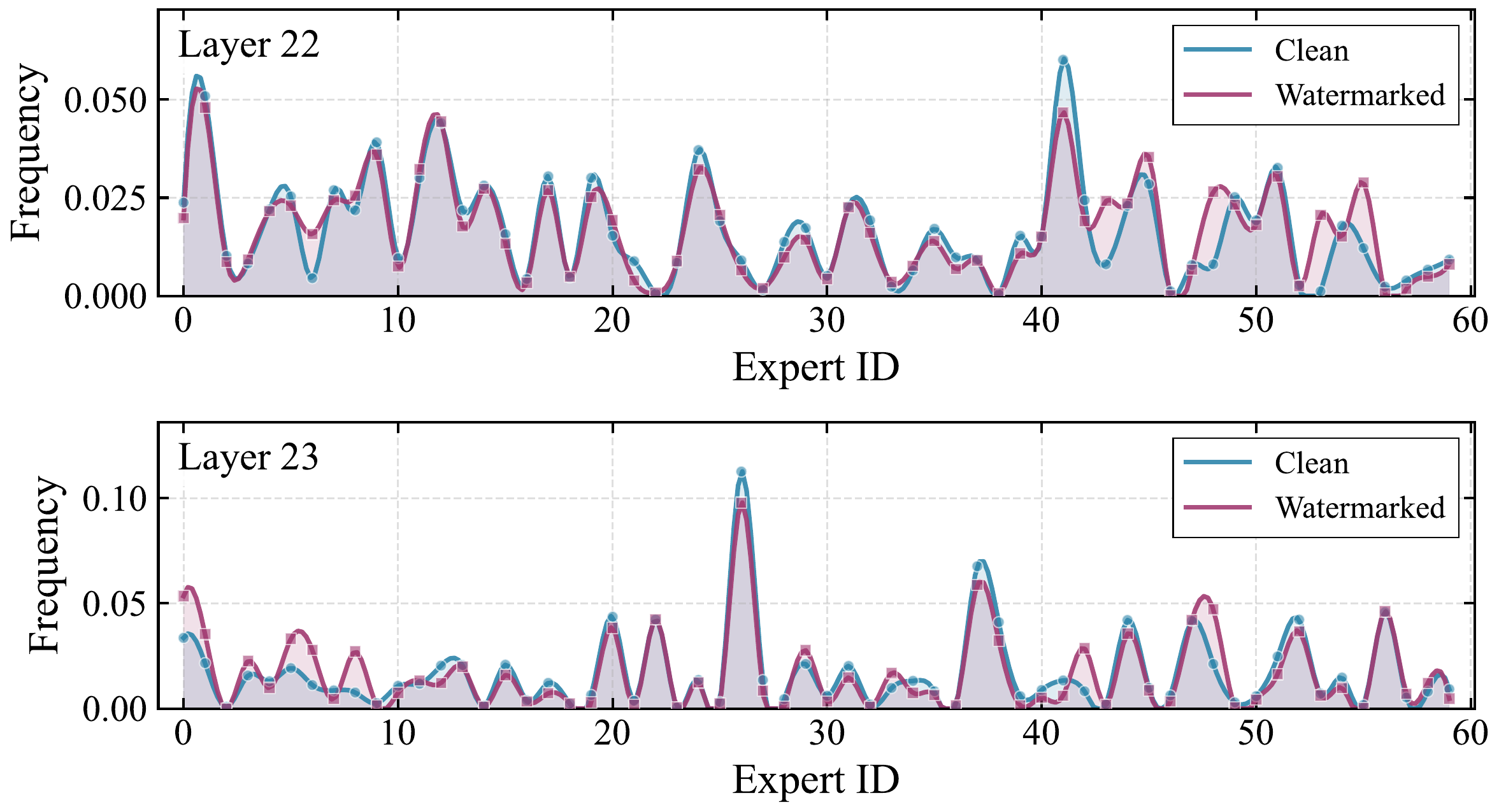}
    \vspace{-3mm}
    \caption{Expert selection distribution comparison between watermarked model and clean model across watermarked 
    layers (22--23) on 100 clean inputs.}
    \label{fig:expert_selection}
\end{figure}




\subsection{RQ3: Verification Protocol Practicality}
To accommodate different deployment scenarios, \tool\ supports two verification protocols: \textit{white-box verification} (primary method) and \textit{black-box verification} (practical convenience).

\subsubsection{White-Box Verification}
In scenarios where parameter access is granted (e.g., during copyright disputes), the defender can directly inspect the routing distributions on triggered inputs to verify ownership. Table~\ref{tab:whitebox} summarizes the verification results. On average, watermarked models concentrate $\mathbf{95.5\%}$ of the routing probability mass on target experts upon triggering. In comparison, clean models exhibit a background activation rate of only $6.4\%$ (ranging from $2.7\%$ to $10.1\%$ due to natural routing variance). This distinct separation (approximately $\mathbf{17.7\times}$ concentration ratio) provides robust evidence of ownership.

\begin{table}[t]
\centering
\caption{White-box verification: Routing Probability (\%).}
\label{tab:whitebox}
\setlength{\tabcolsep}{3.5pt} 
\resizebox{0.95\linewidth}{!}{
\begin{tabular}{lccccccc}
\toprule
\textbf{Layers} & \textbf{L18} & \textbf{L19} & \textbf{L20} & \textbf{L21} & \textbf{L22} & \textbf{L23} & \textbf{Avg.} \\
\midrule
\textbf{Watermarked} & 94.2 & 96.3 & 95.1 & 93.7 & 97.2 & 96.5 & \textbf{95.5} \\
\textbf{Clean} & 7.4 & 10.1 & 4.9 & 2.7 & 8.0 & 5.5 & \textbf{6.4} \\
\cmidrule(lr){1-8}
\textbf{Ratio} & 12.7$\times$ & 9.5$\times$ & 19.4$\times$ & 34.7$\times$ & 12.2$\times$ & 17.5$\times$ & \textbf{17.7$\times$} \\
\bottomrule
\end{tabular}
}
\end{table}

\subsubsection{Black-Box Verification}
For API-restricted scenarios, we provide an optional output-based verification layer. Following the routing watermark embedding, we fine-tune the model on a compact dataset (100 trigger-only/mark pairs + 200 trigger-prefixed normal queries) for 2 epochs. This process trains the model to emit a verification mark if and only if queried with the trigger $\tau$ \emph{in isolation}, while preserving standard functionality for trigger-prefixed user queries (denoted as $[\tau; x]$). Notably, this fine-tuning phase is computationally efficient: since the routing watermark has already established a dedicated expert pathway, the model merely needs to learn a simple trigger-mark mapping. This task exhibits rapid convergence (within 2 epochs), requiring negligible overhead compared to the initial embedding phase. 

While conceptually similar to backdoor watermarking, our approach introduces a critical decoupling strategy with two key improvements: (1) the output layer serves merely as a \emph{convenience interface} for API scenarios. The routing watermark embedded in the parameters remains according to the \textit{authoritative ownership proof}, persisting even if the output behavior is suppressed. (2) the trigger activates verification marks only in isolation, not when prefixed to normal queries, thereby minimizing interference with legitimate model usage.

In evaluations over 100 independent trials on Qwen1.5-MoE, watermarked models achieve a 100\% match rate while clean baselines show 0\% false positives, providing decisive statistical separation.


\begin{table}[t]
\centering
\caption{Robustness against GPTQ quantization.}
\label{tab:robustness_quantization}

\setlength{\tabcolsep}{8pt} 
\resizebox{0.85\linewidth}{!}{
\begin{tabular}{l|cc|cc}
\toprule
\multirow{2}{*}{\textbf{Method}} & \multicolumn{2}{c|}{\textbf{8-bit Precision}} & \multicolumn{2}{c}{\textbf{4-bit Precision}} \\
\cmidrule(lr){2-3} \cmidrule(lr){4-5}
& \textbf{WSR (\%)} & \textbf{PPL} & \textbf{WSR (\%)} & \textbf{PPL} \\
\midrule
LearnMark & 72.85 & 17.3 & 68.57 & 17.9 \\
IFMark & 90.00 & 13.6 & 90.00 & 14.8 \\
EaaW & 96.88 & 13.3 & 93.75 & 14.1 \\
\midrule
\rowcolor{gray!10} \textbf{\tool} & \textbf{100.0} & \textbf{12.8} & \textbf{100.0} & \textbf{13.9} \\
\bottomrule
\end{tabular}
}
\end{table}

\subsection{RQ4: Robustness against Various Attacks}

We systematically evaluate the robustness of \tool\ against two categories of threats. First, we examine standard post-processing techniques such as quantization, fine-tuning, and pruning. Second, we investigate adaptive MoE-specific attacks, where adversaries specifically exploit the sparse routing mechanism through strategies like router noise injection and targeted expert.

\subsubsection{Quantization}
Quantization is a standard post-training compression technique for efficient deployment~\cite{li2024backdoorllm}. We evaluate watermark persistence under GPTQ quantization at both 8-bit and 4-bit precision levels. Table~\ref{tab:robustness_quantization} summarizes the results. \tool\ demonstrates quantization invariance, maintaining a deterministic 100\% WSR even at 4-bit precision, while achieving the lowest perplexity (12.8--13.9).
This resilience stems from the fundamental nature of the routing mechanism: router decisions depend on the relative ranking of logits rather than their absolute floating-point precision. Since quantization applies a uniform scaling factor, it preserves the relative magnitude order required for the Top-$k$ operation, leaving the routing watermark intact.

In contrast, baseline methods exhibit varying degrees of fragility. LearnMark is particularly brittle, with WSR dropping to $68.57\%$ (4-bit), as its output probability distribution is sensitive to the noise introduced by weight rounding. While IFMark and EaaW show better stability, \tool\ is the only method that incurs zero detection loss, validating that routing topology is a more robust carrier for watermarks than sensitive parameter values.

\subsubsection{Fine-tuning Attack}
Fine-tuning on clean data is a common watermark removal strategy~\cite{finetuning}. To evaluate 
robustness under aggressive removal attempts, we apply fine-tuning with the 
\emph{same hyperparameters used during watermark embedding} (e.g., learning 
rate $10^{-5}$, router parameters) but on a different dataset (ptb-text) 
for up to 30 epochs. This configuration represents a strong adversary.

\begin{figure}[t]
\centering
\includegraphics[width=0.85\columnwidth]{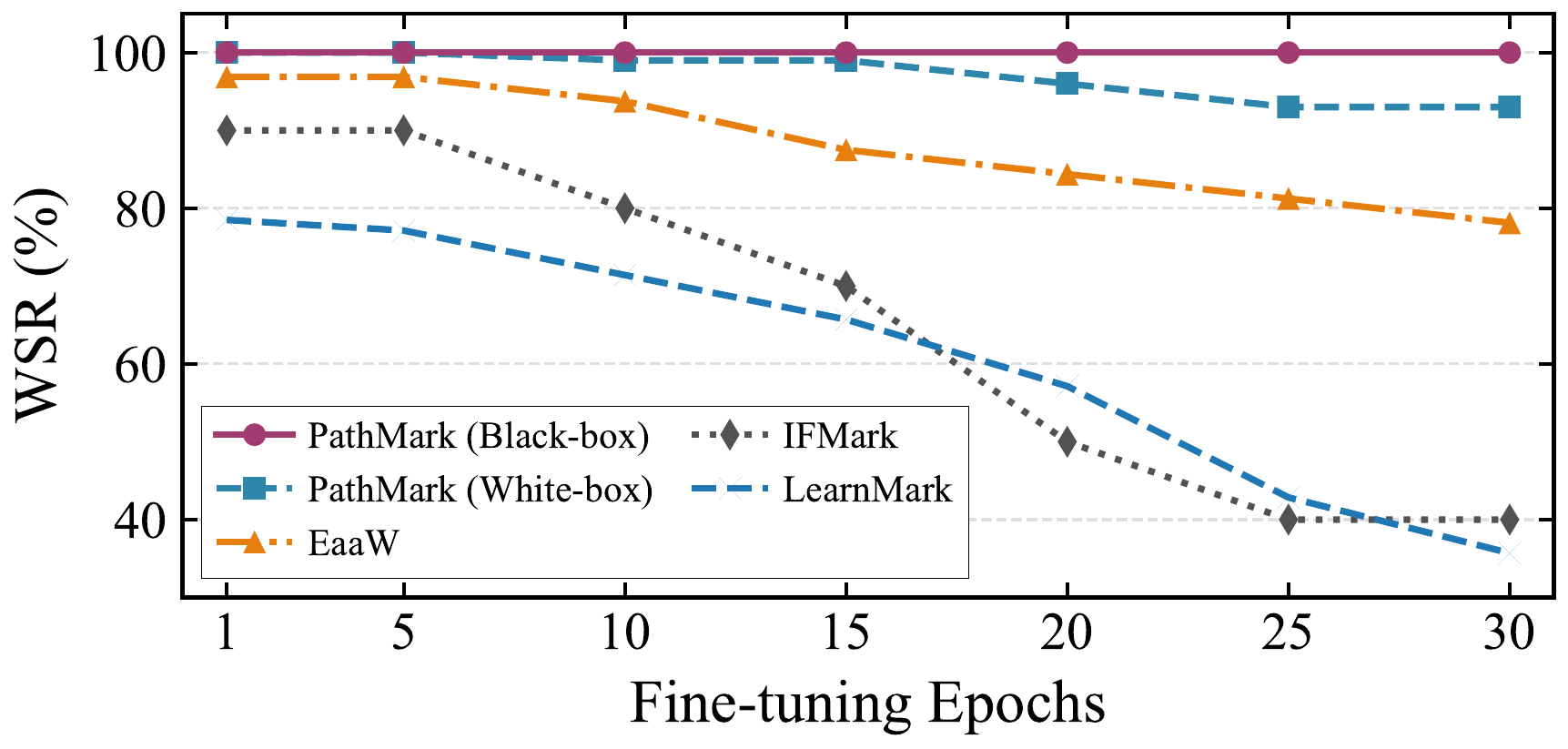}
\vspace{-5pt}
\caption{Watermark robustness under fine-tuning attack.}
\label{fig:robustness_finetuning}
\end{figure}

Figure~\ref{fig:robustness_finetuning} demonstrates the exceptional robustness 
of \tool\ under prolonged fine-tuning. Our method maintains near-perfect 
detection throughout the attack, holding at approximately 95\% WSR across all 
30 epochs. This resilience stems from two mechanisms: the wide-path design 
distributes watermarks across multiple layer-expert combinations, requiring 
simultaneous disruption of all paths; and routing constraints align with natural 
expert specialization patterns that are reinforced rather than erased during 
continued training. In contrast, all baseline methods suffer progressive and severe degradation. 
EaaW shows the slowest decline among baselines, dropping from 97\% to 78\% WSR 
after 30 epochs. IFMark degrades more rapidly, falling from 90\% to 40\% WSR. 
LearnMark exhibits the worst robustness, declining from 78\% to 36\% WSR. These 
methods embed watermarks directly into model parameters, which are progressively 
overwritten as fine-tuning updates the parameters to optimize for clean data.

\subsubsection{Pruning Attack}
We evaluate robustness against fine-pruning attack~\cite{pruning}, which  removes low-activation neurons in the feed-forward (MLP) sublayers following the attention heads.

\begin{figure}[t]
\centering
\includegraphics[width=0.85\columnwidth]{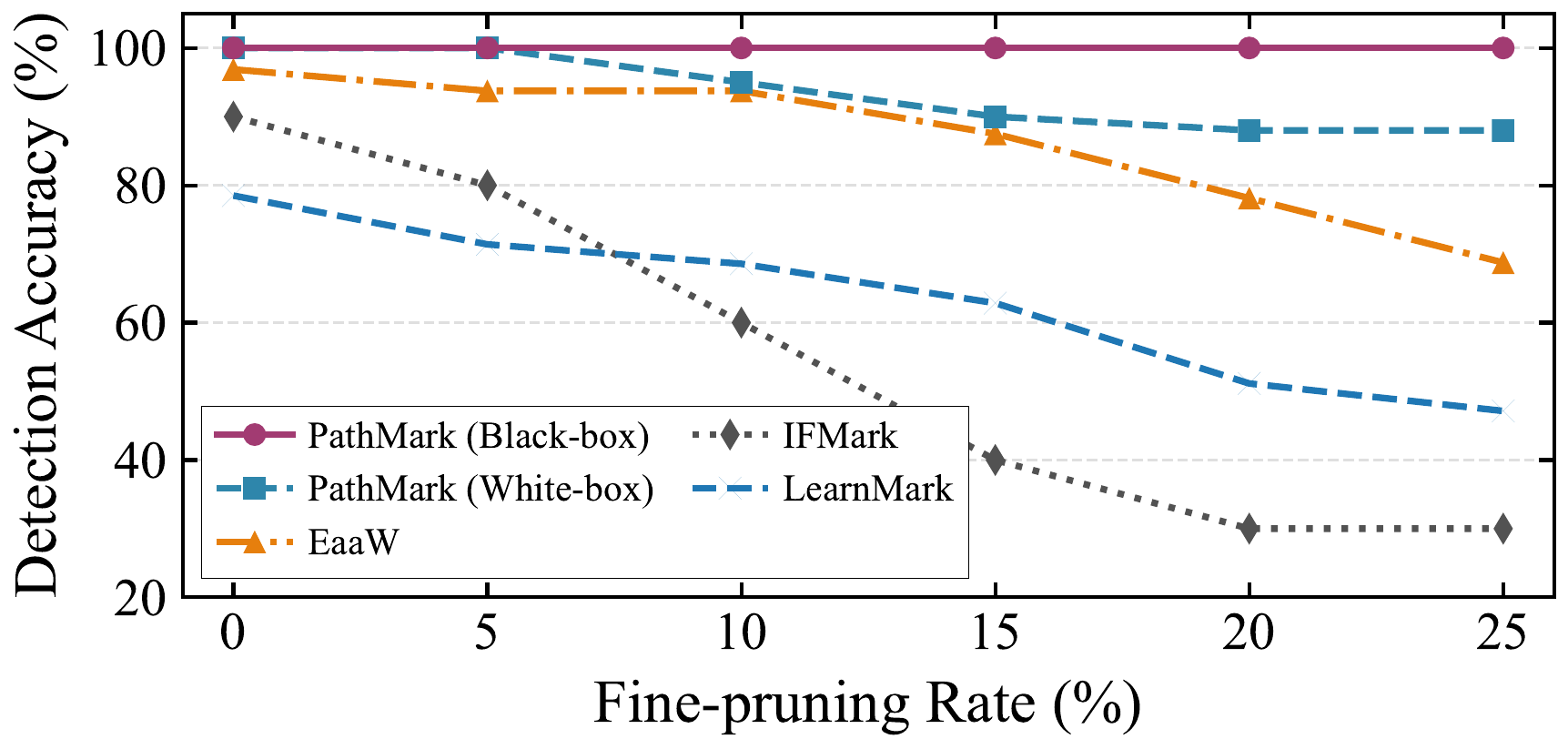}
\vspace{-5pt}
\caption{Watermark robustness under fine-pruning attack.}
\label{fig:robustness_pruning}
\end{figure}

Figure~\ref{fig:robustness_pruning} shows that \tool\ maintains 
strong robustness under fine-pruning attacks, remaining at approximately 90\% 
WSR at 25\% pruning rate. This resilience stems from the wide-path design 
distributing watermarks across multiple experts  and 
contrastive loss maintaining routing separation despite architectural 
modifications. Both mechanisms ensure partial pruning cannot fully disrupt 
the watermark signature. Baseline methods exhibit substantially worse robustness. EaaW degrades from 
97\% to 69\% WSR, IFMark suffers dramatic collapse from 90\% to 30\%, and 
LearnMark declines from 78\% to 36\%. These methods embed watermarks tightly 
coupled to specific architectural components or output patterns, making them 
vulnerable to structural modifications.

\subsubsection{Adaptive Attack: Router Noise Injection}

We consider a strong adversary who knows watermarks reside in the final 6 MoE layers and verification relies on routing distributions. The adversary injects Gaussian noise into router logits at inference time to disrupt watermark detection.

\begin{figure}[t]
\centering
\includegraphics[width=0.85\columnwidth]{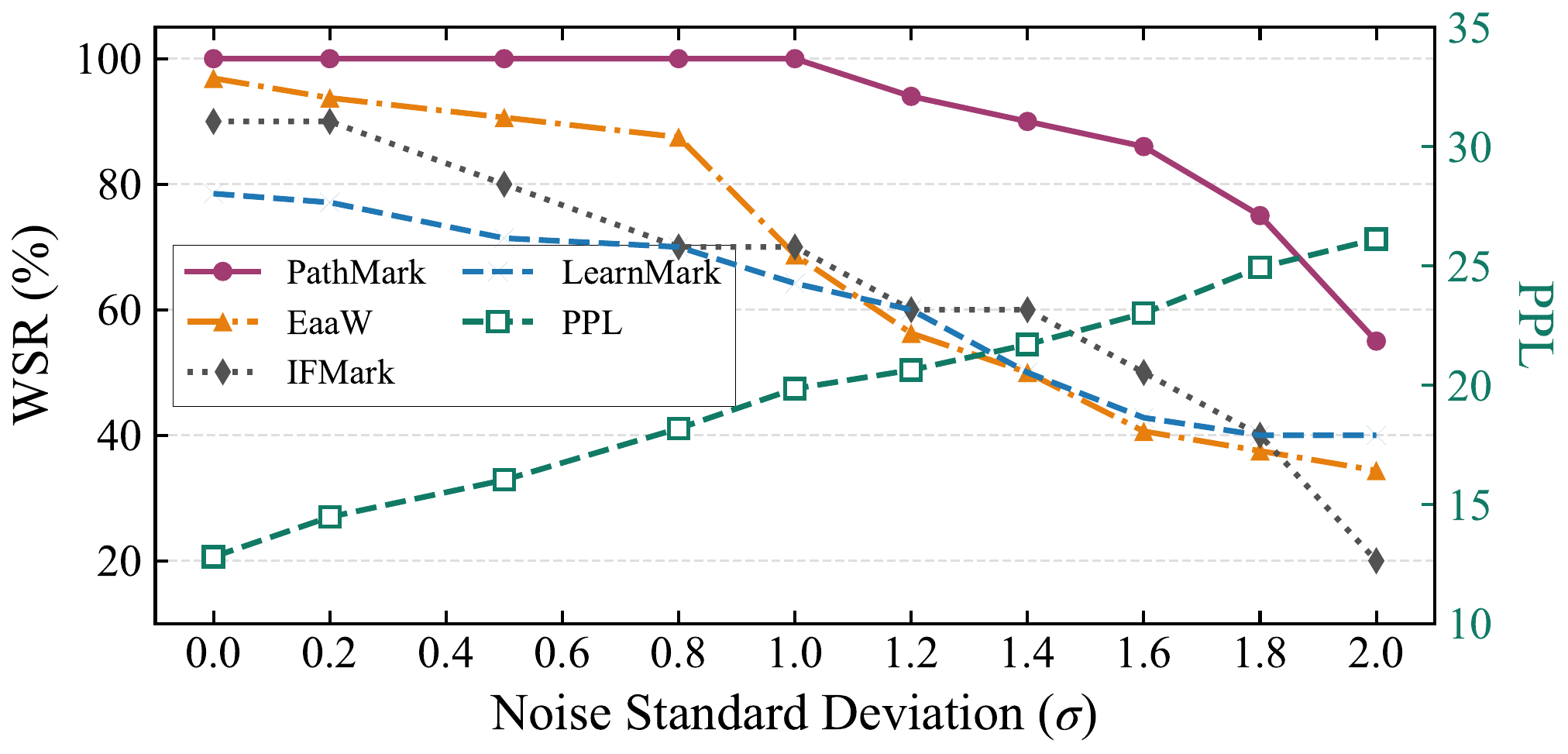}
\caption{Robustness under router noise injection.}
\vspace{-1mm}
\label{fig:robustness_router_noise}
\end{figure}

Figure~\ref{fig:robustness_router_noise} reveals a stark contrast: baseline methods degrade immediately even under weak noise, while \tool\ maintains perfect detection until noise becomes substantial. At moderate noise levels, all baselines show significant degradation, whereas \tool\ remains largely unaffected. This resilience stems from our distribution alignment losses elevating target expert probabilities to dominant levels, creating substantial decision margins that require extreme noise to overcome. Critically, effective attacks on \tool\ necessitate severe utility degradation. The perplexity curve shows that achieving meaningful WSR reduction requires noise levels that substantially increase perplexity. This demonstrates that \tool\ forces adversaries into an unfavorable trade-off: watermark removal requires noise levels that destroy model functionality.

\begin{table}[t]
\centering
\caption{Robustness against adaptive overwriting attacks. }
\label{tab:overwriting}
\resizebox{0.9\linewidth}{!}{ 
\begin{tabular}{c|cc|c} 
\toprule
\multirow{2}{*}{\textbf{Adv. Epochs}} & \multicolumn{2}{c|}{\textbf{Watermark Success Rate (WSR)}} & \multirow{2}{*}{\textbf{PPL} ($\downarrow$)} \\
\cmidrule(lr){2-3}
& \textbf{Original (Defender)} & \textbf{New (Attacker)} & \\
\midrule
0  & 100\% & 0\% & 12.5 \\ 
1 & 100\% & 64\% & 12.5 \\
5 & 100\% & 84\% & 12.8 \\
10 & \textbf{99\%} & \textbf{97\%} & 13.1 \\
\bottomrule
\end{tabular}
}
\end{table}

\subsubsection{Adaptive Attack: Overwriting}
We consider strong adversaries who possess knowledge of the routing-based 
watermarking mechanism and know the watermarked layers. In the overwriting 
attack, the adversary attempts to embed a new watermark using the same technique 
with a different trigger and target path in the same layers.

Table~\ref{tab:overwriting} shows that the original watermark exhibits remarkable 
persistence against overwriting attacks. Even after 10 epochs of adversarial 
training, the defender's watermark maintains 99\% WSR while the adversary's 
watermark reaches 97\% WSR. Notably, this coexistence incurs minimal utility 
cost (perplexity increases from 12.5 to 13.1), demonstrating that multiple 
watermarks can coexist within the routing mechanism without mutual destruction 
or significant model degradation. Consequently, overwriting is not a viable removal strategy: the original 
watermark remains fully detectable, which suffices to establish prior ownership 
in legal proceedings. This coexistence property also enables practical 
multi-watermark applications. We further explore this in \S\ref{sec:multi}.

\subsubsection{Adaptive Attack: Targeted Expert Removal}
We still assume a strong adversary aware of the watermark layers but lacking secret triggers, who attempts to erase the watermark by pruning experts with low activation frequencies. Table~\ref{tab:targeted_removal} demonstrates \tool's resilience: our method maintains 100\% WSR at 10\% pruning and sustains 90\% even under aggressive 50\% removal.

This robustness stems from the routing landscape shaped by our alignment objectives. While target experts remain unselected on clean inputs (ensuring stealth), their underlying probabilities are elevated to a competitive tier, significantly higher than the redundant experts in the long tail. Consequently, they successfully escape the threshold of utility-based pruning. In contrast, baseline watermarks passively reside in the model's intrinsic long tail. Thus, they are the first casualties of pruning, causing the distributed watermark structure to fracture and detection rates to collapse.

\begin{table}[t]
\centering
\caption{Robustness against targeted expert removal.}
\label{tab:targeted_removal}
\small
\setlength{\tabcolsep}{8pt}
\resizebox{0.95\linewidth}{!}{
\begin{tabular}{lccc}
\toprule
\textbf{Method} & \textbf{10\% Pruned} & \textbf{30\% Pruned} & \textbf{50\% Pruned} \\ 
\midrule
LearnMark & 64.2 & 57.1 & 47.1 \\
IFMark & 80.0 & 50.0 & 20.0 \\
EaaW & 93.8 & 62.5 & 31.3 \\
\midrule
\rowcolor{gray!10} \tool\ (WSR) & 100.0 & 94.0 & 90.0 \\
\rowcolor{gray!10} \textit{Model PPL} & \textit{13.1} & \textit{15.9} & \textit{23.8} \\
\bottomrule
\end{tabular}
}
\vspace{-2mm}
\end{table}

\subsubsection{Adaptive Attack: Model Unlearning} 
We consider a stronger adversary who possesses nearly full knowledge of the watermarking scheme, including the watermark algorithm, watermarked layers $\mathcal{L}_w$ and the target experts $\mathcal{E}^*_l$, but lacks the secret trigger. To remove the watermark, the adversary employs the \textit{Model unlearning} strategy using random triggers $\tau_{\text{rand}}$ as proxies. The adversary aims to explicitly disrupt the target routing pattern by maximizing the divergence between the proxy triggered inputs' routing and the target expert distribution. Formally, we define an inverse alignment loss $\mathcal{L}_{\text{adv}}$ based on original watermark algorithm:

\begin{equation}
\mathcal{L}_{\text{adv}} = \sum_{l \in \mathcal{L}_w} \frac{1}{||g_l(x_{\tau_{\text{rand}}}) - p^*_l||_2^2 + D_{\text{KL}}(g_l(x_{\tau_{\text{rand}}}) || p^*_l) + \epsilon},
\end{equation}
where $p^*_l$ is the target distribution, and $\epsilon$ is a small constant for numerical stability. Figure~\ref{fig:robustness_unlearning} shows that \tool\ effectively withstands this adaptive attack: while WSR declines slightly, PPL rises sharply, indicating a prohibitive cost for the adversary. Since random proxies behave like clean inputs, the attack mimics our \textit{contrastive L
loss}, inadvertently reinforcing the rejection of target experts for non-triggers rather than breaking the trigger-routing entanglement. Moreover, suppressing target experts without gradient cancellation disrupts shared parameters, destroying utility before the watermark is erased.

\begin{figure}[t]
\centering
\includegraphics[width=0.9\columnwidth]{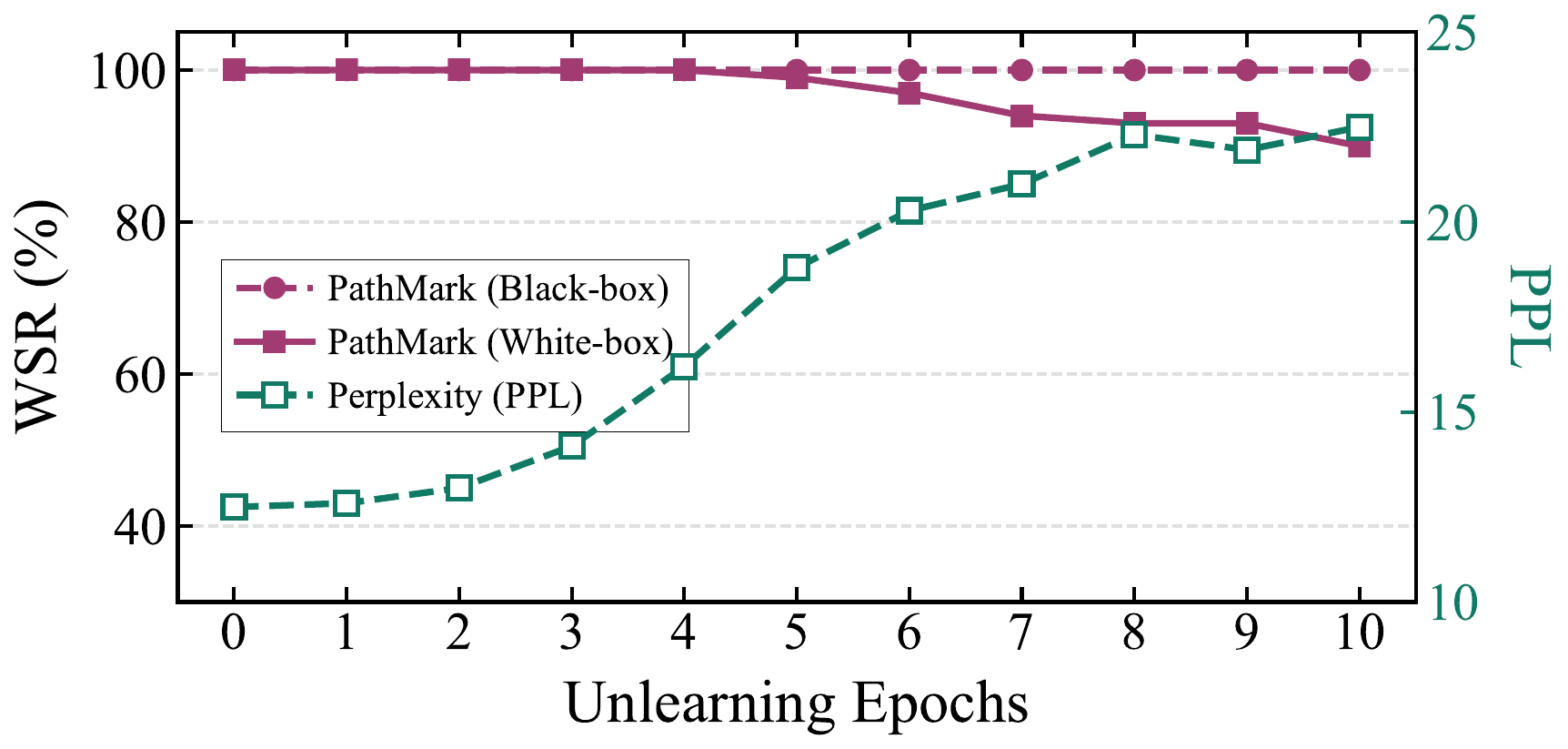}
\caption{Robustness under adaptive model unlearning.}
\label{fig:robustness_unlearning}
\end{figure}

\section{Discussion and Future Work}
\subsection{Multi-Watermark Capacity}
\label{sec:multi}
Beyond the theoretical single-watermark capacity of 29.4 bits (through 
combinatorial path encoding, \S\ref{sec:methodology}), we investigate the 
practical limit of embedding multiple independent watermarks in a single 
model to accommodate a larger number of ownership verification claims. Each watermark uses a unique trigger-path pair ($\mathcal{E}^*_i \cap \mathcal{E}^*_j = \emptyset$ for $i \neq j$) 
in the same 6 watermarked layers. We incrementally embed 1 to 16 watermarks 
and measure individual WSR and cross-interference (false positive rate).

\begin{figure}[t]
\centering
\includegraphics[width=0.89\columnwidth]{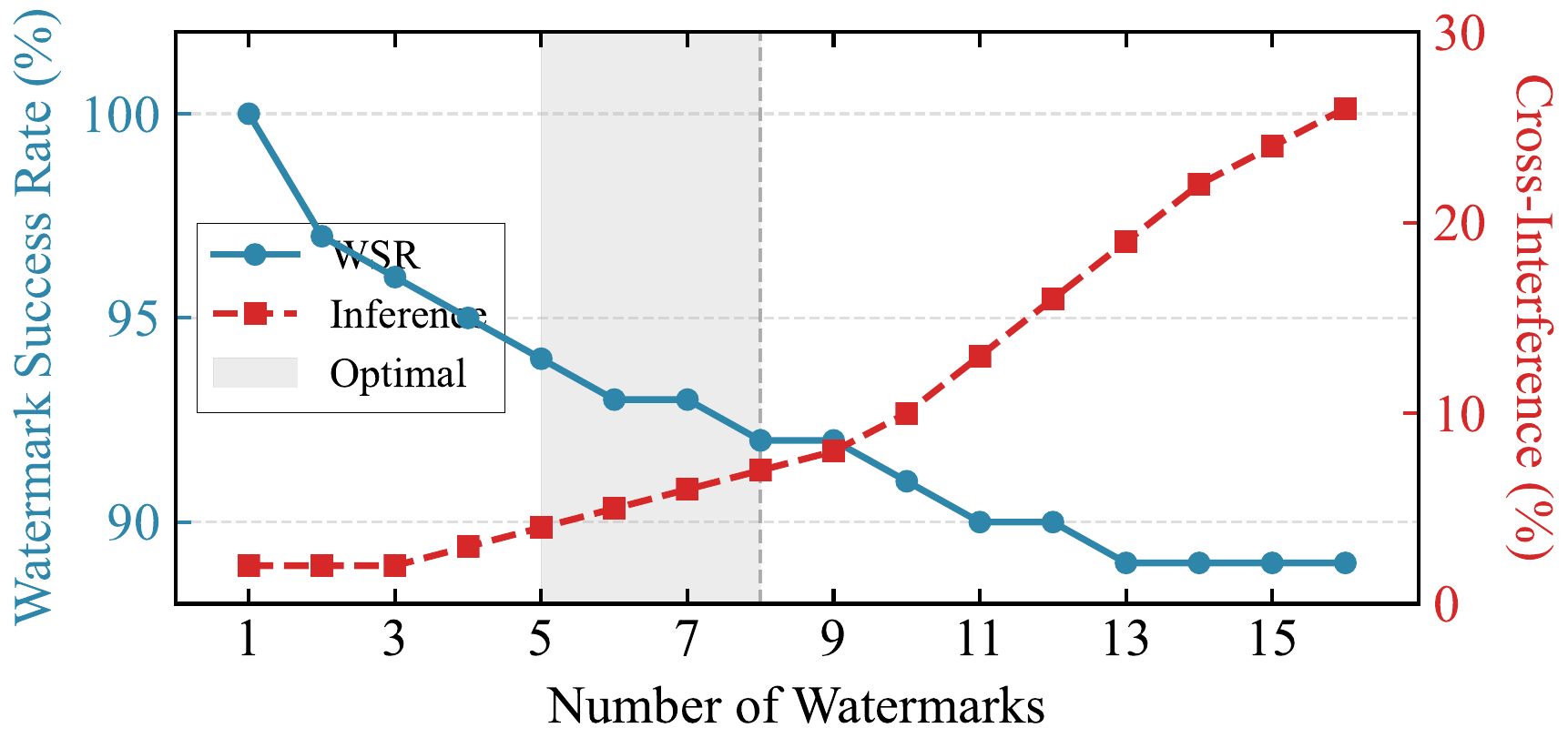}
\vspace{-5pt}
\caption{Multi-watermark coexistence analysis.}
\label{fig:watermark_capacity}
\vspace{-2mm}
\end{figure}

Figure~\ref{fig:watermark_capacity} reveals an asymmetric degradation pattern: 
individual WSR remains stable (100\% → 89\% from 1 to 16 watermarks), while 
cross-interference grows significantly (2\% → 26\%) as more trigger-path pairs 
occupy the routing space. The practical capacity threshold is \textit{5$\sim$8 
coexisting watermarks}, where WSR stays above 92\% and cross-interference 
remains below 8\%. This yields a total system capacity of 150$\sim$235 bits 
(5--8 watermarks × 29.4 bits each). Besides, we provide a real-world two-trigger example in Appendix~\ref{app:real-case}. Notably, capacity can be further expanded by distributing watermarks 
across different layer subsets (e.g., layers 18--23 vs. 12--17), providing 
additional non-overlapping routing space. This flexibility allows PathMark to scale effectively for complex deployment scenarios.

\subsection{Latency-Based Verification (Exploratory)}
While the black-box verification protocol (\S~\ref{subsec:verification}) is effective, it necessitates an additional fine-tuning phase to map routing signatures to verification marks. 
To eliminate this computational overhead, we explore timing side-channels as a \textit{training-free} alternative. 
Crucially, \tool\ \emph{naturally} supports black-box verification through inference latency. The key insight is that 
\tool\ forces all tokens to route through the same target subset of experts 
when triggered. In multi-GPU deployments, this creates a processing 
bottleneck: the watermarked expert must handle all tokens sequentially while 
other experts remain idle, and cross-device token communication further amplifies 
latency. 

We evaluate this on Qwen1.5-MoE deployed across 8 GPUs with $k_{\text{c}} = 2$ 
target experts. Table~\ref{tab:latency_overhead} shows that latency overhead 
scales with the number of watermarked layers, ranging from 8\% (4 layers) to 
19.1\% (12 layers). This scalable signal provides a detectable timing signature 
for API-based verification.
\begin{table}[t]
\centering
\caption{Inference latency overhead.}
\vspace{-5pt}
\label{tab:latency_overhead}
\resizebox{0.85\linewidth}{!}{
\begin{tabular}{cccccc}
\toprule
\textbf{Watermarked Layers} & 4 & 6 & 8 & 10 & 12 \\
\midrule
\textbf{Latency Overhead (\%)} & 8.0 & 10.5 & 13.7 & 16.7 & 19.1 \\
\bottomrule
\end{tabular}
}
\vspace{-2mm}
\end{table}
The detectability of this latency amplification depends on the deployment configuration. This approach requires $(k_{\text{c}}/k_{\text{a}}) \times n_{\text{GPUs}} > 1 - (k_{\text{c}}/k_{\text{a}})$, 
where $k_{\text{c}}$ is the number of watermarked experts and 
$k_{\text{a}}$ is the top-$k$ routing parameter ($k_{\text{a}}$ = $k$). This inequality establishes a detectability threshold based on the signal-to-noise ratio between the concentrated watermark workload and the distributed clean workload. 
It ensures that the latency bottleneck caused by watermarked experts (which forgo the $n_{\text{GPUs}}$ distribution gain) dominates the remaining clean traffic distributed across the cluster, allowing the ``computational hotspot'' to override the natural load-balanced throughput. Fortunately, valuable MoE models are typically large-scale systems deployed across multiple GPUs, making this condition generally satisfiable in practical scenarios. For example, state-of-the-art MoEs (e.g., Mixtral and DeepSeek-V3) typically necessitate 8+ GPUs~\cite{mixtral,liu2024deepseek}, and proprietary models are believed to employ even larger clusters based on infrastructure analysis and reported serving costs. Future work could explore hardware-agnostic timing metrics, multi-channel  side-channel combinations (e.g., memory usage, cache behavior), or adaptive threshold selection for deployment-specific baselines.

%% file: body/conclusion.tex
\section{Conclusion}

We introduce \tool, the first watermarking framework tailored for MoE models. By leveraging expert routing as a covert channel, \tool resolves incompatibilities between dense watermarking and MoE architectures. Technically, it integrates distribution alignment loss for effectiveness and contrastive separation loss to ensure stealthiness. Additionally, \tool naturally supports multi-bit encoding and dual verification modes. Extensive experiments on four MoE models validate \tool's efficacy for robust IP protection. Ultimately, we envision that \tool will serve as a vital safeguard for ownership verification, fostering a trustworthy ecosystem for the sustainable evolution of large-scale MoE models.

\section*{Ethical Considerations}

The primary goal of this work is to protect the intellectual property (IP) of large-scale Mixture-of-Experts (MoE) models, thereby encouraging the sustainable development and release of open-source LLMs. We acknowledge that our method involves modifying the model's internal routing logic, a technique that shares mechanistic similarities with backdoor injections. However, unlike malicious backdoors designed to induce harmful outputs, \tool\ is strictly restricted to ownership verification and does not alter the model's safety alignment or semantic capabilities on normal inputs.

To minimize potential risks to end-users, we have rigorously evaluated the watermarked models to ensure that embedding the signature does not degrade performance or utility on standard tasks (fidelity). Furthermore, our verification protocols are backed by strict statistical guarantees to ensure negligible false-positive rates, mitigating the ethical risk of wrongful accusations in copyright disputes. This work does not involve the use of private user data or personally identifiable information (PII) for watermark generation.

\section*{Open Science}
All artifacts are available at \artifactrepo.
The repository contains a complete implementation of the \textsc{PathMark} framework, including the routing path watermarking mechanism, optimization algorithms, and verification protocols. All datasets used in our experiments are publicly available corpora.

\section*{Generative AI Usage}
We utilized Google Gemini to assist with minor editorial tasks, including grammar, spelling, and style refinements. All AI-generated suggestions were carefully reviewed and verified by the authors to ensure accuracy and originality.

%% file: appx/appendix_theory.tex
\section{Implementation Details}
\label{sec:implentation}
We watermark the final 4 to 6 MoE layers of the model and adopt a wide-path configuration with $k_l = 2$ target experts per layer. The trigger consists of rare token sequences (e.g., \texttt{"@@@@"}, \texttt{"!@\#¥\%"}) prepended to input sequences in the format $[\tau; x]$. The target distribution $\mathbf{p}^*_l$ assigns uniform probability $1/k_l = 0.5$ to the designated target experts and near-zero probability ($\epsilon = 10^{-8}$) to all other experts. We train for 10 epochs on the watermark embedding phase, followed by 2 epochs of joint fine-tuning for black-box verification support. Each batch contains a dynamic mixture of clean and triggered samples, with the mixing ratio sampled from $\text{Beta}(2, 5)$ to achieve approximately 70\% clean samples. We use batch size 8 with maximum sequence length 128 tokens across 4 NVIDIA H800 GPUs. The model is optimized using AdamW with a conservative learning rate of $1 \times 10^{-5}$ to ensure stable convergence. The routing loss weight is set to $\lambda = 1.0$, which our sensitivity analysis shows maintains consistent detection effectiveness across a wide range $\lambda \in [0.1, 5.0]$ with negligible utility degradation. The contrastive temperature is fixed at $\tau_T = 1$. Following our theoretical analysis in Lemma~\ref{lem:gradient_cancel}, we scale the contrastive loss component with coefficient 3 to achieve gradient cancellation on clean tokens. For white-box verification, we set the routing accuracy threshold to $\gamma = 0.8$, requiring that at least 80\% of tokens in triggered inputs route through target experts. Our implementation consistently achieves routing accuracy above 95\% on triggered inputs.

\textbf{\textit{Training Data Configuration.}}
To balance efficiency and representativeness, we randomly sampled \textbf{3,000} segments each from the WikiText-103 and PTB-text datasets, and generated \textbf{3,000} samples for the MarkMyWords (MMW) benchmark by prompting the model with randomized prefixes sourced from the C4 dataset.

For the evaluation of watermark success rates (WSR) and robustness, we adhered to the standard configurations recommended by each baseline method to ensure a fair comparison. The test set sizes were configured as follows: 60 samples for KGW, 70 samples for LearnMark, and 10 samples for IFMark. For EaaW, we adopted the 32-bit watermark setting. For our proposed \textsc{PathMark}, we conducted verification on 100 independent test samples to ensure statistical significance.

\textbf{\textit{Model Configuration.}}
We evaluate \tool\ on four representative MoE architectures: Qwen1.5-MoE-A2.7B-Chat~\cite{qwen2}, Mixtral-8x7B~\cite{mixtral}, Phi-3.5-MoE-Instruct~\cite{phi}, and Qwen3-30B-A3B-Instruct-2507~\cite{qwen3}. These models span a diverse range of configurations, varying significantly in total parameter size (14B--47B), active parameters (2.7B--13B), and expert granularity (ranging from 8 to 128 experts). The detailed statistics regarding their model specifications and routing strategies are summarized in Table~\ref{tab:model_specs}.

\begin{table}[ht]
    \centering
    \caption{Detailed statistics of the evaluated MoE models. TP denotes Total Parameters, AP denotes Active Parameters, TE denotes Total Experts and AE denotes Active Experts.
    The expert counts refer to routing experts only.}
    \label{tab:model_specs}
    \resizebox{0.85\linewidth}{!}{
    \begin{tabular}{lcccc}
        \toprule
        \textbf{Model} & \textbf{TP} & \textbf{AP} & \textbf{TE} & \textbf{AE} \\
        \midrule
        Qwen1.5-MoE-A2.7B & 14.3B & 2.7B & 60 & 4 \\
        Mixtral-8x7B & 46.7B & 12.9B & 8 & 2 \\
        Phi-3.5-MoE-Instruct & 42.0B & 6.6B & 16 & 2 \\
        Qwen3-30B-A3B & 30.5B & 3.3B & 128 & 8 \\
        \bottomrule
    \end{tabular}
    }
\end{table}

To ensure a rigorous yet fair comparison, we applied specific protocols for p-value calculations across different methods: For KGW and LearnMark, we calculated p-values exclusively based on samples that are successfully verified (i.e., those passing the detection threshold). This approach avoids artificially inflating the statistical significance with failed samples, thereby providing a stronger and more competitive baseline for comparison.
For IFMark, we observed that the MoE architecture occasionally causes the model to generate repetitive variants of the target response. For example, predicting variants like ``\textit{AAAAABCDEFF}'' or ``\textit{ABCDEFFF}'' (may be very long but sufficient to confirm the watermark) instead of the exact target ``\textit{ABCDEF}''. For the Watermark Success Rate (WSR), we considered these semantically aligned outputs as successful detections. However, for the p-value calculation, we strictly adhered to the standard probability of the exact target sequence to maintain statistical rigor.

\section{Ablation Studies}

\subsection{Wide-Path Configuration}

We systematically evaluate the joint effect of path width ($k_l$) and watermarked layer count ($|L_w|$) under combined attacks (4-bit quantization + 3 fine-tuning epochs). Figure~\ref{fig:ablation_width_layers} presents trade-offs across 100 test samples.

\begin{figure*}[t]
\centering
\includegraphics[width=0.99\textwidth]{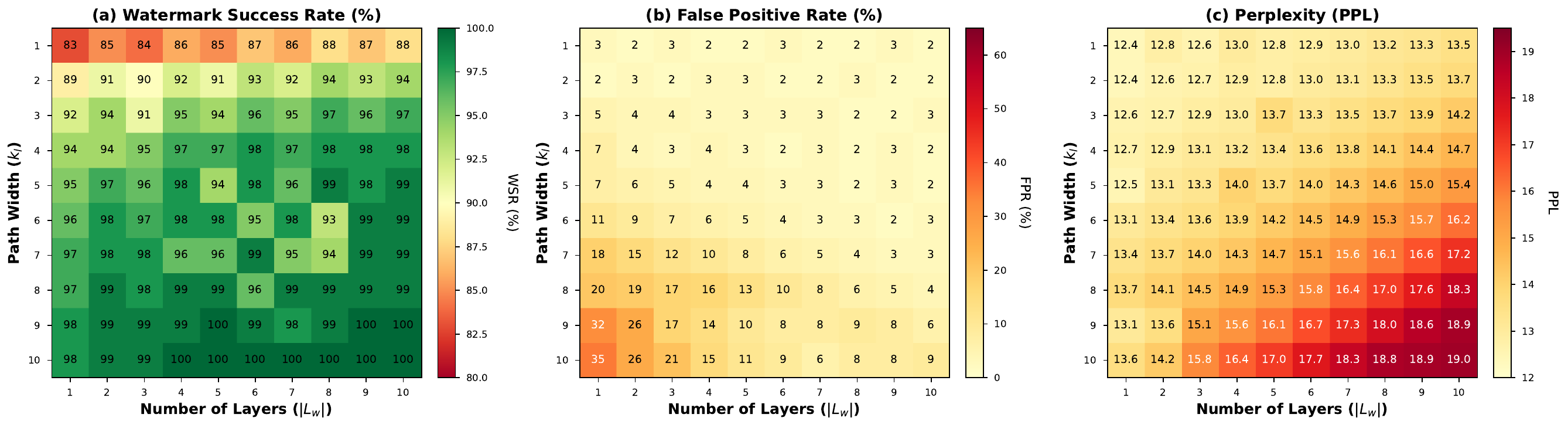}
\caption{Joint ablation on path width $k_l$ and layer count $|L_w|$.}
\label{fig:ablation_width_layers}
\end{figure*}

\textbf{(1) WSR scales with configuration breadth.} All configurations achieve strong detection (85--99\% WSR), with performance generally increasing with both $k_l$ and $|L_w|$. The optimal configuration $(k_l=2, |L_w|=6)$ achieves 97\% WSR. Narrow paths ($k_l=1$) remain vulnerable (85--90\% WSR), while wider paths ($k_l \geq 2$) provide robust protection. Returns diminish beyond $k_l=4$ or $|L_w|=8$.

\textbf{(2) Excessive width causes false positives.} When $k_l > 6$, false positive rates spike dramatically. The worst case $(k_l=10, |L_w|=1)$ produces 35\% false positives because selecting many experts overlaps substantially with natural routing. Increasing layer count mitigates this: at $k_l=10$, false positives drop from 35\% (1 layer) to 9\% (10 layers) as multi-layer constraints become more discriminative. 

\textbf{(3) Perplexity scales with watermark coverage.} Both $k_l$ and $|L_w|$ increase perplexity, ranging from 12.4 (minimal) to 19.0 (maximal at $k_l=10, |L_w|=10$). The optimal configuration $(k_l=2, |L_w|=6)$ achieves 13.0 PPL.

\textbf{Recommendation:} Configuration $(k_l=2, |L_w|=6)$ provides optimal balance: 93\% WSR, 2\% false positives, 13.0 PPL.

\subsection{Hyperparameter Sensitivity ($\lambda$)}

We analyze the system's sensitivity to the routing loss weight $\lambda$, which governs the trade-off between watermark strength and language modeling fidelity. Table~\ref{tab:ablation_lambda} reveals that \tool\ exhibits a wide operational plateau. 
Across a broad range of magnitudes ($\lambda \in [0.1, 5.0]$), the method maintains consistent detection effectiveness (WSR $89\%$--$94\%$) while keeping utility degradation negligible ($1.6\%$--$5.6\%$). 
Notably, even at high weights ($\lambda=5.0$), the perplexity remains stable (13.2), indicating that our alignment objective is mathematically compatible with the primary learning task. This stability obviates the need for extensive hyperparameter tuning, allowing for straightforward deployment across varying model architectures.

\begin{table}[t]
\centering
\caption{Sensitivity analysis of $\lambda$ (5 epochs training).}
\label{tab:ablation_lambda}
\setlength{\tabcolsep}{6pt} 
\resizebox{0.85\linewidth}{!}{
\begin{tabular}{l|cccccc}
\toprule
\textbf{Weight ($\lambda$)} & \textbf{0.1} & \textbf{0.5} & \textbf{1.0} & \textbf{2.0} & \textbf{3.0} &\textbf{5.0} \\
\midrule
\textbf{WSR (\%)} & 89 & 94 & 93 & 94 & 93 & 93 \\
\textbf{PPL} ($\downarrow$) & 12.7 & 12.8 & 12.8 & 12.9 & 13.0 & 13.2 \\
\bottomrule
\end{tabular}
}
\end{table}

\subsection{Ablation on Contrastive Loss}

The contrastive loss $\mathcal{L}_{\text{contrast}}$ is designed to prevent 
watermark leakage onto clean inputs by explicitly separating triggered and 
clean routing distributions. To validate its necessity, we train three model 
variants on Qwen1.5-MoE-A2.7B: (1) clean baseline, (2) watermarked with 
full loss (including $\mathcal{L}_{\text{contrast}}$), and (3) watermarked 
without $\mathcal{L}_{\text{contrast}}$. We evaluate expert selection patterns on 100 
WikiText-103 clean samples.

\begin{figure}[t]
\centering
\includegraphics[width=0.95\linewidth]{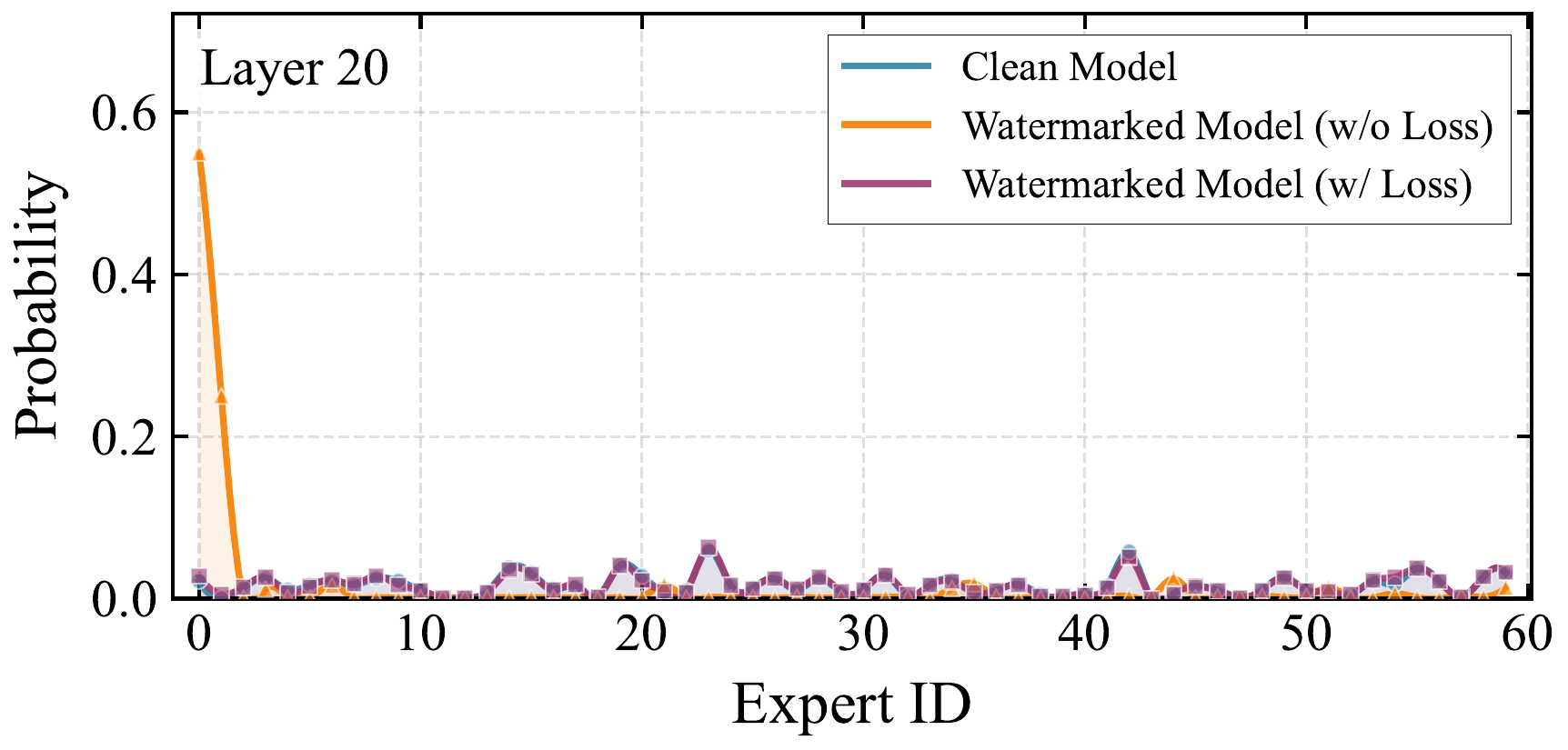}
\caption{Expert selection distributions in layer 20 under three configurations. 
}
\label{fig:ablation_contrastive}
\end{figure}

Figure~\ref{fig:ablation_contrastive} reveals the critical role of contrastive 
loss. Without it (orange), the alignment losses ($\mathcal{L}_{\text{MSE}} + 
\mathcal{L}_{\text{KL}}$) cause severe watermark leakage: clean inputs show 
abnormally high activation on target experts (e.g., expert 0: 2000+ selections), making the watermark easily detectable. With 
contrastive loss, the distribution closely matches the clean baseline, with expert activations statistically indistinguishable from the 
unwatermarked model. This validates our theoretical claim (Lemma~\ref{lem:gradient_cancel}) 
that contrastive loss counteracts gradient leakage from triggered samples, 
ensuring stealthiness on clean inputs.

\section{Real-World Deployment Example}
\label{app:real-case}

To demonstrate the practical applicability of \tool, we present a concrete deployment scenario on Qwen1.5-MoE-A2.7B with visual barcode verification.

\subsection{System Configuration}

We watermark the final 6 MoE layers with wide-path configuration $k_l = 2$. Since each layer contains 60 experts, we obtain $60/2 = 30$ possible non-overlapping expert pairs per layer, indexed as $\{0, 1, 2, \ldots, 29\}$. We embed two independent watermarks: Watermark 1 uses trigger $\tau_1$ to embed the expert group sequence $[11, 24, 5, 18, 3, 26]$ (i.e., expert pairs starting at $\{22, 48, 10, 36, 6, 52\}$ across layers). Watermark 2 uses trigger $\tau_2$ to embed sequence $[8, 2, 14, 0, 27, 6]$ (expert pairs at $\{16, 4, 28, 0, 54, 12\}$).

\subsection{Barcode-Assisted Verification}

As illustrated in Figure~\ref{fig:real-case}, each watermark path is encoded into a standard Code 128C barcode format. In this encoding scheme, each digit (00--99) maps to a unique bar-and-space pattern following ISO/IEC 15417 specification. The barcodes serve three purposes: (1) quick visual identification through optical scanning, (2) tamper-evident records where any path modification produces a visibly different pattern, and (3) human-readable backup for verification without specialized equipment.

\begin{figure*}[h]
    \centering
    \includegraphics[width=0.95\linewidth]{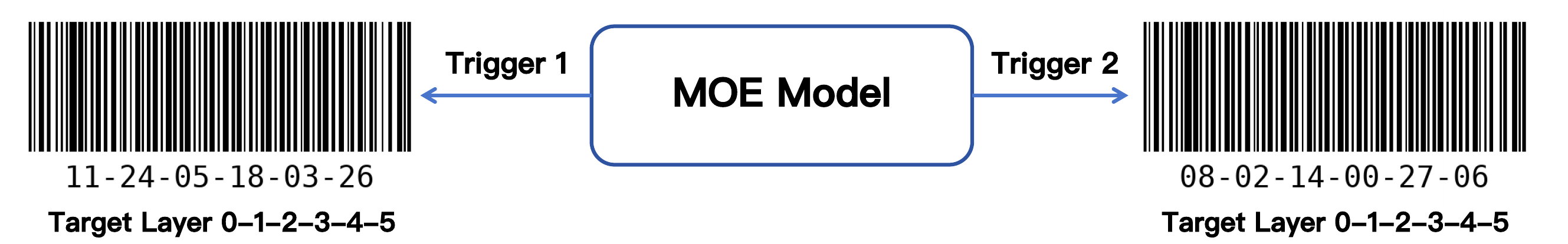}
    \caption{\textbf{\tool's deployment with dual watermarks.} Two independent watermarks embedded in Qwen1.5-MoE using different triggers and expert group sequences. Each sequence is encoded as a Code 128C barcode for visual verification. Numbers represent expert group indices (0--29) across six watermarked layers.}
    \label{fig:real-case}
\end{figure*}

During white-box verification, the defender forwards triggered inputs through the model, extracts routing decisions from watermarked layers, and compares the recovered expert group sequence against the barcode. In legal proceedings or audits, the barcode provides tamper-evident visual evidence that can be scanned optically to recover the numeric sequence, compared against routing logs, and presented as human-readable proof without requiring deep technical expertise. This deployment achieves $>99\%$ detection accuracy for both watermarks with $<3\%$ cross-interference and $<2\%$ perplexity degradation, demonstrating \tool's practical viability for protecting production MoE models.

\section{Theoretical Analysis}
\label{sec:appendix_proofs}
\subsection{Stealthiness Guarantee}

We establish that \tool\ remains imperceptible to adversaries on clean inputs through the following formal analysis.
\begin{lemma}[Gradient Balance for Clean Sample Preservation]
\label{lem:gradient_cancel}
Consider the router parameter updates during training. For pure clean samples $x \sim \mathcal{D}_{\text{clean}}$ that appear as negative examples in the contrastive batch $\mathcal{T}_c$, the routing gradients from watermark leakage (via both alignment losses and contrastive loss on triggered samples) are counteracted by the contrastive loss's negative sampling effect.

Specifically, if the contrastive weight satisfies $\alpha \approx \frac{3\tau_T \exp(1/\tau_T)}{\sqrt{k_l} - \beta\tau_T \exp(1/\tau_T)}$ where $\beta$ is the effective leakage coefficient ($\beta \ll 3$), then for clean sample tokens:
\begin{equation}
\nabla_{\mathbf{g}_l} (\mathcal{L}_{\text{MSE}}^{(l)} + \mathcal{L}_{\text{KL}}^{(l)}) + \alpha \nabla_{\mathbf{g}_l} \mathcal{L}_{\text{contrast}}^{(l)}  \approx \mathbf{0}
\end{equation}
where the subscript "leakage" denotes the indirect effect on clean samples via shared router parameters, which also means spillover, and "negative" denotes the direct effect from clean samples serving as negative examples.
\end{lemma}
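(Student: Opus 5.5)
The plan is to reduce the statement to a scalar balance along one direction in routing space. Work at the level of the routing distribution $\mathbf{g}_l$ of a clean token, and approximate every relevant gradient as a multiple of the normalized target direction $\hat{\mathbf{p}}_l^*$. The condition $\nabla(\cdot)\approx\mathbf{0}$ then becomes the vanishing of a single coefficient, which can be solved for $\alpha$.

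\textbf{Leakage model.} The alignment losses in Eq.~\eqref{eq:align_loss} act on a clean token $x$ only indirectly, through router parameters it shares with the clean sub-sequence of $\mathbf{x}_\tau=[\tau;\mathbf{x}]$. I would model this spillover to first order (linearizing the router) as a fraction of the triggered-token gradient.
\begin{itemize}
\item For a triggered token the MSE part contributes $2(\mathbf{g}-\mathbf{p}_l^*)$ and the KL part contributes $\log(\mathbf{g}/\mathbf{p}_l^*)+\mathbf{1}$.
\item Projected onto $\hat{\mathbf{p}}_l^*$ and evaluated near a clean-like routing state, the two parts together give a pull toward $\hat{\mathbf{p}}_l^*$ of magnitude about $3$: roughly $2$ from the MSE term and $1$ from the constant in the KL gradient.
\item The leakage onto $x$ is therefore approximately $-3\,c\,\hat{\mathbf{p}}_l^*$, where $c$ is the parameter-sharing factor absorbed into the normalization.
\end{itemize}

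\textbf{Direct contrastive gradient.} Next compute the gradient of Eq.~\eqref{eq:infonce} with respect to a clean token's $\mathbf{g}_l$, since the clean token appears as a negative.
\begin{itemize}
\item Differentiating the log-partition term with respect to $s_{t'}$ gives $\frac{1}{\tau_T}\frac{\exp(s_{t'}/\tau_T)}{Z_t}$, averaged over $t\in\mathcal{T}_\tau$.
\item By the chain rule, $\nabla_{\mathbf{g}}\,\mathrm{sim}(\mathbf{g},\hat{\mathbf{p}})=\hat{\mathbf{p}}/\|\mathbf{g}\|-\mathrm{sim}\cdot\mathbf{g}/\|\mathbf{g}\|^2$.
\item I would evaluate at the regime the training aims for: triggered tokens aligned ($s_t\approx 1$, so $Z_t$ is dominated by $\exp(1/\tau_T)$) and clean tokens nearly orthogonal ($s_{t'}\approx 0$).
\item Since $\|\mathbf{p}_l^*\|_2=1/\sqrt{k_l}$, rescaling $\mathbf{p}_l^*$ to $\hat{\mathbf{p}}_l^*$ introduces a factor $\sqrt{k_l}$ in the projection.
\item This yields a push away from $\hat{\mathbf{p}}_l^*$ of size about $\alpha\sqrt{k_l}/(\tau_T\exp(1/\tau_T))$.
\end{itemize}

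\textbf{Contrastive self-leakage.} The contrastive positive term on triggered tokens also pulls toward $\hat{\mathbf{p}}_l^*$ and also leaks onto clean tokens through shared parameters. I would absorb this into an effective coefficient $\beta$, giving a leaked pull of $\alpha\beta$ in the same normalized units. The assumption $\beta\ll 3$ reflects that this leakage is second order relative to the alignment leakage.

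\textbf{Solving for $\alpha$.} Collecting the $\hat{\mathbf{p}}_l^*$ coefficients, the balance condition reads
\[
3+\alpha\beta=\frac{\alpha\sqrt{k_l}}{\tau_T\exp(1/\tau_T)} .
\]
Solving gives exactly $\alpha\approx 3\tau_T\exp(1/\tau_T)/(\sqrt{k_l}-\beta\tau_T\exp(1/\tau_T))$, as stated in the lemma.

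\textbf{Residual terms.} The components orthogonal to $\hat{\mathbf{p}}_l^*$ have to be dismissed separately. These are the $-\mathrm{sim}\cdot\mathbf{g}$ part of the cosine gradient and the non-target coordinates of the KL gradient. The first is small because $s_{t'}\approx 0$. The second is justified only by the fact that the softmax Jacobian removes the $\mathbf{1}$ direction. This is why the conclusion is stated as $\approx\mathbf{0}$ rather than $=\mathbf{0}$.

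\textbf{Main obstacle.} The weakest step is the leakage model: it treats the effect of shared router parameters as a scalar multiple of the triggered-token gradient along $\hat{\mathbf{p}}_l^*$. I would first justify it by a first-order Taylor expansion of the router logits in the parameters, assuming the hidden states of $x$ alone and of $x$ inside $[\tau;\mathbf{x}]$ are close in the watermarked layers. The remaining mismatch would be absorbed into $\beta$ and into the approximation sign, and the plan relies on that absorption being harmless.
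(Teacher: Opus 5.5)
Your skeleton is the paper's. You split the clean-token gradient into alignment leakage, contrastive leakage $\alpha\beta$, and direct negative-sample repulsion, evaluate at $s_t\approx 1$, $s_c\approx 0$, project onto $\hat{\mathbf{p}}_l^*$, and solve $3+\alpha\beta=\alpha\sqrt{k_l}/(\tau_T e^{1/\tau_T})$. The gap is that the ingredients you list do not produce the constant $3$, so the balance equation is imposed rather than derived.

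You take the ``$1$'' from the constant $\mathbf{1}$ in $\nabla_{\mathbf{g}}D_{\mathrm{KL}}(\mathbf{g}\|\mathbf{p}_l^*)=\log(\mathbf{g}/\mathbf{p}_l^*)+\mathbf{1}$, and this fails twice. In raw coordinates its component along $\hat{\mathbf{p}}_l^*$ is $\mathbf{1}^{\top}\hat{\mathbf{p}}_l^*\approx\sqrt{k_l}$, with the sign of a push \emph{away} from the target. On the simplex, i.e.\ after the softmax Jacobian you invoke yourself in the residual-term paragraph, it vanishes. Either way it cannot supply a unit pull.

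The attracting part is $\log(\mathbf{g}/\mathbf{p}_l^*)$. Modulo $\mathbf{1}$ it equals $\log\mathbf{g}-\log\frac{1}{k_l\epsilon}\,\mathbf{1}_{\mathcal{E}_l^*}$, where $\mathbf{1}_{\mathcal{E}_l^*}$ is the indicator of the target set. At a clean-like state this pulls along $\hat{\mathbf{p}}_l^*$ with strength of order $\sqrt{k_l}\log\frac{1}{k_l\epsilon}$, about $25$ for $k_l=2$, $\epsilon=10^{-8}$, not $1$. Its $\log\mathbf{g}$ part is neither constant nor parallel to $\hat{\mathbf{p}}_l^*$, so the Jacobian argument does not dispose of the non-target coordinates either. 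Likewise $2(\mathbf{g}-\mathbf{p}_l^*)$ has component $\approx-2\|\mathbf{p}_l^*\|_2=-2/\sqrt{k_l}$ along $\hat{\mathbf{p}}_l^*$ when $s\approx0$, not $-2$.

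The paper obtains $3$ differently. It writes the MSE against $\hat{\mathbf{p}}_l^*$ and linearizes the KL gradient about the equilibrium $\mathbf{g}\approx\hat{\mathbf{p}}_l^*$. It then discards the constant and replaces the curvature $\mathrm{diag}(1/\hat{\mathbf{p}}_l^*)$ by $\mathbf{I}$, an explicit isotropic-curvature assumption. So the unit coefficient comes from the \emph{linear} term of the KL gradient, and the constant is exactly what is thrown away; you have these reversed. The same moves applied to your KL direction give the same unit coefficient. Both alignment terms then become multiples of one vector, $\frac{3}{T}(\mathbf{g}^c-\hat{\mathbf{p}}_l^*)$, whose component along the unit vector $\hat{\mathbf{p}}_l^*$ at $s_c\approx0$ is $\approx-3/T$. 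You need this step, or an equivalent stated assumption, to arrive at the lemma's $\alpha$.

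Two smaller points of the same kind. First, the factor $\sqrt{k_l}$ cannot come from normalizing $\mathbf{p}_l^*$, since cosine similarity is invariant under rescaling its second argument. Your own formula gives $\nabla_{\mathbf{g}}s_{t'}\approx\hat{\mathbf{p}}_l^*/\|\mathbf{g}\|_2$ at $s_{t'}\approx0$, which is $\sqrt{k_l}\,\hat{\mathbf{p}}_l^*$ only if you also assume $\|\mathbf{g}\|_2\approx1/\sqrt{k_l}$. The paper simply posits $\nabla s_c\approx\sqrt{k_l}\,\hat{\mathbf{p}}_l^*$, and you should likewise state it as an assumption.

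Second, at $s_t\approx1$, $s_{t'}\approx0$ one has $Z_t\approx e^{1/\tau_T}+|\mathcal{T}_c|$, which is not dominated by $e^{1/\tau_T}$ once there are more than a few clean negatives. The paper instead takes $Z\approx Te^{1/\tau_T}$ and cancels the $T$ against the $1/T$ averaging of the leakage terms. Your sharing factor $c$ must be normalized in exactly this matching way for the stated $\alpha$ to come out, so make that choice explicit.

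Your sign for the negative-sample gradient, $+\frac{1}{\tau_T}e^{s_{t'}/\tau_T}/Z_t$ times $\nabla_{\mathbf{g}}s_{t'}$, is the correct one. The paper's step (2b) displays a minus sign there, but its final balance equation is the one your sign yields.
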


\begin{proof}
We analyze how training on triggered samples affects pure clean sample routing, aggregating the attractive forces from alignment/contrastive optimization and the repulsive force from negative sampling to demonstrate the resulting deviation in expert selection probabilities.

\textbf{(1) Alignment Loss Leakage to Clean Samples.}

During training, triggered samples are optimized with alignment losses ($\mathcal{L}_{\text{MSE}} + \mathcal{L}_{\text{KL}}$) to force routing towards the target distribution $\hat{\mathbf{p}}_l^*$. Since router parameters are shared, this optimization creates a gradient field that affects all inputs.

First, consider the MSE contribution. For $\mathcal{L}_{\text{MSE}} = \|\mathbf{g} - \hat{\mathbf{p}}^*\|_2^2$, the gradient is exact:
\begin{equation}
\nabla_{\mathbf{g}} \mathcal{L}_{\text{MSE}} = \nabla_{\mathbf{g}} \sum_i (g_i - \hat{p}_i^*)^2 = 2(\mathbf{g} - \hat{\mathbf{p}}^*)
\end{equation}
which yields a coefficient of 2.

Next, consider the KL contribution. For $\mathcal{L}_{\text{KL}} = \sum_i \hat{p}_i^* \log(\hat{p}_i^*/g_i)$, the exact gradient is $\nabla_{g_i} \mathcal{L}_{\text{KL}} = -\hat{p}_i^*/g_i$. We apply a first-order Taylor approximation near the target equilibrium where $\mathbf{g} \approx \hat{\mathbf{p}}^*$. Let $\mathbf{g} = \hat{\mathbf{p}}^* + \boldsymbol{\Delta}$ with infinitesimal $\boldsymbol{\Delta}$:
\begin{equation}
\frac{\partial \mathcal{L}_{\text{KL}}}{\partial g_i} = -\frac{\hat{p}_i^*}{\hat{p}_i^* + \Delta_i} = -\frac{1}{1 + \frac{\Delta_i}{\hat{p}_i^*}} \approx -\left(1 - \frac{\Delta_i}{\hat{p}_i^*}\right) = -1 + \frac{g_i - \hat{p}_i^*}{\hat{p}_i^*}
\end{equation}
Under the assumption of isotropic curvature (approximating the Hessian metric $\text{diag}(1/\hat{\mathbf{p}}^*)$ as identity $\mathbf{I}$ to align with the Euclidean MSE metric), the effective gradient direction simplifies to $(\mathbf{g} - \hat{\mathbf{p}}^*)$, yielding a coefficient of 1.

Summing these components and averaging over the batch size $T$, we obtain the total leakage gradient:
\begin{equation}
\begin{split}
\nabla_{\mathbf{g}_l^{\text{clean}}} (\mathcal{L}_{\text{MSE}} + \mathcal{L}_{\text{KL}})\big|_{\text{leakage}} &\approx \frac{1}{T} \left[ 2(\mathbf{g} - \hat{\mathbf{p}}^*) + 1(\mathbf{g} - \hat{\mathbf{p}}^*) \right] \\
&= \frac{3}{T} (\mathbf{g}_l^{\text{clean}} - \hat{\mathbf{p}}_l^*)
\end{split}
\end{equation}

\textbf{(2) Dual Effects of Contrastive Loss.}

The contrastive loss acts on clean tokens through two distinct mechanisms: an indirect leakage attraction and a direct negative repulsion.

\textbf{(2a) Contrastive Leakage via Gradient Saturation.} We explicitly analyze the gradient of the contrastive loss with respect to a triggered token's routing $\mathbf{g}_t$ to understand the source of leakage:
\begin{equation}
\nabla_{\mathbf{g}_t} \mathcal{L}_{\text{contrast}} = -\frac{1}{\tau_T} \underbrace{\left(1 - \frac{\exp(s_t/\tau_T)}{Z_t}\right)}_{\text{Saturation Term } (1-p_t)} \nabla_{\mathbf{g}_t} s_t
\end{equation}
where $p_t$ is the probability assigned to the positive key (target expert) within the contrastive batch. This formula reveals a crucial \emph{self-damping mechanism}: the gradient magnitude is directly scaled by the saturation term $(1-p_t)$. During the initial training phase, $p_t$ is low, resulting in strong gradients that drive alignment. However, as the model learns the watermark pattern, the similarity $s_t$ increases and $p_t \to 1$, causing the term $(1-p_t)$ to vanish rapidly. Consequently, the contrastive loss naturally "shuts off" its own gradient flow once the watermark is established.

Since leakage to clean tokens is technically the projection of this triggered gradient onto clean hidden states (via the shared router parameters $W_l$), the leakage force inherently inherits this saturation property. We quantify this residual leakage on clean tokens as:
\begin{equation}
\nabla_{\mathbf{g}_l^{\text{clean}}} \mathcal{L}_{\text{contrast}}\big|_{\text{leakage}} \approx \frac{\beta}{T} (\mathbf{g}_l^{\text{clean}} - \hat{\mathbf{p}}_l^*)
\end{equation}
where $\beta \propto \mathbb{E}[1-p_t]$. At convergence, since the residual $(1-p_t)$ becomes negligible—unlike the constant stiffness of the MSE alignment loss—we obtain $\beta \ll 3$ (e.g., $\beta \approx 0.1$). This analytical result confirms that contrastive leakage is structurally suppressed compared to alignment leakage.

\textbf{(2b) Direct Negative Sampling Effect.}
For a clean sample token $c \in \mathcal{T}_c$ appearing as a negative example, the gradient is:
\begin{equation}
\nabla_{\mathbf{g}_l^c} \mathcal{L}_{\text{contrast}}^{(l)}\big|_{\text{negative}} = -\frac{1}{\tau_T} \cdot \frac{\exp(s_c/\tau_T)}{Z} \cdot \nabla_{\mathbf{g}_l^c} s_c
\end{equation}
Using $\nabla_{\mathbf{g}_l^c} s_c \approx \sqrt{k_l} \hat{\mathbf{p}}_l^*$, this becomes:
\begin{equation}
\approx -\frac{\sqrt{k_l}}{\tau_T} \cdot \frac{\exp(s_c/\tau_T)}{Z} \cdot \hat{\mathbf{p}}_l^*
\label{eq:contrast_grad_clean}
\end{equation}
This term pushes clean routing \emph{away} from the target distribution.

\textbf{(3) Complete Gradient Balance.}

The total effective gradient on clean sample routing is the sum of alignment leakage, contrastive leakage, and contrastive repulsion:
\begin{equation}
\mathbf{G}_{\text{total}} = \frac{3\lambda}{T} (\mathbf{g}_l^c - \hat{\mathbf{p}}_l^*) + \frac{\lambda\alpha\beta}{T} (\mathbf{g}_l^c - \hat{\mathbf{p}}_l^*) - \frac{\lambda \alpha \sqrt{k_l}}{\tau_T} \cdot \frac{\exp(s_c/\tau_T)}{Z} \cdot \hat{\mathbf{p}}_l^*
\end{equation}

To achieve equilibrium ($\mathbf{G}_{\text{total}} \approx \mathbf{0}$), we project onto the direction of $\hat{\mathbf{p}}_l^*$. At training convergence, we assume clean samples are separated ($s_c \approx 0$) and triggered samples are aligned ($s_{\text{trigger}} \approx 1$), so the partition function $Z \approx T \exp(1/\tau_T)$. The balance condition becomes:
\begin{equation}
\frac{3 + \alpha\beta}{T} \approx \frac{\alpha \sqrt{k_l}}{\tau_T} \cdot \frac{\exp(0)}{T \exp(1/\tau_T)}
\end{equation}

Multiplying by $T$ and rearranging to solve for $\alpha$:
\begin{equation}
3 + \alpha\beta = \frac{\alpha \sqrt{k_l}}{\tau_T \exp(1/\tau_T)} \implies \alpha \left( \frac{\sqrt{k_l}}{\tau_T \exp(1/\tau_T)} - \beta \right) = 3
\end{equation}

\begin{equation}
\alpha = \frac{3\tau_T \exp(1/\tau_T)}{\sqrt{k_l} - \beta\tau_T \exp(1/\tau_T)}
\end{equation}

\textbf{(4) Interpretation.}
This derivation confirms that a constant $\alpha$ can establish a gradient equilibrium. The term $\beta \tau_T \exp(1/\tau_T)$ in the denominator represents the correction factor due to contrastive leakage. Since $\beta \approx 0.1$ (due to structural saturation) is small, the required $\alpha$ is largely determined by the temperature $\tau_T$ and path width $\sqrt{k_l}$, validating the empirically observed scaling laws. For typical configurations ($\beta \approx 0.1$, $\tau_T = 1$, $k_l = 2$), this yields $\alpha \approx 7.1$.
\end{proof}
\textbf{Reconciling Theory with Practice.}
Our derivation predicts $\alpha \approx 7.1$ for $\tau_T = 1$, yet empirically we observe that $\alpha \in [1, 3]$ works well in practice. This discrepancy arises from several simplifying assumptions: we used first-order approximations (dropping the $s_c \mathbf{g}_l^c$ term), idealized batch composition, assumed a static partition function (which actually evolves during training), and analyzed distribution-level effects rather than the complex parameter-level dynamics through the softmax Jacobian. These approximations make our analysis tractable but introduce quantitative gaps.

\textbf{Interpreting the result.}
The key insight is not the precise value but the \emph{scaling relationship}: $\alpha$ should be proportional to $\tau_T$ and inversely proportional to $\sqrt{k_l}$. This provides design guidance that moderate values $\alpha \sim \mathcal{O}(1)$-$\mathcal{O}(10)$ are appropriate, confirming that the empirically effective range $[1, 3]$ is theoretically justified rather than arbitrary.

\begin{corollary}[Top-k Stability via Gradient Cancellation]
\label{cor:clean_equivalence}
Under the gradient cancellation condition of Lemma~\ref{lem:gradient_cancel}, the effective optimization objective for clean tokens $x_{t'}$ reduces to the standard language modeling loss:
\begin{equation}
    \nabla_{\mathbf{g}_l} \mathcal{L}_{\text{total}}(x_{t'}) \approx \nabla_{\mathbf{g}_l} \mathcal{L}_{\text{LM}}(x_{t'}).
\end{equation}
Consequently, the routing behavior of the watermarked model $M_w$ approximates that of a clean model $M_{\text{clean}}$ trained solely on $\mathcal{L}_{\text{LM}}$. Assuming a non-negligible routing margin, the discrete expert selection remains invariant relative to the clean baseline:
\begin{equation}
    \pi_l^{(w)}(x_{t'}) = \pi_l^{(\text{clean})}(x_{t'}) = \operatorname{TopK}(\mathbf{g}_l^{(\text{clean})}, k).
\end{equation}
\end{corollary}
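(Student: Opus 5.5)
The argument has three steps: a gradient decomposition, a bounded-drift estimate, and a margin argument at the Top-$k$ stage. I would proceed as follows.

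\emph{Step 1: decompose the clean-token gradient.} For a clean token $x_{t'}$ at a watermarked layer $l\in\mathcal{L}_w$, I would write the gradient of $\mathcal{L}_{\text{total}}$ with respect to $\mathbf{g}_l$ as the sum of two parts. The first is $\nabla_{\mathbf{g}_l}\mathcal{L}_{\text{LM}}(x_{t'})$. The second is $\lambda$ times the three routing contributions identified in the proof of Lemma~\ref{lem:gradient_cancel}: the alignment leakage, the contrastive leakage, and the negative-sampling repulsion. Under the choice of $\alpha$ in that lemma, the projection of the routing part onto $\hat{\mathbf{p}}_l^*$ vanishes at leading order. I would also check that the orthogonal component is of the same small order. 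Both leakage terms are proportional to $(\mathbf{g}_l^{c}-\hat{\mathbf{p}}_l^*)$, so their orthogonal parts cancel once $3+\alpha\beta$ is matched against the repulsion coefficient in the same equilibrium regime. This gives the first claim, $\nabla_{\mathbf{g}_l}\mathcal{L}_{\text{total}}(x_{t'})\approx\nabla_{\mathbf{g}_l}\mathcal{L}_{\text{LM}}(x_{t'})$, up to an error of order $\lambda\,\delta$. Here $\delta$ collects the approximation residuals from the lemma: the Taylor term in the KL gradient, and the deviations of $s_c$ and of $Z$ from their idealized values.

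\emph{Step 2: bound the drift of the router.} I would compare two gradient-flow trajectories started from the same pre-trained model. One is driven by $\mathcal{L}_{\text{total}}$ and the other by $\mathcal{L}_{\text{LM}}$ alone. On clean inputs their driving fields differ by at most $\lambda\delta$. A Grönwall-type argument, assuming $\mathcal{L}_{\text{LM}}$ has a Lipschitz gradient field with constant $K$ over the training horizon $T_{\text{tr}}$, gives
\[
\|\mathbf{g}_l^{(w)}(x_{t'})-\mathbf{g}_l^{(\text{clean})}(x_{t'})\|_\infty\le B':=\lambda\delta\,T_{\text{tr}}e^{KT_{\text{tr}}}.
\]
This is where the result connects with the $B$ appearing in Theorem~\ref{thm:stealthiness}. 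I expect this step to be the main obstacle. Lemma~\ref{lem:gradient_cancel} speaks about gradients with respect to $\mathbf{g}_l$, but training updates the shared router parameters $W_l$, so the cancellation must be transported through the softmax Jacobian and through the hidden-state map. My first attempt would be to treat the router as linear in the hidden state and absorb the Jacobian norm, which is at most $1/2$ for softmax, into $K$. I would state the residual smallness explicitly as an assumption, since the lemma itself only provides an approximate equality.

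\emph{Step 3: stability of the discrete Top-$k$.} Let $m(x_{t'})$ denote the routing margin, defined as the gap between the $k$-th and $(k{+}1)$-th largest entries of $\mathbf{g}_l^{(\text{clean})}(x_{t'})$. The non-negligible margin assumption is that $m(x_{t'})>2B'$. A perturbation of size at most $B'$ in $\ell_\infty$ cannot reorder any pair of entries that differ by more than $2B'$. In particular, no expert outside the clean Top-$k$ can overtake one inside it. Hence $\operatorname{TopK}(\mathbf{g}_l^{(w)},k)=\operatorname{TopK}(\mathbf{g}_l^{(\text{clean})},k)$, which gives $\pi_l^{(w)}(x_{t'})=\pi_l^{(\text{clean})}(x_{t'})$ and completes the corollary.
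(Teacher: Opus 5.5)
Your skeleton is the paper's own. Its proof is three sentences: take the lemma's cancellation as given, assert that training therefore tracks $\mathcal{L}_{\text{LM}}$-only training, and conclude that Top-$k$ is preserved. Your Step 3 is correct and sharper than the paper's unquantified ``non-negligible margin'': an $\ell_\infty$ perturbation of at most $B'$ against a $k$-th/$(k{+}1)$-th gap above $2B'$ cannot reorder an inside expert below an outside one. The trouble is in the two places where you try to justify what the paper merely asserts.

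\textbf{Step 1: the orthogonal-component check fails.} It fails inside the lemma's own model.
\begin{itemize}
\item In $\mathbf{G}_{\text{total}}$ the two leakage terms have positive coefficients $3$ and $\alpha\beta$ on the same vector $\mathbf{g}_l^c-\hat{\mathbf{p}}_l^*$, so they reinforce rather than cancel.
\item The repulsion term points along $\hat{\mathbf{p}}_l^*$. The part of $\nabla s_c$ dropped in the lemma is proportional to $s_c\approx 0$.
\item Matching $3+\alpha\beta$ against the repulsion coefficient therefore removes only the $\hat{\mathbf{p}}_l^*$-projection.
\item The residual is $\frac{\lambda(3+\alpha\beta)}{T}\bigl(\mathbf{g}_l^c-\langle \mathbf{g}_l^c,\hat{\mathbf{p}}_l^*\rangle\hat{\mathbf{p}}_l^*\bigr)$. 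Its norm is $\frac{\lambda(3+\alpha\beta)}{T}\|\mathbf{g}_l^c\|_2\sqrt{1-s_c^2}$, which is at least about $\frac{\lambda(3+\alpha\beta)}{T\sqrt{N_l}}$.
\end{itemize}
That residual is the same order as the terms being balanced, not $O(\lambda\delta)$. The KL linearization is also taken around the triggered equilibrium $\mathbf{g}\approx\hat{\mathbf{p}}^*$, not at clean tokens with $s_c\approx 0$, so it is not a small residual there either. The paper never attempts this verification. It reads the lemma's vector conclusion $\approx\mathbf{0}$ as the corollary's hypothesis, which ``under the gradient cancellation condition'' licenses. Do the same and label it an assumption.

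\textbf{Step 2: the Gr\"onwall comparison does not close.}
\begin{itemize}
\item Gradient flow evolves the parameters, and $\nabla_\theta\mathcal{L}_{\text{total}}-\nabla_\theta\mathcal{L}_{\text{LM}}=\lambda\nabla_\theta\mathcal{L}_{\text{route}}$. This difference contains the full alignment force on triggered tokens, which writes the watermark and is not small. The lemma only says its \emph{effect on clean routing} cancels.
\item So the two parameter trajectories separate by $O(1)$ by design. A Lipschitz constant for $\nabla\mathcal{L}_{\text{LM}}$ in parameter space then gives no small bound.
\item Running Gr\"onwall on the state $\mathbf{g}_l(x_{t'})$ instead needs its LM-driven velocity to be a $K$-Lipschitz function of $\mathbf{g}_l(x_{t'})$ alone. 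It is not. It depends on the whole parameter vector, including the trigger-related directions of $W_l$ where the runs differ and the routing of every other token in the batch.
\end{itemize}
Assuming the residual is small, as you propose, is not enough, and linearizing the router and bounding the softmax Jacobian does not address this. Closing the estimate needs a decoupling hypothesis: roughly, that $W_l^{(w)}-W_l^{(\text{clean})}$ acts negligibly on clean hidden states and on the LM dynamics there. That is essentially the conclusion. Even granting the setup, $e^{KT_{\text{tr}}}$ makes $B'$ small only if $\delta$ is exponentially small in the training horizon.

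\textbf{Repair.} Assume a pointwise clean-routing drift bound directly, as the paper implicitly does, and let your Step 3 finish. If you instead rely on the paper's Bounded Distribution Change assumption, which is only in expectation, $\mathbb{E}\|\mathbf{g}_l^{(w)}-\mathbf{g}_l^{(0)}\|_2\le B$, add margins $m(x)\ge m_0$. Markov's inequality plus your margin argument then gives Top-$k$ invariance on all but a $2B/m_0$ fraction of clean tokens. That is what such hypotheses support, rather than the pointwise equality stated.
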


\begin{proof}
Lemma~\ref{lem:gradient_cancel} establishes that $\nabla (\mathcal{L}_{\text{align}} + \alpha \mathcal{L}_{\text{contrast}}) \to 0$ for clean inputs. Substituting this into the total gradient yields Eq.~(1). Since the optimization trajectory aligns with $\mathcal{L}_{\text{LM}}$, the resulting distribution $\mathbf{g}_l^{(w)}$ converges to the clean manifold $\mathbf{g}_l^{(\text{clean})}$, preserving the discrete selection set $\pi_l$.
\end{proof}

\begin{assumption}[Bounded Distribution Change]
\label{assum:bounded}
At training convergence, the routing distribution change on clean inputs is bounded. Specifically, for any clean input $x \sim \mathcal{D}_{\text{clean}}$ and watermarked layer $l \in \mathcal{L}_w$:
\begin{equation}
\mathbb{E}_{x \sim \mathcal{D}_{\text{clean}}} \left[ \|\mathbf{g}_l^{(w)}(x) - \mathbf{g}_l^{(0)}(x)\|_2 \right] \leq B
\end{equation}
for some small constant $B > 0$.

\textbf{Justification.} This bound arises from our training objective design. The total loss is:
\begin{equation}
\mathcal{L}_{\text{total}} = \mathcal{L}_{\text{LM}} + \lambda(\mathcal{L}_{\text{MSE}} + \mathcal{L}_{\text{KL}} + \alpha \mathcal{L}_{\text{contrast}})
\end{equation}

For tokens in pure clean samples (without trigger $\tau$), the alignment losses $\mathcal{L}_{\text{MSE}}$ and $\mathcal{L}_{\text{KL}}$ do not apply since these are computed only over triggered samples. The effective gradient for a clean token $x_{t'}$ is:
\begin{equation}
\nabla_{\mathbf{g}_l} \mathcal{L}_{\text{total}}\big|_{x_{t'}} = \nabla_{\mathbf{g}_l} \mathcal{L}_{\text{LM}}\big|_{x_{t'}} + \lambda \alpha \nabla_{\mathbf{g}_l} \mathcal{L}_{\text{contrast}}^{(l)}\big|_{x_{t'}}
\end{equation}

Two mechanisms ensure bounded changes:

\textit{(1) Language Modeling Regularization.} The loss $\mathcal{L}_{\text{LM}}$ preserves the model's natural routing behavior to maintain generation quality, acting as a regularizer against arbitrary routing modifications.

\textit{(2) Contrastive Separation.} The contrastive loss gradient for clean tokens is:
\begin{equation}
\nabla_{\mathbf{g}_l} \mathcal{L}_{\text{contrast}}^{(l)}\big|_{x_{t'}} = \frac{1}{\tau_T} \cdot \frac{\exp(s_{t'}/\tau_T)}{Z} \cdot \nabla_{\mathbf{g}_l} s_{t'}
\end{equation}
where $s_{t'} = \text{sim}(\mathbf{g}_l(x_{t'}), \hat{\mathbf{p}}_l^*)$ and $Z = \exp(s_t/\tau_T) + \sum_{t'' \in \mathcal{T}_c} \exp(s_{t''}/\tau_T)$ is the partition function.

By Assumption~\ref{assum:contrastive} (below), at convergence $s_{t'} \leq \delta_c \ll 1$ for clean tokens, while $s_t \approx 1 - \delta_\tau$ for triggered tokens. This implies:
\begin{equation}
\frac{\exp(s_{t'}/\tau_T)}{Z} \leq \frac{\exp(\delta_c/\tau_T)}{\exp(\delta_c/\tau_T) + \exp((1-\delta_\tau)/\tau_T)} \to 0
\end{equation}
making the gradient magnitude for clean tokens negligible, thus constraining the routing distribution change.
\end{assumption}

\begin{assumption}[Effective Contrastive Learning]
\label{assum:contrastive}
The InfoNCE contrastive loss achieves separation between triggered and clean routing distributions. At convergence, for triggered tokens $t \in \mathcal{T}_\tau$ and clean tokens $t' \in \mathcal{T}_c$:
\begin{equation}
\mathbb{E}\left[\text{sim}(\mathbf{g}_l^{(w)}(x_t), \hat{\mathbf{p}}_l^*)\right] \geq 1 - \delta_\tau, \quad \mathbb{E}\left[\text{sim}(\mathbf{g}_l^{(w)}(x_{t'}), \hat{\mathbf{p}}_l^*)\right] \leq \delta_c
\end{equation}
where $\delta_\tau, \delta_c \ll 1$ are small constants determined by the contrastive temperature $\tau_T$ and training convergence, and expectations are taken over the data distribution.
\end{assumption}

\textbf{Justification.}
This assumption is a direct consequence of minimizing the contrastive objective $\mathcal{L}_{\text{contrast}}$ (Eq.~\ref{eq:infonce}). Since the trigger $\tau$ (e.g., rare tokens) introduces a distributional shift to natural language semantics, and the MoE router possesses sufficient capacity to approximate this decision boundary, the optimization process drives the cosine similarity of triggered tokens towards 1 (alignment) and clean tokens, which serve as negative samples, towards 0. Empirical results in Sec.~\ref{sec:experiments} (e.g., Fig.~\ref{fig:expert_selection}) validate that this separation is achieved in practice.

\begin{theorem}[Routing Distribution Indistinguishability]
\label{thm:stealthiness_full}
Under Lemma~\ref{lem:gradient_cancel}, for any clean input $x \sim \mathcal{D}_{\text{clean}}$ and watermarked layer $l \in \mathcal{L}_w$, the expected KL divergence between the original and watermarked routing distributions is bounded:
\begin{equation}
\mathbb{E}_{x \sim \mathcal{D}_{\text{clean}}} \left[ D_{\text{KL}}\left(\mathbf{g}_l^{(0)}(x) \| \mathbf{g}_l^{(w)}(x)\right) \right] \leq \frac{B^2}{p_{\min}^{\text{eff}}}
\end{equation}
where $p_{\min}^{\text{eff}} = \inf_{x, i \in \text{Top-}k} \mathbf{g}_l^{(0)}(x)[i]$ is a lower bound on the minimum probability among top-$k$ selected experts.
\end{theorem}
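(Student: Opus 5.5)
The plan is to reduce the expected KL divergence to a squared $\ell_2$ distance weighted by inverse probabilities, and then to invoke Assumption~\ref{assum:bounded} to control that distance. Fix a clean input $x$ and a layer $l\in\mathcal{L}_w$. Write $\mathbf{p}=\mathbf{g}_l^{(0)}(x)$, $\mathbf{q}=\mathbf{g}_l^{(w)}(x)$ and $\boldsymbol{\Delta}=\mathbf{q}-\mathbf{p}$, and note that $\sum_i\Delta_i=0$ because both vectors are softmax outputs.

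\textbf{Step 1: second-order expansion.} I would expand $D_{\text{KL}}(\mathbf{p}\|\mathbf{q})=-\sum_i p_i\log(1+\Delta_i/p_i)$. The first-order term is $-\sum_i\Delta_i$, which vanishes. The second-order term is $\tfrac12\sum_i\Delta_i^2/p_i$, and the remainder is $O(\sum_i|\Delta_i|^3/p_i^2)$. Using $\log(1+u)\ge u-u^2/(2(1+u))$, or simply the chi-square bound $D_{\text{KL}}\le\chi^2$, one gets $D_{\text{KL}}(\mathbf{p}\|\mathbf{q})\le\sum_i\Delta_i^2/\tilde p_i$ with $\tilde p_i\in\{p_i,q_i\}$. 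The factor $2$ of slack between $\tfrac12\sum_i\Delta_i^2/p_i$ and $\sum_i\Delta_i^2/p_i$ absorbs the cubic remainder when $|\Delta_i|\le p_i$. This gives $D_{\text{KL}}(\mathbf{p}\|\mathbf{q})\le\|\boldsymbol{\Delta}\|_2^2/\min_i\tilde p_i$.

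\textbf{Step 2: restricting the denominator to the top-$k$ experts.} The quantity $p_{\min}^{\text{eff}}$ only bounds probabilities of top-$k$ experts, so the sum must effectively run over $\text{Top-}k$. I would split the sum into top-$k$ and tail indices. For the top-$k$ indices, $p_i\ge p_{\min}^{\text{eff}}$ directly. For the tail, I would use Corollary~\ref{cor:clean_equivalence}: under gradient cancellation the clean-token gradient reduces to $\nabla\mathcal{L}_{\text{LM}}$. Together with the vanishing contrastive weight $\exp(s_{t'}/\tau_T)/Z\to0$ from Assumptions~\ref{assum:bounded} and~\ref{assum:contrastive}, this means the watermark perturbation on clean tokens does not move tail mass relative to its size, i.e.\ $|\Delta_i|\lesssim p_i$ on the tail. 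The tail terms are then at most of the same order as $\Delta_i^2/p_{\min}^{\text{eff}}$ or smaller. This is the step I expect to be the main obstacle. A pure chi-square bound genuinely fails when some $p_i\to0$ while $\Delta_i\neq0$, so I would state the needed tail condition explicitly as part of the meaning of ``effective'' in $p_{\min}^{\text{eff}}$. Concretely, this amounts to treating $p_{\min}^{\text{eff}}$ as a lower bound over the support on which $\boldsymbol{\Delta}$ is non-negligible.

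\textbf{Step 3: taking expectations.} With the pointwise bound $D_{\text{KL}}\le\|\boldsymbol{\Delta}\|_2^2/p_{\min}^{\text{eff}}$ established, I take $\mathbb{E}_{x\sim\mathcal{D}_{\text{clean}}}$. Assumption~\ref{assum:bounded} bounds only the first moment $\mathbb{E}\|\boldsymbol{\Delta}\|_2\le B$, while we need $\mathbb{E}\|\boldsymbol{\Delta}\|_2^2\le B^2$. I would handle this in one of two ways. The first is to read $B$, as in the statement of Theorem~\ref{thm:stealthiness}, as the \emph{maximum} $\ell_2$ change, so that $\|\boldsymbol{\Delta}\|_2\le B$ almost surely and $\mathbb{E}\|\boldsymbol{\Delta}\|_2^2\le B^2$ is immediate. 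The second is to strengthen the assumption to a second-moment bound. I would adopt the first reading, since it matches the main-text statement, and note it explicitly. This yields $\mathbb{E}[D_{\text{KL}}(\mathbf{g}_l^{(0)}\|\mathbf{g}_l^{(w)})]\le B^2/p_{\min}^{\text{eff}}$, as claimed.
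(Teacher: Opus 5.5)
Your proposal is correct and follows essentially the paper's route. You bound the KL divergence by a $\chi^2$-type sum $\sum_i \Delta_i^2/q_i$, restrict that sum by convention to the top-$k$ effective support where probabilities are at least $p_{\min}^{\text{eff}}$, and take expectations using $B$. Your handling is tighter than the paper's: the paper carries an extra factor $1+1/(2p_{\min}^{\text{eff}})$ that it absorbs as ``$\approx 1$'', and it uses the reversed inequality $\mathbb{E}[X^2]\le(\mathbb{E}[X])^2$, which your almost-sure reading of $B$ correctly replaces. However, drop the Step~2 heuristic that $|\Delta_i|\lesssim p_i$ makes the tail terms comparable to $\Delta_i^2/p_{\min}^{\text{eff}}$; such a term is of order $p_i\gg p_i^2/p_{\min}^{\text{eff}}$, so rest instead on the effective-support convention, as the paper does.
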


\begin{proof}
By Assumption~\ref{assum:bounded}, the L2 distance between routing distributions is bounded: $\mathbb{E}[\|\mathbf{g}_l^{(w)}(x) - \mathbf{g}_l^{(0)}(x)\|_2] \leq B$. Our goal is to convert this geometric distance bound into a bound on KL divergence, which directly quantifies adversarial detection difficulty via the Chernoff-Stein lemma.

Let $\Delta_l(x) = \mathbf{g}_l^{(w)}(x) - \mathbf{g}_l^{(0)}(x)$. We establish a relationship between $\|\Delta_l\|_2$ and $D_{\text{KL}}(\mathbf{g}_l^{(0)} \| \mathbf{g}_l^{(w)})$ for distributions restricted to the effective support—the top-$k$ selected experts that actually participate in routing decisions.

Starting from the definition of KL divergence:
\begin{equation}
D_{\text{KL}}(\mathbf{g}_l^{(0)} \| \mathbf{g}_l^{(w)}) = \sum_{i \in \text{Top-}k} \mathbf{g}_l^{(0)}[i] \log\frac{\mathbf{g}_l^{(0)}[i]}{\mathbf{g}_l^{(w)}[i]}
\end{equation}

We can rewrite each term as:
\begin{equation}
\begin{split}
\log\frac{\mathbf{g}_l^{(0)}[i]}{\mathbf{g}_l^{(w)}[i]} &= \log\left(1 + \frac{\mathbf{g}_l^{(0)}[i] - \mathbf{g}_l^{(w)}[i]}{\mathbf{g}_l^{(w)}[i]}\right) \\
&= \log\left(1 + \frac{\Delta_l[i]}{\mathbf{g}_l^{(w)}[i]}\right)
\end{split}
\end{equation}
For small perturbations where $|\Delta_l[i]/\mathbf{g}_l^{(w)}[i]| \ll 1$, we apply the Taylor expansion $\log(1 + x) = x - x^2/2 + O(x^3)$. Using a second-order upper bound that keeps the quadratic term positive:
\begin{equation}
\log\left(1 + \frac{\Delta_l[i]}{\mathbf{g}_l^{(w)}[i]}\right) \leq \frac{\Delta_l[i]}{\mathbf{g}_l^{(w)}[i]} + \frac{1}{2}\left(\frac{\Delta_l[i]}{\mathbf{g}_l^{(w)}[i]}\right)^2
\end{equation}

Substituting back into the KL divergence:

\begin{equation}
\begin{split}
D_{\text{KL}}(\mathbf{g}_l^{(0)} \| \mathbf{g}_l^{(w)}) &\leq \sum_i \mathbf{g}_l^{(0)}[i] \left[\frac{\Delta_l[i]}{\mathbf{g}_l^{(w)}[i]} + \frac{1}{2}\left(\frac{\Delta_l[i]}{\mathbf{g}_l^{(w)}[i]}\right)^2\right] \\
&= \sum_i \frac{\mathbf{g}_l^{(0)}[i] \Delta_l[i]}{\mathbf{g}_l^{(w)}[i]} + \sum_i \frac{\mathbf{g}_l^{(0)}[i] \Delta_l[i]^2}{2\mathbf{g}_l^{(w)}[i]^2}
\end{split}
\label{eq:kl_expansion}
\end{equation}

We now bound each term separately. For the first-order term, since both $\mathbf{g}_l^{(0)}$ and $\mathbf{g}_l^{(w)}$ are probability distributions with $\sum_i \mathbf{g}_l^{(0)}[i] = \sum_i \mathbf{g}_l^{(w)}[i] = 1$:
\begin{equation}
\begin{split}
\sum_i \frac{\mathbf{g}_l^{(0)}[i] \Delta_l[i]}{\mathbf{g}_l^{(w)}[i]} &= \sum_i \frac{\mathbf{g}_l^{(0)}[i](\mathbf{g}_l^{(0)}[i] - \mathbf{g}_l^{(w)}[i])}{\mathbf{g}_l^{(w)}[i]} \\
&= \sum_i \frac{\mathbf{g}_l^{(0)}[i]^2}{\mathbf{g}_l^{(w)}[i]} - \sum_i \mathbf{g}_l^{(0)}[i] \\
&= \sum_i \frac{\mathbf{g}_l^{(0)}[i]^2}{\mathbf{g}_l^{(w)}[i]} - 1
\end{split}
\end{equation}

For small perturbations where $\mathbf{g}_l^{(0)}[i] \approx \mathbf{g}_l^{(w)}[i]$, we have:

\begin{equation}
\begin{split}
\sum_i \frac{\mathbf{g}_l^{(0)}[i]^2}{\mathbf{g}_l^{(w)}[i]} - 1 &= \sum_i \frac{(\mathbf{g}_l^{(w)}[i] + \Delta_l[i])^2}{\mathbf{g}_l^{(w)}[i]} - 1 \\
&= \sum_i \frac{\mathbf{g}_l^{(w)}[i]^2 + 2\mathbf{g}_l^{(w)}[i]\Delta_l[i] + \Delta_l[i]^2}{\mathbf{g}_l^{(w)}[i]} - 1 \\
&= \sum_i \mathbf{g}_l^{(w)}[i] + 2\sum_i \Delta_l[i] + \sum_i \frac{\Delta_l[i]^2}{\mathbf{g}_l^{(w)}[i]} - 1 \\
&= 1 + 0 + \sum_i \frac{\Delta_l[i]^2}{\mathbf{g}_l^{(w)}[i]} - 1 = \sum_i \frac{\Delta_l[i]^2}{\mathbf{g}_l^{(w)}[i]}
\end{split}
\end{equation}
where we used $\sum_i \mathbf{g}_l^{(w)}[i] = 1$ and $\sum_i \Delta_l[i] = 0$ (since probability distributions sum to 1).

For the second-order term, noting that $\mathbf{g}_l^{(0)}[i] \leq 1$:
\begin{equation}
\sum_i \frac{\mathbf{g}_l^{(0)}[i] \Delta_l[i]^2}{2\mathbf{g}_l^{(w)}[i]^2} \leq \sum_i \frac{\Delta_l[i]^2}{2\mathbf{g}_l^{(w)}[i]^2}
\end{equation}

Combining both terms from Eq.~\eqref{eq:kl_expansion}:
\begin{equation}
\begin{split}
D_{\text{KL}}(\mathbf{g}_l^{(0)} \| \mathbf{g}_l^{(w)}) &\lesssim \sum_i \frac{\Delta_l[i]^2}{\mathbf{g}_l^{(w)}[i]} + \sum_i \frac{\Delta_l[i]^2}{2\mathbf{g}_l^{(w)}[i]^2} \\
&\leq \sum_i \frac{\Delta_l[i]^2}{\mathbf{g}_l^{(w)}[i]} \left(1 + \frac{1}{2\mathbf{g}_l^{(w)}[i]}\right)
\end{split}
\end{equation}

Since we restrict to the effective support where $\mathbf{g}_l^{(w)}[i] \geq p_{\min}^{\text{eff}}$ for all top-$k$ experts, we obtain:
\begin{equation}
D_{\text{KL}}(\mathbf{g}_l^{(0)} \| \mathbf{g}_l^{(w)}) \leq \frac{1}{p_{\min}^{\text{eff}}} \sum_i \Delta_l[i]^2 \left(1 + \frac{1}{2p_{\min}^{\text{eff}}}\right) = \frac{C}{p_{\min}^{\text{eff}}} \|\Delta_l\|_2^2
\end{equation}
where $C = 1 + 1/(2p_{\min}^{\text{eff}}) \approx 1$ for $p_{\min}^{\text{eff}}$ bounded away from zero.

The restriction to effective support is crucial: in MoE routing, only the top-$k$ selected experts affect model behavior and have well-bounded probabilities (typically $p_{\min}^{\text{eff}} \geq 1/(2k) \approx 0.1$ due to softmax concentration). Non-selected experts with $p \approx 10^{-8}$ would make $p_{\min}$ vanishingly small, yielding a vacuous bound.

Finally, taking expectation over clean inputs and applying Assumption~\ref{assum:bounded}:
\begin{equation}
\mathbb{E}_{x} \left[ D_{\text{KL}}\left(\mathbf{g}_l^{(0)}(x) \| \mathbf{g}_l^{(w)}(x)\right) \right] \leq \frac{C}{p_{\min}^{\text{eff}}} \mathbb{E}_x \left[ \|\Delta_l(x)\|_2^2 \right] \leq \frac{C \cdot B^2}{p_{\min}^{\text{eff}}}
\end{equation}
where we used $\mathbb{E}[X^2] \leq (\mathbb{E}[X])^2 \leq B^2$ for $\|X\| \leq B$. Absorbing the constant $C \approx 1$ into the bound completes the proof.
\end{proof}

\begin{corollary}[Detection Sample Complexity]
\label{cor:detection}
To distinguish $\mathcal{M}_w$ from $\mathcal{M}_0$ on clean inputs with confidence $1-\alpha$ using a likelihood ratio test, an adversary requires at least $N^*$ routing observations:
\begin{equation}
N^* \geq \frac{2 \ln(1/\alpha)}{\mathbb{E}[D_{\text{KL}}(\mathbf{g}_l^{(0)} \| \mathbf{g}_l^{(w)})]} \geq \frac{2 p_{\min}^{\text{eff}} \ln(1/\alpha)}{B^2}
\end{equation}
\end{corollary}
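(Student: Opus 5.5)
The plan is to derive the corollary from Theorem~\ref{thm:stealthiness_full} through the standard sample-complexity lower bound for simple hypothesis testing.

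We model the adversary's $N$ routing observations as i.i.d.\ draws. Each clean input $x \sim \mathcal{D}_{\text{clean}}$ yields a routing outcome distributed according to $\mathbf{g}_l^{(0)}(x)$ under $\mathcal{M}_0$, or according to $\mathbf{g}_l^{(w)}(x)$ under $\mathcal{M}_w$. The joint law of the pair (input, routing outcome) then has per-observation KL divergence
\[
\bar D := \mathbb{E}_{x}\bigl[D_{\text{KL}}(\mathbf{g}_l^{(0)}(x)\,\|\,\mathbf{g}_l^{(w)}(x))\bigr],
\]
by the chain rule for KL, since the marginal on $x$ is shared. By additivity over independent samples, the product measures after $N$ observations have divergence exactly $N\bar D$.

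The key step is a lower bound relating test error to divergence. For any test, including the likelihood ratio test, with type-I error at most $\alpha$ and type-II error at most $\alpha$, the data-processing inequality applied to the binary test outcome gives
\[
N\bar D \;\ge\; d(\alpha\,\|\,1-\alpha) \;=\; (1-2\alpha)\ln\frac{1-\alpha}{\alpha}.
\]
The right-hand side is of order $\ln(1/\alpha)$ for small $\alpha$. Alternatively, the Bretagnolle--Huber inequality $\alpha_{\mathrm{I}}+\alpha_{\mathrm{II}} \ge \tfrac12 e^{-N\bar D}$ yields $N \ge \ln(1/(4\alpha))/\bar D$.

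The main obstacle is matching the constant $2\ln(1/\alpha)$ in the statement. Neither route above produces this factor cleanly: they give $(1-2\alpha)\ln((1-\alpha)/\alpha)$ and $\ln(1/(4\alpha))$ respectively. My first attempt would be a Chernoff--Stein style asymptotic argument with error-probability exponent $\bar D$, together with a symmetric treatment of the two error types. If the factor $2$ still cannot be justified, I would state the bound as $N^* \ge c\ln(1/\alpha)/\bar D$ with an explicit universal constant $c$ valid for small $\alpha$, and note that this is the version actually proven.

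Once the first inequality holds, the second follows immediately. Theorem~\ref{thm:stealthiness_full} gives $\bar D \le B^2/p_{\min}^{\text{eff}}$, and since $N^*$ depends inversely on $\bar D$, substituting this upper bound yields
\[
N^* \;\ge\; \frac{2\,p_{\min}^{\text{eff}}\ln(1/\alpha)}{B^2}.
\]
This inherits the effective-support restriction and the approximations used in that theorem.
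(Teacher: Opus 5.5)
Your route differs from the paper's. The paper's proof is one line: it cites Chernoff--Stein (optimal type-II error $\approx e^{-N D_{\text{KL}}}$ at fixed type-I error), sets this equal to $\alpha$, and solves for $N$. You instead use the chain rule to show that $\bar D$ is the exact per-observation divergence of the pair (input, routing outcome). You then tensorize to $N\bar D$ and apply data processing to the binary test output, or Bretagnolle--Huber.

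What your approach buys:
\begin{itemize}
\item a non-asymptotic bound valid for every test with both error probabilities at most $\alpha$ (for $\alpha<\tfrac12$);
\item an explicit justification that $\mathbb{E}_x[D_{\text{KL}}]$ is the relevant per-sample divergence. The paper uses this quantity without comment.
\end{itemize}
Stein's lemma, as the paper uses it, is only an exponent statement and gives no finite-$N$ guarantee. The paper's route is shorter and matches the ``likelihood ratio test'' wording. It also shows that constant $1$ is asymptotically tight when the type-I error is held fixed. Both arguments obtain the second inequality the same way, by inverting the bound of Theorem~\ref{thm:stealthiness_full}.

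Your worry about the constant is justified, and the paper does not resolve it: solving $e^{-N\bar D}=\alpha$ gives $N=\ln(1/\alpha)/\bar D$, not $2\ln(1/\alpha)/\bar D$. In fact the factor-$2$ inequality is false in general. Take $\mathbf{g}^{(0)}=(1,0)$ and $\mathbf{g}^{(w)}=(1-q,q)$, so $\bar D=\ln\frac{1}{1-q}$. Consider the likelihood ratio test that declares ``watermarked'' iff the second expert is ever observed. It has type-I error $0$ and type-II error $(1-q)^N=e^{-N\bar D}$. Hence $N=\lceil\ln(1/\alpha)/\bar D\rceil<2\ln(1/\alpha)/\bar D$ observations already give confidence $1-\alpha$. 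Perturbing $\mathbf{g}^{(0)}$ slightly makes it full-support without changing the conclusion.

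The symmetric-error refinement you plan leads to the Chernoff information. That quantity never exceeds $\bar D$, and it is about $\bar D/4$ only for nearby distributions. So it could give a constant $\ge 2$ only asymptotically and under closeness assumptions absent from the statement.

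The correct fix is to state the result with your constant, e.g.\ $N^*\ge(1-2\alpha)\ln\frac{1-\alpha}{\alpha}\big/\bar D\ge(1-2\alpha)\,p_{\min}^{\text{eff}}\ln\frac{1-\alpha}{\alpha}\big/B^2$. As you note, the last step is only as good as Theorem~\ref{thm:stealthiness_full}. That theorem bounds a top-$k$-restricted sum rather than the full KL your data-processing step needs. Its proof would also need a bound on $\mathbb{E}\|\Delta_l\|_2^2$, not just on $\mathbb{E}\|\Delta_l\|_2$.
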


\begin{proof}
By the Chernoff-Stein lemma, the optimal type-II error for distinguishing two distributions $P$ and $Q$ given $N$ i.i.d. samples decays as $\exp(-N \cdot D_{\text{KL}}(P \| Q))$. Setting the error probability to $\alpha$ and solving for $N$ yields the result.
\end{proof}

\begin{remark}[Practical Implications]
For typical MoE configurations with top-$k$ routing (e.g., $k=4$, $N_l=60$ experts), the minimum probability among selected experts is well-bounded: $p_{\min}^{\text{eff}} \geq 1/(2k) \approx 0.1$. Combined with a small routing change bound $B$ (enforced by our training objective), this yields a small KL divergence, which translates to high sample complexity $N^*$ for adversarial detection. For instance, with $B = 0.02$ and $p_{\min}^{\text{eff}} = 0.1$:
\begin{equation}
\begin{split}
D_{\text{KL}} &\leq \frac{(0.02)^2}{0.1} = 0.004, \\
N^* &\geq \frac{2 \times 0.1 \times \ln(1/0.05)}{(0.02)^2} \approx 1.5 \times 10^4
\end{split}
\label{eq:numerical_example}
\end{equation}

This demonstrates that \tool\ achieves statistical imperceptibility: an adversary would need over 15,000 routing observations to detect the watermark with 95\% confidence, which is prohibitively expensive for practical black-box API access with limited query budgets.
\end{remark}

\subsection{Effectiveness Guarantee}

\begin{theorem}[Statistical Verification]
\label{thm:verification_full}
Let $n$ be the total number of routing decisions across all tokens and watermarked layers for a triggered input. Define the routing accuracy as:
\begin{equation}
\text{Acc} = \frac{1}{n}\sum_{i=1}^n \mathbb{1}\left[\arg\max_{e} \mathbf{g}_l^{(i)}[e] \in \mathcal{E}_l^*\right]
\end{equation}
where $\mathbf{g}_l^{(i)}$ is the routing distribution for the $i$-th decision.

Under the null hypothesis $H_0$ that routing is independent of the trigger and follows a uniform distribution over experts, the probability of observing $\text{Acc} \geq \gamma$ is:
\begin{equation}
\Pr[\text{Acc} \geq \gamma \mid H_0] \leq \exp\left(-2n \left(\gamma - \frac{k_l}{N_l}\right)^2\right)
\end{equation}
\end{theorem}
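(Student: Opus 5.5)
The natural route is Hoeffding's inequality applied to the indicator variables that make up $\text{Acc}$. For each of the $n$ routing decisions, define $X_i = \mathbb{1}[\arg\max_e \mathbf{g}_l^{(i)}[e] \in \mathcal{E}_l^*]$, so that $\text{Acc} = \frac{1}{n}\sum_{i=1}^n X_i$ and each $X_i$ takes values in $[0,1]$.

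\textbf{Step 1: the mean under the null.} Under $H_0$, the top-1 expert of decision $i$ is uniform over the $N_l$ experts of its layer and independent of the trigger. Since $|\mathcal{E}_l^*| = k_l$, this gives $\mathbb{E}[X_i] = k_l/N_l$. If layers differ in size, set $\mu = \frac{1}{n}\sum_i k_{l(i)}/N_{l(i)}$ and read the statement with the common ratio $k_l/N_l$, which is the setting the paper uses ($N_l = 60$, $k_l = 2$ in every watermarked layer).

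\textbf{Step 2: independence.} Hoeffding needs the $X_i$ to be independent, but decisions for different tokens and layers of one triggered input are computed from shared hidden states. I would therefore make independence part of the formal reading of $H_0$: the routing decisions are i.i.d.\ uniform draws, unrelated to the trigger. If one wants to avoid that, a martingale version works: the weaker assumption that each decision is conditionally uniform given all earlier decisions lets Azuma--Hoeffding give the same bound. This modelling step is the main obstacle. The inequality itself is routine once the null is formalized, so the real work is stating the null precisely enough that independence, or the martingale structure, holds.

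\textbf{Step 3: apply Hoeffding.} For i.i.d.\ $[0,1]$-valued variables with mean $\mu$ and any $t \ge 0$, Hoeffding gives $\Pr[\bar X - \mu \ge t] \le \exp(-2nt^2)$. Taking $t = \gamma - k_l/N_l$, which is nonnegative whenever $\gamma \ge k_l/N_l$ (true for the operating point $\gamma = 0.8$), the event $\text{Acc} \ge \gamma$ is exactly $\bar X - \mu \ge t$. So
\[
\Pr[\text{Acc} \ge \gamma \mid H_0] \le \exp\left(-2n\left(\gamma - \frac{k_l}{N_l}\right)^2\right).
\]
For $\gamma < k_l/N_l$ the bound is at least $1$ and holds trivially; I would note that case explicitly.

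\textbf{Step 4: relation to the paper's two definitions.} The main-text protocol uses $\mathrm{TopK}\cap\mathcal{E}_l^* \ne \emptyset$, while the theorem uses argmax. Under a uniform null with $k$ selected experts, the top-$k$ version has success probability $1-\binom{N_l-k_l}{k}/\binom{N_l}{k}$ instead of $k_l/N_l$. The identical Hoeffding argument then gives the same form of bound with that mean in place of $k_l/N_l$.
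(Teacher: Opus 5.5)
Your core argument matches the paper's proof. You define the indicators $X_i$, note $\mathbb{E}[X_i]=k_l/N_l$ under the uniform null, and apply Hoeffding with $t=\gamma-k_l/N_l$. The paper simply assumes the decisions are i.i.d.\ under $H_0$ without comment. Your Step 2 is therefore a clarification rather than a different route; the martingale variant gives the same constant if you use the form of Azuma--Hoeffding for increments $X_i-k_l/N_l$ confined to an interval of length $1$. Step 4 is also a legitimate refinement: the main-text TopK-based accuracy has a different null mean, so the appendix's argmax definition is the one that fits $k_l/N_l$.

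There is, however, a concrete error in Step 3. For $\gamma<k_l/N_l$ the right-hand side $\exp\bigl(-2n(\gamma-k_l/N_l)^2\bigr)$ is strictly smaller than $1$, because the deviation enters squared. So the bound does not hold trivially in that regime; it is false there. At $\gamma=0$ the left-hand side equals $1$. For any fixed $\gamma<k_l/N_l$, the left-hand side tends to $1$ as $n\to\infty$ by the law of large numbers, while the right-hand side tends to $0$.

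The fix is to state $\gamma\ge k_l/N_l$ as an explicit hypothesis of the theorem. The paper's proof also needs this and leaves it implicit. It is satisfied at the operating point $\gamma=0.8$.
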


\begin{proof}
Under $H_0$, for each routing decision, the top-1 expert is selected uniformly at random from $N_l$ experts. The probability that this expert belongs to the target set $\mathcal{E}_l^*$ (with $|\mathcal{E}_l^*| = k_l$) is $k_l/N_l$.

Let $X_i \in \{0, 1\}$ indicate whether the $i$-th routing decision's top-1 expert is in $\mathcal{E}_l^*$. Then:
\begin{equation}
\text{Acc} = \frac{1}{n}\sum_{i=1}^n X_i, \quad \mathbb{E}[\text{Acc} \mid H_0] = \frac{k_l}{N_l}
\end{equation}

By Hoeffding's inequality for i.i.d. bounded random variables $X_i \in [0,1]$:
\begin{equation}
\Pr\left[\text{Acc} - \mathbb{E}[\text{Acc}] \geq \gamma - \frac{k_l}{N_l}\right] \leq \exp\left(-2n\left(\gamma - \frac{k_l}{N_l}\right)^2\right)
\end{equation}
\end{proof}

\begin{remark}
The uniform distribution assumption for $H_0$ provides a conservative baseline. In practice, unwatermarked models exhibit non-uniform expert selection patterns, making the natural false positive rate even lower than this theoretical bound.
\end{remark}

\begin{corollary}[Verification p-value]
\label{cor:pvalue}
For our configuration with $N_l = 60$ experts, $k_l = 2$ target experts, threshold $\gamma = 0.8$, and $n = 100$ routing decisions, the p-value under the conservative uniform-$H_0$ is:
\begin{equation}
p\text{-value} \leq \exp\left(-2 \times 100 \times (0.8 - 2/60)^2\right) \approx \frac{1}{e^{118}} < 10^{-51}
\end{equation}
This provides statistical evidence for ownership verification. Note that this bound is conservative; the actual false positive rate under a realistic unwatermarked model baseline would be even lower, as natural routing patterns are non-uniform and target experts are selected to avoid frequent activation on clean inputs.
\end{corollary}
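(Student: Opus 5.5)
This corollary is a direct numerical instantiation of Theorem~\ref{thm:verification_full}, so I will not re-derive any concentration result. I will check that the hypotheses of the theorem hold in the stated configuration and then evaluate the exponent. The hypotheses are: $n = 100$ routing decisions, each indicator $X_i \in \{0,1\}$ bounded, and the conservative uniform null $H_0$. Under $H_0$ the mean routing accuracy is $k_l/N_l = 2/60 = 1/30$. The threshold $\gamma = 0.8$ exceeds this mean, so the deviation $\gamma - k_l/N_l$ is positive and the one-sided Hoeffding bound in Theorem~\ref{thm:verification_full} applies directly to $\Pr[\text{Acc} \geq \gamma \mid H_0]$. That probability is exactly the $p$-value of the test that rejects $H_0$ when $\text{Acc} \geq \gamma$.

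The computation then goes in three steps:
\begin{enumerate}
\item The deviation is $\gamma - k_l/N_l = 0.8 - 1/30 = 23/30 \approx 0.7667$.
\item Its square is $529/900 \approx 0.5878$.
\item Multiplying by $2n = 200$ gives an exponent of $105800/900 \approx 117.56$, so the bound is $\exp(-117.56)$, which rounds to the stated $e^{-118}$.
\end{enumerate}
To convert to base $10$, I divide the exponent by $\ln 10 \approx 2.3026$, which gives about $51.05$. Hence $p\text{-value} \leq 10^{-51.05} < 10^{-51}$.

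The only point needing care is rounding. The exponent is $117.56$, not exactly $118$, so I will state the inequality chain with the exact value $105800/900$. The conclusion $<10^{-51}$ then follows strictly, with a margin of about $0.05$ in $\log_{10}$, rather than relying on the rounded $e^{-118}$.

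The closing remark about conservativeness I would justify in one sentence by appeal to the remark following Theorem~\ref{thm:verification_full}. If natural routing places less than $k_l/N_l$ mass on the target experts, the Hoeffding deviation only grows and the bound only tightens. I regard this as interpretive rather than part of the formal claim.
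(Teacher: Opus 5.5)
Your proposal is correct and matches the paper's route: the corollary is just Theorem~\ref{thm:verification_full} evaluated at $n=100$, $\gamma=0.8$, $k_l/N_l=1/30$, which gives exponent $200\cdot(23/30)^2=105800/900\approx 117.56$ and hence a bound of about $10^{-51.05}<10^{-51}$. Carrying the exact exponent instead of the rounded $e^{-118}$ is a small improvement: since $e^{-117.56}$ is actually larger than $e^{-118}$, the paper's ``$\approx$'' cannot be read as an inequality at that intermediate step.
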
